\newtheorem{theorem}{Theorem}
\newtheorem{lemma}{Lemma}
\newtheorem{remark}{Remark}
\newtheorem{definition}{Definition}
\newtheorem{corollary}{Corollary}
\newtheorem{proposition}{Proposition}
\newtheorem{example}{Example}
\begin{document}

\title{Multiterminal Source Coding \\ under Logarithmic Loss
}

%
%
%

\author{Thomas A.\ Courtade\thanks{Thomas Courtade is with the Department of Electrical Engineering, 
University of California, Los Angeles. Email: \texttt{tacourta@ee.ucla.edu}.}
~and~Tsachy~Weissman\thanks{Tsachy Weissman is with the Department of Electrical Engineering, 
Stanford University.
Email: \texttt{tsachy@stanford.edu}.}
}

\date{\today}

\maketitle

\begin{abstract}
We consider the classical two-encoder multiterminal source coding problem where distortion is measured under logarithmic loss.  We provide a single-letter description of the achievable rate distortion region for arbitrarily correlated sources with finite alphabets. In doing so, we also give the rate distortion region for the $m$-encoder CEO problem (also under logarithmic loss).  Several applications and examples are given.
\end{abstract}


\section{Introduction}

A complete characterization of the achievable rate distortion region for the two-encoder source coding problem depicted in Figure \ref{fig:MTSC} has remained an open problem for over three decades.  Following tradition, we will refer to this two-encoder source coding network as the \emph{multiterminal source coding problem} throughout this paper.  Several special cases have been solved for general source alphabets and distortion measures:
\begin{itemize}
\item The lossless case where $D_1=0,D_2=0$.  Slepian and Wolf solved this case in their seminal work\cite{bib:SlepianWolf1973}.
\item The case where one source is recovered losslessly: i.e., $D_1=0,D_2=D_{\mbox{max}}$.  This case corresponds to the source coding with side information problem of Ahlswede-K\"{o}rner-Wyner \cite{bib:AhlswedeKorner1975},\cite{bib:Wyner1975}.
\item The Wyner-Ziv case \cite{bib:WynerZiv1976} where $Y_2$ is available to the decoder as side information and $Y_1$ should be recovered with distortion at most $D_1$.
\item The Berger-Yeung case (which subsumes the previous three cases) \cite{bib:BergerYeung1989} where $D_1$ is arbitrary and $D_2=0$.
\end{itemize}

\begin{figure}
\begin{center}
\def\svgwidth{4in}
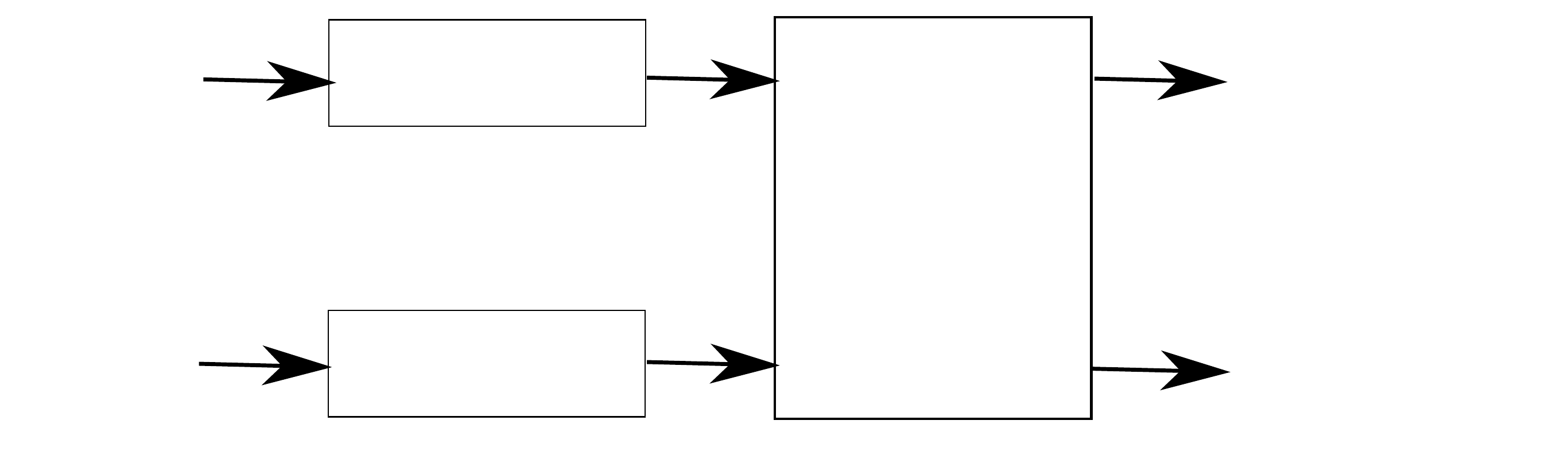
\caption{The multiterminal source coding network.}\label{fig:MTSC}
\end{center}
\end{figure}

Despite the apparent progress, other seemingly fundamental cases, such as when $D_1$ is arbitrary and $D_2=D_{\mbox{max}}$, remain unsolved except perhaps in very special cases.

Recently, the achievable rate distortion region for the quadratic Gaussian
multiterminal source coding problem was given by Wagner, Tavildar, and \linebreak Viswanath \cite{bib:Wagner2008}.    Until now, this was the only case for which the entire achievable rate distortion region was known.  While this is a very important result, it is again a special case from a theoretical point of view: a specific choice of source distribution, and a specific choice of distortion measure.

In the present paper, we determine the achievable rate distortion region of the multiterminal source coding problem for arbitrarily correlated sources with finite alphabets.  However, as in \cite{bib:Wagner2008}, we restrict our attention to a specific distortion measure.

At a high level, the roadmap for our argument is similar to that of \cite{bib:Wagner2008}.  In particular, both arguments couple the multiterminal source coding problem to a parametrized family of CEO problems.  Then, the parameter in the CEO problem is ``tuned" to yield the converse result.  Despite this apparent similarity, the proofs in \cite{bib:Wagner2008} rely heavily on the previously known Gaussian CEO results \cite{bib:PrabhakaranTseRamachandran2004}, the Gaussian one-helper results \cite{bib:Oohama1997}, and the calculus performed on the closed-form entropy expressions which arise from the Gaussian source assumption.  In our case we do not have this luxury, and our CEO tuning argument essentially relies on an existence lemma to yield the converse result.  The success of our approach is largely due to the fact that the distortion measure we consider admits a lower bound in the form of a conditional entropy, much like the quadratic distortion measure for Gaussian sources.

\subsection{Our Contributions}
In this paper, we give a single-letter characterization of the achievable rate distortion region  for the multiterminal source coding problem under logarithmic loss.  In the process of accomplishing this, we derive the achievable rate distortion region for the $m$-encoder CEO problem, also under logarithmic loss.  In both settings, we obtain a stronger converse than is standard for rate distortion problems in the sense that augmenting the reproduction alphabet does not enlarge the rate distortion region.  Notably, we make no assumptions on the source distributions, other than that the sources have finite alphabets. In both cases, the Berger-Tung inner bound on the rate distortion region is tight.  To our knowledge, this constitutes the first time that the entire achievable rate distortion region has been described for general finite-alphabet sources under nontrivial distortion constraints.  

\subsection{Organization} 
This paper is organized as follows.  In Section \ref{sec:MTSCprobDef} we formally define the logarithmic loss function and the multiterminal source coding problem we consider. In Section \ref{sec:ceoProblem} we define the CEO problem and give the rate distortion region under logarithmic loss.  In Section \ref{sec:MTSC} we return to the multiterminal source coding problem and derive the rate distortion region for the two-encoder setting.  Also in Sections \ref{sec:ceoProblem} and \ref{sec:MTSC}, applications to estimation, horse racing, and list decoding are given. In Section \ref{sec:generalMTSC}, we discuss connections between our results and the multiterminal source coding problem with arbitrary distortion measures.  Section \ref{sec:Conc} delivers our concluding remarks and discusses directions for future work.

\section{Problem Definition}\label{sec:MTSCprobDef}

Throughout this paper, we adopt notational conventions that are standard in the literature.  Specifically, random variables are denoted by capital letters (e.g., $X$) and their corresponding alphabets are denoted by corresponding calligraphic letters (e.g., $\mathcal{X}$).  We abbreviate a sequence $(X_1,X_2,\dots,X_n)$ of $n$ random variables by $X^n$, and we denote the interval $(X_k,X_{k+1},\dots,X_j)$ by $X_k^j$.  If the lower index is equal to $1$, it will be omitted when there is no ambiguity  (e.g., $X^j \triangleq X_1^j$).  Frequently, random variables will appear with two subscripts (e.g., $Y_{i,j}$).  In this case, we are referring to the $j^{th}$ instance of random variable $Y_i$.  We overload our notation here slightly in that $Y_{i,1}^j$ is often abbreviated as $Y_i^j$.  However, our meaning will always be clear from context.

Let $\left\{(Y_{1,j},Y_{2,j})\right\}_{j=1}^n = (Y_{1}^n,Y_{2}^n)$ be a sequence of $n$ independent, identically distributed random variables with finite alphabets $\mathcal{Y}_1$ and $\mathcal{Y}_2$ respectively and joint pmf $p(y_1,y_2)$.  That is, $(Y_{1}^n,Y_{2}^n)\sim \prod_{i=1}^n
p(y_{1,j},y_{2,j})$.

In this paper, we take the reproduction alphabet $\hat{\mathcal{Y}}_i$ to be equal to the set of probability distributions over the source alphabet $\mathcal{Y}_i$ for $i=1,2$.  Thus, for a vector $\hat{Y}_i^n \in \hat{\mathcal{Y}}_i^n$, we will use the notation $\hat{Y}_{i,j}(y_i)$ to mean the $j^{th}$ coordinate ($1\leq j \leq n$) of $\hat{Y}_i^n$ (which is a probability distribution on $\mathcal{Y}_i$) evaluated for the outcome $y_i\in\mathcal{Y}_i$.  In other words, the decoder generates `soft' estimates of the source sequences.

We will consider the logarithmic loss distortion measure defined as follows:
\begin{align*}
d({y}_i,\hat{y}_i)=\log\left(\frac{1}{\hat{y}_i(y_i)} \right)  = D(1_{y_i}(y) \| \hat{y}_i(y) )\mbox{~for $i=1,2$}.
\end{align*}
In particular, $d({y}_i,\hat{y}_i)$ is the relative entropy (i.e., Kullback-Leibler divergence) between the empirical distribution of the event $\{Y_i=y_i\}$ and the estimate $\hat{y}_i$.
Using this definition for symbol-wise distortion, it is standard to define the distortion between sequences as 
\begin{align*}
d({y}_i^n,\hat{y}_i^n)=\frac{1}{n}\sum_{j=1}^n d({y}_{i,j},\hat{y}_{i,j}) \mbox{~for $i=1,2$}.
\end{align*}

We point out that the logarithmic loss function is a widely used penalty function in the theory of learning and prediction (cf. \cite[Chapter 9]{bib:Cesa-Bianchi:2006}).  Further, it is a particularly natural loss criterion in settings where the reconstructions are allowed to be `soft', rather than deterministic values.  Surprisingly, since distributed learning and estimation problems are some of the most oft-cited applications of lossy multiterminal source coding, it does not appear to have been studied in this context until the recent work \cite{bib:CourtadeISIT2011}. However, we note that this connection has been established previously for the single-encoder case in the study of the information bottleneck problem \cite{bib:Harremoes2007}.  Beyond learning and prediction, a similar distortion measure has appeared before in the image processing literature \cite{bib:Andre2006}. As we demonstrate through several examples, the logarithmic loss distortion measure has a variety of useful applications in the context of multiterminal source coding.

A rate distortion code (of blocklength $n$) consists of encoding functions:
\begin{align*}
g_i^{(n)}: \mathcal{Y}_i^n \rightarrow \left\{ 1,\dots,M_i^{(n)}\right\} \mbox{~for $i=1,2$}
\end{align*}
and decoding functions
\begin{align*}
\psi_i^{(n)}: \left\{ 1,\dots,M_1^{(n)} \right\} \times \left\{ 1,\dots,M_2^{(n)} \right\} \rightarrow \hat{\mathcal{Y}}_i^n \mbox{~for $i=1,2$}.
\end{align*}

A rate distortion vector $(R_1,R_2,D_1,D_2)$ is strict-sense achievable if there exists a blocklength $n$, encoding functions $g_1^{(n)},g_2^{(n)}$ and a decoder $(\psi_1^{(n)},
\psi_2^{(n)})$ such that
\begin{align*}
R_i &\geq \frac{1}{n}\log M_i^{(n)} \mbox{~for $i=1,2$}\\
D_i &\geq  \mathbb{E} d({Y}_{i}^n,\hat{Y}_i^n) \mbox{~for $i=1,2$}.
\end{align*}
Where 
\begin{align*}
\hat{Y}_i^n=\psi_i^{(n)}(g_1^{(n)}(Y_1^n),g_2^{(n)}(Y_2^n)) \mbox{~for $i=1,2$}.
\end{align*}

\begin{definition}
Let $\mathcal{RD}^{\star}$ denote the set of strict-sense achievable rate distortion vectors and define the set of achievable rate distortion vectors to be its closure, $\overline{\mathcal{RD}}^{\star}$.
\end{definition}

Our ultimate goal in the present paper is to give a single-letter characterization of the region $\overline{\mathcal{RD}}^{\star}$.  However, in order to do this, we first consider an associated CEO problem.  In this sense, the roadmap for our argument is similar to that of \cite{bib:Wagner2008}.  Specifically, both arguments couple the multiterminal source coding problem to a parametrized family of CEO problems.  Then, the parameter in the CEO problem is ``tuned" to yield the converse result.  Despite this apparent similarity, the proofs are quite different since the results in \cite{bib:Wagner2008} depend heavily on the peculiarities of the Gaussian distribution.

\section{The CEO problem} \label{sec:ceoProblem}

In order to attack the general multiterminal problem, we begin by studying the CEO problem (See \cite{bib:BergerZhangViswanathan1996} for an introduction.).  To this end, let $\left\{ (X_j,Y_{1,j},Y_{2,j})\right\}_{j=1}^n=(X^n,Y_1^n,Y_2^n)$ be a sequence of $n$ independent, identically distributed random variables distributed according to the joint pmf $p(x,y_1,y_2)=p(x)p(y_1|x)p(y_2|x)$.  That is, $Y_1\leftrightarrow X \leftrightarrow Y_2$ form a Markov chain, in that order.

In this section, we consider the reproduction alphabet $\hat{\mathcal{X}}$ to be equal to the set of probability distributions over the source alphabet $\mathcal{X}$.  As before, for a vector $\hat{X}^n \in \hat{\mathcal{X}}^n$, we will use the notation $\hat{X}_j(x)$ to mean the $j^{th}$ coordinate of $\hat{X}^n$ (which is a probability distribution on $\mathcal{X}$) evaluated for the outcome $x\in\mathcal{X}$.  As in the rest of this paper, $d(\cdot,\cdot)$ is the logarithmic loss distortion measure.

A rate distortion CEO code (of blocklength $n$) consists of encoding functions:
\begin{align*}
g_i^{(n)}: \mathcal{Y}_i^n \rightarrow \left\{ 1,\dots,M_i^{(n)}\right\} \mbox{~for $i=1,2$}
\end{align*}
and a decoding function
\begin{align*}
\psi^{(n)}: \left\{ 1,\dots,M_1^{(n)} \right\} \times \left\{ 1,\dots,M_2^{(n)} \right\} \rightarrow \hat{\mathcal{X}}^n.
\end{align*}

A rate distortion vector $(R_1,R_2,D)$ is strict-sense achievable for the CEO problem if there exists a blocklength $n$, encoding functions $g_1^{(n)},g_2^{(n)}$ and a decoder $\psi^{(n)}$ such that
\begin{align*}
R_i &\geq \frac{1}{n}\log M_i^{(n)} \mbox{~for $i=1,2$}\\
D &\geq  \mathbb{E} d({X}^n,\hat{X}^n).
\end{align*}
Where 
\begin{align*}
\hat{X}^n=\psi^{(n)}(g_1^{(n)}(Y_1^n),g_2^{(n)}(Y_2^n)).
\end{align*}

\begin{definition}
Let $\mathcal{RD}^{\star}_{CEO}$ denote the set of strict-sense achievable rate distortion vectors and define the set of achievable rate distortion vectors to be its closure, $\overline{\mathcal{RD}}^{\star}_{CEO}$.
\end{definition}

\subsection{Inner Bound} \label{subsec:ceoInnerBound}

\begin{definition}
Let $(R_1,R_2,D)\in \mathcal{RD}_{CEO}^i$ if and only if there exists a joint distribution of the form 
\begin{align*}
p(x,y_1,y_2)p(u_1|y_1,q)p(u_2|y_2,q)p(q)
\end{align*}
where $|\mathcal{U}_1| \leq |\mathcal{Y}_1|$, $|\mathcal{U}_2| \leq |\mathcal{Y}_2|$, and $|\mathcal{Q}|\leq 4$, which satisfies
\begin{align*}
R_1 &\geq I(Y_1;U_1|U_2,Q)\\
R_2 &\geq I(Y_2;U_2|U_1,Q)\\
R_1 +R_2 &\geq I(U_1,U_2;Y_1,Y_2|Q)\\
D &\geq H(X|U_1,U_2,Q).
\end{align*}
\end{definition}

\begin{theorem} \label{thm:CEOAchregion}
$\mathcal{RD}_{CEO}^i \subseteq \overline{\mathcal{RD}}^{\star}_{CEO}$.  That is, all rate distortion vectors $(R_1,R_2,D)\in \mathcal{RD}^i_{CEO}$ are achievable.
\end{theorem}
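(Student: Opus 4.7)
The plan is to apply the standard Berger-Tung distributed source coding scheme (with a time-sharing variable $Q$) to the source pair $(Y_1^n,Y_2^n)$ using the auxiliaries $U_1,U_2$ from the definition of $\mathcal{RD}^i_{CEO}$, and then to exploit the distinguishing feature of logarithmic loss: for \emph{any} side information $V$, the minimum expected log-loss reconstruction error for a source $X$ equals the conditional entropy $H(X\mid V)$, and is achieved by taking the soft reconstruction to be the posterior $p(x\mid v)$.

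Concretely, I would fix a joint distribution $p(x,y_1,y_2)p(q)p(u_1|y_1,q)p(u_2|y_2,q)$ satisfying the hypothesis, assume a time-sharing sequence $Q^n$ drawn i.i.d.\ from $p(q)$ shared by all parties, and run Berger-Tung conditioned on $Q^n$. Each encoder $i$ generates a codebook of size $2^{nR_i'}$ drawn i.i.d.\ from $\prod_j p(u_{i,j}|q_j)$, finds one $U_i^n$ jointly typical with $(Y_i^n,Q^n)$, and transmits the bin index of rate $R_i$. The standard analysis shows that whenever $R_1\geq I(Y_1;U_1|U_2,Q)$, $R_2\geq I(Y_2;U_2|U_1,Q)$, and $R_1+R_2\geq I(Y_1,Y_2;U_1,U_2|Q)$, the decoder recovers $(U_1^n,U_2^n)$ jointly strongly typical with $(Q^n,Y_1^n,Y_2^n)$ with probability $1-\epsilon_n$ for some $\epsilon_n\to 0$.

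The reconstruction is defined symbol-by-symbol as the posterior, $\hat{X}_j(x)=p(x\mid U_{1,j},U_{2,j},Q_j)$, computed under the fixed single-letter distribution (and as the uniform distribution on $\mathcal{X}$ if decoding fails). On the good event, the block-averaged expected distortion $\frac{1}{n}\sum_{j=1}^n \mathbb{E}\bigl[-\log p(X_j\mid U_{1,j},U_{2,j},Q_j)\bigr]$ is, by joint AEP arguments on the typical set, within $o(1)$ of the single-letter quantity $\mathbb{E}[-\log p(X\mid U_1,U_2,Q)]=H(X\mid U_1,U_2,Q)$. On the error event of probability $\epsilon_n$, the per-letter distortion is bounded by $\log|\mathcal{X}|$. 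Thus $\mathbb{E}\,d(X^n,\hat{X}^n)\leq H(X\mid U_1,U_2,Q)+\epsilon_n\log|\mathcal{X}|$, meeting the target $D$ in the limit.

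The main obstacle is the bookkeeping to pass from the empirical single-letter behavior on the typical set to the block-averaged expectation of $d(X^n,\hat{X}^n)$; the key identity $\mathbb{E}[-\log p(X\mid U_1,U_2,Q)]=H(X\mid U_1,U_2,Q)$, which holds precisely because the reconstruction is the \emph{posterior}, makes this routine rather than delicate and is exactly the property of logarithmic loss that is exploited throughout the paper. The cardinality bounds $|\mathcal{U}_i|\leq|\mathcal{Y}_i|$ and $|\mathcal{Q}|\leq 4$ are then obtained by the standard support-lemma / Carath\'eodory argument: time-sharing over $Q$ convexifies the four constraint functionals $(R_1,R_2,R_1+R_2,D)$, while the support lemma applied per letter to each conditional $p(u_i|y_i,q)$ bounds the alphabet of $U_i$.
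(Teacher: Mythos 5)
Your proof is correct and takes the same approach as the paper's: run Berger-Tung quantize-and-bin, choose the posterior $p(x\mid u_1,u_2,q)$ as the soft reproduction, and observe that the expected log-loss then equals $H(X\mid U_1,U_2,Q)$. The paper simply invokes the Berger-Tung inner bound (Proposition~\ref{prop:BTCEO}) as a black box and specializes the reproduction function $f$ to be this posterior, whereas you re-derive the scheme and the AEP bookkeeping; this is a presentational rather than substantive difference (your closing cardinality remark is also unnecessary here, since the bounds are already built into the definition of $\mathcal{RD}^i_{CEO}$ and only shrink the set to be shown achievable).
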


Before proceeding with the proof, we cite the following variant of a well-known inner bound:
\begin{proposition}[Berger-Tung Inner Bound \cite{bib:BergerLongo1977, bib:Tung1978}]\label{prop:BTCEO}
The rate distortion vector $(R_1,R_2,D)$ is achievable if  
\begin{align*}
R_1 &\geq I(U_1;Y_1|U_2,Q) \\
R_2 &\geq I(U_2;Y_2|U_1,Q) \\
R_1+R_2 &\geq I(U_1,U_2;Y_1,Y_2|Q) \\
D &\geq \mathbb{E}\left[d(X,f(U_1,U_2,Q) \right]
\end{align*}
for a joint distribution 
\begin{align*}
p(x)p(y_1|x)p(y_2|x)p(u_1|y_1,q)p(u_2|y_2,q)p(q)
\end{align*}
and reproduction function
\begin{align*}
&f:\mathcal{U}_1\times \mathcal{U}_2\times \mathcal{Q}\rightarrow \hat{\mathcal{X}}.
\end{align*}
\end{proposition}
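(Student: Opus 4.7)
The plan is a standard Berger--Tung style random coding argument: joint typicality encoding with a superposition codebook on the auxiliary variables $U_1,U_2$, followed by independent random binning, and joint typicality decoding at the receiver. Fix a joint distribution of the stated product form and a reproduction function $f$. First I would reduce to the case $|\mathcal{Q}|=1$ by drawing a shared time-sharing sequence $q^n\sim\prod_j p(q_j)$ revealed to both encoders and the decoder and conditioning all subsequent arguments on $q^n$; this turns every entropic quantity below into its conditional-on-$Q$ counterpart without changing the proof.

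\textbf{Codebook, encoding, decoding.} For each $i\in\{1,2\}$, generate $2^{n\tilde R_i}$ independent codewords $u_i^n(m_i)$, $m_i\in[1{:}2^{n\tilde R_i}]$, each drawn conditionally i.i.d.\ from $\prod_j p(u_{i,j}\mid q_j)$, and partition the indices uniformly at random into $2^{nR_i}$ bins $\mathcal{B}_i(b_i)$. Encoder $i$, on observing $y_i^n$, finds an $m_i$ with $(q^n,y_i^n,u_i^n(m_i))$ jointly typical and transmits its bin index $b_i$. By the covering lemma the encoder succeeds with high probability provided $\tilde R_i > I(U_i;Y_i\mid Q)+\delta$. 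The decoder, given $(b_1,b_2)$, searches for the unique pair $(\hat m_1,\hat m_2)\in\mathcal{B}_1(b_1)\times\mathcal{B}_2(b_2)$ with $(q^n,u_1^n(\hat m_1),u_2^n(\hat m_2))$ jointly typical, and outputs $\hat X_j = f(u_{1,j}(\hat m_1),u_{2,j}(\hat m_2),q_j)$.

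\textbf{Rate analysis and Fourier--Motzkin.} The three relevant error events are ``only $\hat m_1$ wrong'', ``only $\hat m_2$ wrong'', and ``both wrong''. Because the two codebooks are generated independently and a codeword in the wrong slot is independent of the correct codeword in the other slot, standard packing bounds give spurious joint typicality probability $\doteq 2^{-nI(U_1;U_2\mid Q)}$ per pair, yielding the constraints $\tilde R_1-R_1<I(U_1;U_2\mid Q)$, $\tilde R_2-R_2<I(U_1;U_2\mid Q)$, and $(\tilde R_1-R_1)+(\tilde R_2-R_2)<I(U_1;U_2\mid Q)$. Substituting the covering lower bounds $\tilde R_i>I(U_i;Y_i\mid Q)$ and eliminating $\tilde R_1,\tilde R_2$ by Fourier--Motzkin produces the three Berger--Tung inequalities, using the identity
\[
I(U_1;Y_1\mid Q)+I(U_2;Y_2\mid Q)-I(U_1;U_2\mid Q)=I(U_1,U_2;Y_1,Y_2\mid Q),
\]
which follows from the conditional independence $U_1\perp\!\!\!\perp (Y_2,U_2)\mid (Y_1,Q)$ and its symmetric counterpart, both imposed by the product kernel $p(u_1\mid y_1,q)p(u_2\mid y_2,q)$.

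\textbf{Distortion and main obstacle.} On the success event the decoder recovers $(m_1,m_2)$ and $(q^n,y_1^n,y_2^n,u_1^n,u_2^n)$ is jointly typical; invoking the Markov lemma for the long chain $(U_1,U_2)-(Y_1,Y_2)-X$ (which holds because $Y_1\leftrightarrow X\leftrightarrow Y_2$ and the auxiliaries are each produced from only one $Y_i$) extends joint typicality to include $x^n$, after which the typical average lemma bounds the per-letter distortion by $\mathbb{E}[d(X,f(U_1,U_2,Q))]+\epsilon$. The main obstacle is the simultaneous-error combinatorics that distinguishes Berger--Tung from one-sided Wyner--Ziv, since both codeword indices can be wrong together; a secondary subtlety specific to logarithmic loss is that the distortion is not uniformly bounded unless the reproduction function $f$ is required to output strictly positive distributions (or one truncates $f$ away from the boundary of the simplex), a technicality I would absorb by restricting $f$ to have $\min_{x,u_1,u_2,q}f(u_1,u_2,q)(x)>0$, which suffices to make the expected distortion on the negligible error event negligible as well.
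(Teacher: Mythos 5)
Your proposal is correct and is precisely the standard ``quantize-and-bin'' (Berger--Tung) achievability argument that the paper itself describes informally and then omits as a standard exercise, citing the treatment in El Gamal and Kim; your covering/packing constraints, the Fourier--Motzkin elimination via the Markov identities induced by the product kernel, and the Markov-lemma step to bring $X^n$ into the typical set are all the right ingredients. Your handling of the unboundedness of logarithmic loss (restricting $f$ away from the boundary of the simplex, with a perturbation/closure argument covering general $f$) is a legitimate technicality to flag and resolves correctly, so no gap remains.
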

The proof of this proposition is a standard exercise in information theory, and is therefore omitted.  The interested reader is directed to the text  
\cite{bib:ElGamalYHKim2012} for a modern, detailed treatment.  The proposition follows from what is commonly called the Berger-Tung achievability scheme.  In this encoding scheme, each encoder quantizes its observation $Y_i^n$ to a codeword $U_i^n$, such that the empirical distribution of   the entries in $(Y_i^n,U_i^n)$ is very close to the true distribution $p(y_i,u_i)$.  In order to communicate their respective quantizations to the decoder, the encoders essentially perform Slepian-Wolf coding.  For this reason, the Berger-Tung achievability scheme is also referred to as a ``quantize-and-bin" coding scheme.

\begin{proof}[Proof of Theorem \ref{thm:CEOAchregion}]
Given Proposition \ref{prop:BTCEO}, the proof of Theorem \ref{thm:CEOAchregion} is immediate.  Indeed, if we apply Proposition \ref{prop:BTCEO} with the reproduction function $
f(U_1,U_2,Q)\triangleq\Pr\left[X=x| U_1,U_2,Q\right]$,  we note that 
\begin{align*}
\mathbb{E}\left[d(X,f(U_1,U_2,Q) \right]=H(X|U_1,U_2,Q),
\end{align*}
which yields the desired result.
\end{proof}

Thus, from the proof of Theorem \ref{thm:CEOAchregion}, we see that our inner bound  $\mathcal{RD}^{i}_{CEO}$ simply corresponds to a specialization of the general Berger-Tung inner bound to the case of logarithmic loss.

\subsection{A Matching Outer Bound} \label{subsec:ceoProblemOuterBound}
A particularly useful property of the logarithmic loss distortion measure is that the expected distortion is lower-bounded by a conditional entropy.  A similar property is enjoyed by Gaussian random variables under quadratic distortion.  In particular, if $G$ is Gaussian, and $\hat{G}$ is such that $\mathbb{E}(\hat{G}-G)^2\leq D$, then $\frac{1}{2}\log(2\pi e) D \geq h(G|\hat{G})$.  The case for  logarithmic loss  is similar, and we state it formally in the following lemma which is crucial in the proof of the converse.

\begin{lemma}\label{lem:minDistortion}
Let $Z=(g_1^{(n)}(Y_1^n),g_2^{(n)}(Y_2^n))$ be the argument of the reproduction function $\psi^{(n)}$.  Then $n\mathbb{E} d({X}^n,\hat{X}^n)\geq H(X^n|Z)$.
\end{lemma}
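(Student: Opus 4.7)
My plan is to expand the expected distortion symbol-by-symbol, use the Gibbs-type inequality that $-\mathbb{E}_p[\log q(X)] \ge H(p)$, and then collapse the sum of marginal conditional entropies into the joint conditional entropy via the chain rule.

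First I would write
\begin{align*}
n\mathbb{E}\,d(X^n,\hat X^n)=\sum_{j=1}^n \mathbb{E}\!\left[\log\frac{1}{\hat X_j(X_j)}\right],
\end{align*}
using the definition of the per-letter log-loss and the fact that distortion between sequences is the arithmetic average of per-letter distortions. The key observation is that since $\hat X^n = \psi^{(n)}(Z)$, each coordinate $\hat X_j$ is a deterministic function of $Z$; in particular, for every fixed realization $Z=z$, the symbol $\hat X_j$ is a fixed probability distribution on $\mathcal{X}$, call it $q_{j,z}(\cdot)$.

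Second, for each $j$ I would condition on $Z$ and apply the standard inequality $-\sum_x p(x)\log q(x)\ge -\sum_x p(x)\log p(x)$, which is just the nonnegativity of the KL divergence $D(p\,\|\,q)\ge 0$, applied to $p(\cdot)=\Pr[X_j=\cdot\mid Z=z]$ and $q(\cdot)=q_{j,z}(\cdot)=\hat X_j(\cdot)$. This gives
\begin{align*}
\mathbb{E}\!\left[\log\frac{1}{\hat X_j(X_j)}\,\Big|\,Z=z\right]\ \ge\ H(X_j\mid Z=z),
\end{align*}
and averaging over $Z$ yields $\mathbb{E}[-\log \hat X_j(X_j)]\ge H(X_j\mid Z)$.

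Finally, summing over $j$ and using that conditioning reduces entropy,
\begin{align*}
\sum_{j=1}^n H(X_j\mid Z)\ \ge\ \sum_{j=1}^n H(X_j\mid Z, X^{j-1})\ =\ H(X^n\mid Z),
\end{align*}
by the chain rule for entropy. Chaining the three displays gives the claim.

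There is no real obstacle here; the only subtlety worth being careful about is notational — $\hat X_j$ is itself a distribution on $\mathcal X$, so one must avoid conflating the random variable $X_j$ with the dummy index of the distribution $\hat X_j(\cdot)$. It is also worth noting that the argument does not use the specific form of $Z$ (Markov structure of the sources, encoder functions, etc.); the only thing used is that $\hat X^n$ is a function of $Z$, which is why this inequality will serve as the converse workhorse in the sequel.
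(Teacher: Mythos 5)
Your proof is correct. The argument differs from the paper's in a mild but noteworthy way: the paper forms the product measure $s(x^n\mid z)=\prod_j \hat{x}_j(x_j)$ and applies nonnegativity of KL divergence \emph{once at the block level}, writing $\mathbb{E}[d\mid Z=z]=\tfrac1n D(p(\cdot\mid z)\,\|\,s(\cdot\mid z))+\tfrac1n H(X^n\mid Z=z)$; you instead apply Gibbs' inequality \emph{per coordinate} to get $\mathbb{E}[-\log\hat X_j(X_j)]\ge H(X_j\mid Z)$ and then finish with the chain rule and ``conditioning reduces entropy.'' The two are logically equivalent here — both ride on $D(p\|q)\ge 0$ — but the paper's block-level form pays off later: in Section~\ref{subsec:ceoStrongerConverse} the reproduction $\hat X^n$ is allowed to be an \emph{arbitrary} distribution on $\mathcal X^n$, and the paper's proof goes through verbatim by setting $s(x^n\mid z)=\hat x^n(x^n)$, whereas your per-coordinate decomposition of the distortion (writing $n\,d(x^n,\hat x^n)=\sum_j -\log\hat x_j(x_j)$) presupposes a product structure and would need to be replaced by the block argument in that setting. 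Your closing remark that the proof uses nothing about the Markov/encoder structure of $Z$ is exactly right and is the point that makes this lemma reusable across the CEO and multiterminal converses.
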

\begin{proof}
By definition of the reproduction alphabet, we can consider the reproduction $\hat{X}^n$ to be a probability distribution on $\mathcal{X}^n$ conditioned on the argument $Z$.  In particular, if $\hat{x}^n=\psi^{(n)}(z)$, define $s(x^n|z)\triangleq\prod_{j=1}^n \hat{x}_j(x_j)$.  It is readily verified that $s$ is a probability measure on $\mathcal{X}^n$.  Then, we obtain the following lower bound on the expected distortion conditioned on $Z=z$:
\begin{align*}
\mathbb{E}\left[ d({X}^n,\hat{X}^n) | Z=z \right] 
&=\frac{1}{n}\sum_{j=1}^n \sum_{x^n \in\mathcal{X}^n } p(x^n|z) \log\left(\frac{1}{\hat{x}_j(x_j)}\right)\\
&= \frac{1}{n} \sum_{x^n \in \mathcal{X}^n } p(x^n|z) \sum_{j=1}^n \log\left(\frac{1}{\hat{x}_j(x_j)}\right)\\
&= \frac{1}{n} \sum_{x^n \in \mathcal{X}^n } p(x^n|z) \log\left(\frac{1}{s(x^n|z)}\right)\\
&= \frac{1}{n} \sum_{x^n \in \mathcal{X}^n } p(x^n|z) \log\left(\frac{p(x^n|z)}{s(x^n|z)}\right) + \frac{1}{n}H(X^n|Z=z)\\
&= \frac{1}{n} D\left( p(x^n|z) \| s(x^n|z) \right) + \frac{1}{n}H(X^n|Z=z)\\
&\geq  \frac{1}{n}H(X^n|Z=z),
\end{align*}
where $p(x^n|z)=\Pr\left(X^n=x^n|Z=z \right)$ is the true conditional distribution.
Averaging both sides over all values of $Z$, we obtain the desired result.
\end{proof}

\begin{definition}
Let $(R_1,R_2,D) \in \mathcal{RD}_{CEO}^{o}$ if and only if there exists a joint distribution of the form 
\begin{align*}
p(x)p(y_1|x)p(y_2|x)p(u_1|y_1,q)p(u_2|y_2,q)p(q),
\end{align*}
which satisfies
\begin{align}
\left. \begin{array}{rl}
R_1 &\geq I(Y_1;U_1|X,Q)+H(X|U_2,Q)-D\\
R_2 &\geq I(Y_2;U_2|X,Q)+H(X|U_1,Q)-D\\
R_1 +R_2 &\geq I(U_1;Y_1|X,Q)+I(U_2;Y_2|X,Q)+H(X)-D\\
D &\geq H(X|U_1,U_2,Q).
\end{array} \right\} \label{eqn:RouterIneqs}
\end{align}
\end{definition}

\begin{theorem} \label{thm:CEOconv}
If $(R_1,R_2,D)$ is strict-sense achievable for the CEO problem, then $(R_1,R_2,D)\in\mathcal{RD}_{CEO}^{o}$.  
\end{theorem}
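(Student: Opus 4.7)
The plan is to run a Fano-style single-letter converse whose key input is Lemma~\ref{lem:minDistortion}. Writing $f_i := g_i^{(n)}(Y_i^n)$, that lemma gives $nD \geq H(X^n\mid f_1,f_2)$, and this is what will let me trade the distortion constraint into the right-hand sides of the three rate inequalities defining $\mathcal{RD}_{CEO}^{o}$. Throughout I will use the Markov chain $Y_1^n \leftrightarrow X^n \leftrightarrow Y_2^n$, which, since $f_i$ is a function of $Y_i^n$, implies that $f_1$ and $f_2$ are conditionally independent given $X^n$.

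For the individual rate, I would start from
\[
nR_1 \;\geq\; H(f_1)\;\geq\; H(f_1\mid f_2)\;=\;I(f_1;X^n\mid f_2)+I(f_1;Y_1^n\mid X^n,f_2).
\]
The first term equals $H(X^n\mid f_2)-H(X^n\mid f_1,f_2)\geq H(X^n\mid f_2)-nD$ by Lemma~\ref{lem:minDistortion}. The second collapses to $I(f_1;Y_1^n\mid X^n)$ since $f_2$ is a function of $Y_2^n$ and $Y_1^n \leftrightarrow X^n \leftrightarrow Y_2^n$. For the sum rate, the decomposition $H(f_1,f_2)= I(f_1,f_2;X^n)+H(f_1\mid X^n)+H(f_2\mid X^n)$ (again using conditional independence of $f_1$ and $f_2$ given $X^n$), combined with $I(f_1,f_2;X^n)\geq nH(X)-nD$, does the job. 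Symmetry handles $R_2$.

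To single-letterize, I would set $U_{i,j}:=(f_i,Y_i^{j-1})$, $Q_j := X^{[n]\setminus\{j\}}$, let $J$ be uniform on $\{1,\dots,n\}$ and independent of everything else, and put $Q := (J,Q_J)$, $U_i := U_{i,J}$, $X := X_J$, $Y_i := Y_{i,J}$. Because $Y_{i,j}$ is independent of $X^{[n]\setminus\{j\}}$ given $X_j$ by the iid assumption, each $I(f_i;Y_i^n\mid X^n)$ telescopes into $\sum_j I(U_{i,j};Y_{i,j}\mid X_j,Q_j)$. A chain-rule expansion together with the fact that conditioning reduces entropy yields
\[
H(X^n\mid f_\ell)\geq \sum_{j=1}^n H(X_j\mid U_{\ell,j},Q_j) \quad \text{and} \quad H(X^n\mid f_1,f_2)\geq \sum_{j=1}^n H(X_j\mid U_{1,j},U_{2,j},Q_j).
\]
Assembling these with the bounds of the previous paragraph and dividing by $n$ produces the four inequalities defining $\mathcal{RD}_{CEO}^{o}$.

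The step I expect to be the most delicate is verifying that the resulting joint law of $(X,Y_1,Y_2,U_1,U_2,Q)$ actually factors as $p(x)p(y_1|x)p(y_2|x)p(u_1|y_1,q)p(u_2|y_2,q)p(q)$. The iid product structure together with $Y_1\leftrightarrow X\leftrightarrow Y_2$ implies that, conditional on $Q=(j,X^{[n]\setminus\{j\}})$, the blocks $Y_{1,[n]\setminus\{j\}}$ and $Y_{2,[n]\setminus\{j\}}$ are mutually independent and jointly independent of $(X_j,Y_{1,j},Y_{2,j})$. From this one extracts (a) $Q$ is independent of $(X_J,Y_{1,J},Y_{2,J})$; (b) given $(Y_{1,J},Q)$, the variable $U_{1,J}$ is conditionally independent of $(X_J,Y_{2,J})$, and symmetrically for $U_{2,J}$; and (c) $U_{1,J}$ and $U_{2,J}$ are conditionally independent given $(Y_{1,J},Y_{2,J},Q)$. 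Bookkeeping these independences carefully is the main technical obstacle; the rest is chain-rule arithmetic combined with Lemma~\ref{lem:minDistortion}.
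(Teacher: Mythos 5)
Your proposal is correct and follows essentially the same route as the paper's proof: the same Lemma~\ref{lem:minDistortion}, the same choice of auxiliaries $U_{i,j}=(F_i,Y_i^{j-1})$, $Q_j=X^n\setminus X_j$, the same use of the Markov structure $F_1\leftrightarrow X^n\leftrightarrow F_2$ and of $Y_{i,j}\perp Y_i^{j-1}\mid X^n$, and a standard time-sharing step at the end. The only cosmetic difference is that the paper treats all three rate bounds at once via a subset $A\subseteq\{1,2\}$, whereas you derive $R_1$, $R_2$, and $R_1+R_2$ separately; the factorization checks you flag as delicate are precisely what the paper dispatches with its observation that $Q_j\perp(X_j,Y_{1,j},Y_{2,j})$ and $U_{1,j}\leftrightarrow Y_{1,j}\leftrightarrow X_j\leftrightarrow Y_{2,j}\leftrightarrow U_{2,j}$ given $Q_j$.
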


\begin{proof}
Suppose the point $(R_1,R_2,D)$ is strict-sense achievable.  Let $A$ be a nonempty subset of $\{1,2\}$, and let $F_i=g_i^{(n)}(Y_i^n)$ be the message sent by encoder $i\in\{1,2\}$. Define $U_{i,j} \triangleq (F_i,Y_i^{j-1})$ and $Q_j\triangleq (X^{j-1},X_{j+1}^n)=X^n \backslash X_j$.  To simplify notation, let $Y_A=\cup_{i\in A} Y_i$ (similarly for $U_A$ and $F_A$). 

With these notations established, we have the following string of inequalities:
\begin{align}
n\sum_{i\in A} R_i &\geq \sum_{i\in A} H(F_i)\notag\\
&\geq H(F_A)\notag\\
&\geq I(Y_A^n;F_A|F_{A^c})\notag \\
&= I(X^n,Y_A^n;F_A|F_{A^c}) \label{eqn:fFuncY}\\
&= I(X^n;F_A|F_{A^c})+\sum_{i\in A} I(F_i;Y_i^n|X^n) \label{eqn:f_x_f} \\
&= H(X^n | F_{A^c})-H(X^n | F_1,F_2 )+ \sum_{i\in A} \sum_{j=1}^n I(Y_{i,j};F_i|X^n,Y_i^{j-1}) \notag \\
&\geq H(X^n | F_{A^c})+ \sum_{i\in A} \sum_{j=1}^n I(Y_{i,j};F_i|X^n,Y_i^{j-1}) -n D \label{eqn:distBnd}\\
&= \sum_{j=1}^n H(X_j | F_{A^c},X^{j-1}) + \sum_{i\in A} \sum_{j=1}^n I(Y_{i,j};F_i|X^n,Y_i^{j-1}) -n D \notag \\
&= \sum_{j=1}^n H(X_j | F_{A^c},X^{j-1}) + \sum_{i\in A} \sum_{j=1}^n I(Y_{i,j};U_{i,j}|X_j,Q_j) -n D \label{eqn:simpleMarkov} \\
&\geq \sum_{j=1}^n H(X_j | U_{{A^c},j},Q_j) + \sum_{i\in A} \sum_{j=1}^n I(Y_{i,j};U_{i,j}|X_j,Q_j) -n D \label{eqn:condRedEntropy}.
\end{align}
The nontrivial steps above can be justified as follows:
\begin{itemize}
\item \eqref{eqn:fFuncY} follows since $F_A$ is a function of $Y_A^n$.
\item \eqref{eqn:f_x_f} follows since $F_i$ is a function of $Y_i^n$ and $F_1\leftrightarrow X^n \leftrightarrow F_2$ form a Markov chain (since $Y_1^n \leftrightarrow X^n \leftrightarrow Y_2^n$ form a Markov chain).
\item \eqref{eqn:distBnd} follows since $nD \geq H(X^n|F_1,F_2)$ by Lemma \ref{lem:minDistortion}.
\item \eqref{eqn:simpleMarkov} follows from the Markov chain $Y_{i,j}\leftrightarrow X^n \leftrightarrow Y_i^{j-1}$, which follows from the i.i.d. nature of the source sequences.
\item \eqref{eqn:condRedEntropy} simply follows from the fact that conditioning reduces entropy.
\end{itemize}
Therefore, dividing both sides by $n$, we have:
\begin{align*}
\sum_{i\in A} R_i &\geq \frac{1}{n}\sum_{j=1}^n H(X_j | U_{A^c,j},Q_j)+ \sum_{i\in A} \frac{1}{n}\sum_{j=1}^n I(Y_{i,j};U_{i,j}|X_j,Q_j) - D.
\end{align*}
Also, using Lemma \ref{lem:minDistortion} and the fact that conditioning reduces entropy:
\begin{align*}
D \geq \frac{1}{n}H(X^n|F_1,F_2)\geq \frac{1}{n}\sum_{j=1}^n H(X_j|U_{1,j},U_{2,j},Q_j).
\end{align*}
Observe that $Q_j$ is independent of $(X_j,Y_{1,j},Y_{2,j})$ and, conditioned on $Q_j$, we have the long Markov chain $U_{1,j}\leftrightarrow Y_{1,j}\leftrightarrow X_j\leftrightarrow Y_{2,j}\leftrightarrow U_{2,j}$.  Finally, by a standard time-sharing argument, we conclude by saying that if $(R_1,R_2,D)$ is strict-sense achievable for the CEO problem, then 
\begin{align*}
R_1 &\geq I(Y_1;U_1|X,Q)+H(X|U_2,Q)-D\\
R_2 &\geq I(Y_2;U_2|X,Q)+H(X|U_1,Q)-D\\
R_1 +R_2 &\geq I(U_1;Y_1|X,Q)+I(U_2;Y_2|X,Q)+H(X)-D\\
D &\geq H(X|U_1,U_2,Q).
\end{align*}
for some joint distribution $p(q)p(x,y_1,y_2)p(u_1|y_1,q)p(u_2|y_2,q)$.
\end{proof}

\begin{theorem}\label{thm:CEOregion}
$\mathcal{RD}^{o}_{CEO} = \mathcal{RD}^{i}_{CEO}=\overline{\mathcal{RD}}^{\star}_{CEO}$. 
\end{theorem}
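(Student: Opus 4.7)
The plan is to prove the chain $\mathcal{RD}^i_{CEO}\subseteq\overline{\mathcal{RD}}^{\star}_{CEO}\subseteq\mathcal{RD}^o_{CEO}\subseteq\mathcal{RD}^i_{CEO}$, from which all three sets coincide. The first inclusion is exactly Theorem~\ref{thm:CEOAchregion}. The second follows from Theorem~\ref{thm:CEOconv} together with the closedness of $\mathcal{RD}^o_{CEO}$, which I would verify by a standard Carath\'eodory-type cardinality bound on $Q$ (using the finitely many entropy and mutual-information functionals that appear in the definition) combined with continuity of those functionals on the simplex of conditional distributions over the bounded-cardinality auxiliary alphabets. All the content lies in the reverse inclusion $\mathcal{RD}^o_{CEO}\subseteq\mathcal{RD}^i_{CEO}$.

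The driving observation is the algebraic identity
$$
I(Y_1;U_1|X,Q)+H(X|U_2,Q)-H(X|U_1,U_2,Q)=I(Y_1;U_1|U_2,Q),
$$
which uses the Markov chain $U_1\leftrightarrow Y_1\leftrightarrow X\leftrightarrow Y_2\leftrightarrow U_2$ given $Q$ together with the conditional independence $U_1\perp U_2\mid X,Q$. A symmetric identity holds for $R_2$, and using $Q\perp X$ one verifies $I(U_1;Y_1|X,Q)+I(U_2;Y_2|X,Q)+H(X)-H(X|U_1,U_2,Q)=I(U_1,U_2;Y_1,Y_2|Q)$. Consequently, whenever a witnessing distribution satisfies $D=H(X|U_1,U_2,Q)$, each outer-bound rate inequality collapses to the corresponding inner-bound inequality, so $(R_1,R_2,D)\in\mathcal{RD}^i_{CEO}$ with the same auxiliaries.

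For the slack case $D>D^{*}:=H(X|U_1,U_2,Q)$ at a point $(R_1,R_2,D)\in\mathcal{RD}^o_{CEO}$ witnessed by $(U_1,U_2,Q)$, I would close the gap by promoting the witness to a new distribution whose conditional entropy hits $D$ exactly. Concretely, adjoin to $Q$ two independent Bernoulli switches $T_1,T_2$ with parameters $\alpha_1,\alpha_2\in[0,1]$, define $\tilde U_i=U_i$ on $\{T_i=1\}$ and $\tilde U_i=\emptyset$ on $\{T_i=0\}$, and set $\tilde Q=(Q,T_1,T_2)$. The distortion $H(X|\tilde U_1,\tilde U_2,\tilde Q)$ is an affine function of $(\alpha_1,\alpha_2)$ interpolating between the four corner values $\{D^{*},H(X|U_1,Q),H(X|U_2,Q),H(X)\}$, while the three inner-bound mutual informations at $(\tilde U_1,\tilde U_2,\tilde Q)$ are bilinear in $(\alpha_1,\alpha_2)$. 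The plan is to solve the resulting small system so that the distortion equals $D$ and each new mutual information falls below the outer-bound lower bound inherited from the witness; the tight-face argument from the previous paragraph then places $(R_1,R_2,D)$ in $\mathcal{RD}^i_{CEO}$ under $(\tilde U_1,\tilde U_2,\tilde Q)$.

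The hard part is verifying feasibility of this $(\alpha_1,\alpha_2)$ system. A naive single-switch mix with only the fully-trivial endpoint reduces to the inequality $I(Y_1;U_1|U_2,Q)\ge I(X;U_1,U_2|Q)$, which fails in asymmetric situations (for example, when $U_1$ carries little information about $X$ while $U_2$ is highly informative), so the two-switch construction is essential; checking feasibility in full generality leans on the data-processing inequalities $I(Y_i;U_i|U_j,Q)\ge I(X;U_i|U_j,Q)$ for $\{i,j\}=\{1,2\}$, which follow from the Markov chain $X\leftrightarrow Y_i\leftrightarrow U_i$. A final Carath\'eodory reduction on $\tilde Q$ would enforce the cardinality bound $|\tilde{\mathcal{Q}}|\le 4$ that appears in the definition of $\mathcal{RD}^i_{CEO}$.
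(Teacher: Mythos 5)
Your decomposition into the three inclusions is correct, the second inclusion (closedness of $\mathcal{RD}^o_{CEO}$) can indeed be established via the cardinality argument, and the algebraic identity
$I(Y_1;U_1|X,Q)+H(X|U_2,Q)-H(X|U_1,U_2,Q)=I(Y_1;U_1|U_2,Q)$
is correct (it follows from the conditional long Markov chain plus $Q\perp X$) and is genuinely the engine of the converse. Its consequence — that the outer-bound rate constraints collapse exactly to the inner-bound ones whenever the witness is tight, $D=H(X|U_1,U_2,Q)$ — is the same observation that makes the paper's extreme points $P_4,P_5$ land immediately in $\mathcal{RD}^i_{CEO}$.

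The problem is the slack case, which is where all the content lies and where you explicitly stop short. You never establish feasibility of the $(\alpha_1,\alpha_2)$ system; you only state that you "would solve the resulting small system" and that the argument "leans on" data-processing inequalities. That is a plan, not a proof, and it is not obviously executable as stated. Two concrete issues: first, the distortion $H(X\,|\,\tilde U_1,\tilde U_2,\tilde Q)$ is \emph{bilinear}, not affine, in $(\alpha_1,\alpha_2)$, so the level set $\{\,\tilde D=D\,\}$ is a curve, and one must show it intersects the region where all three rate constraints hold; second, the sum-rate $I(Y_1,Y_2;\tilde U_1,\tilde U_2|\tilde Q)$ strictly exceeds $\tilde R_1+\tilde R_2$ whenever $0<\alpha_1\alpha_2<1$ and $I(U_1;U_2|Q)>0$, so the sum-rate inequality must be checked separately and does not follow from the two single-rate checks. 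Moreover, by restricting to \emph{independent} switches you confine yourself to the two-parameter (product) subfamily of mixtures over the four modes $\{\emptyset,U_1\}\times\{\emptyset,U_2\}$; the argument that ultimately must be made implicitly time-shares over those four modes with an arbitrary (three-parameter) weight, and it is not clear the product subfamily suffices in full generality.

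The paper closes this gap by a different and cleaner route: for a fixed auxiliary distribution, the outer-bound inequalities carve out a polyhedron in $(R_1,R_2,D)$ whose finitely many extreme points $P_1,\dots,P_5$ can be written down explicitly, and each is dominated coordinatewise by an $\mathcal{RD}^i_{CEO}$ point obtained from one of the four corner choices of auxiliaries ($(\emptyset,\emptyset)$, $(U_1,\emptyset)$, $(\emptyset,U_2)$, $(U_1,U_2)$); the convexity of $\mathcal{RD}^i_{CEO}$ (via the timesharing variable $Q$) then handles every other point as a convex combination. This reduces the problem to five easily-checked dominations rather than a continuous feasibility argument, and it generalizes to $m$ encoders via a submodularity lemma. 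If you want to push your direct-construction route through, you should either allow correlated switches (a full distribution on $\{0,1\}^2$, which is three parameters and essentially recovers the paper's time-sharing) or, more simply, adopt the paper's extreme-point argument.
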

\begin{proof}
%
%
We first remark that the cardinality bounds on the alphabets in the definition of $\mathcal{RD}^{i}_{CEO}$ can be imposed without any loss of generality.  This is a consequence of \cite[Lemma 2.2]{bib:Jana2009} and is discussed in detail in Appendix \ref{app:cardBounds}.

Therefore, it will suffice to show $\mathcal{RD}^{o}_{CEO} \subseteq \mathcal{RD}^{i}_{CEO}$ without considering the cardinality bounds.
To this end, fix $p(q)$, $p(u_1|y_1,q)$, and $p(u_2|y_2,q)$ and consider the extreme points\footnote{For two encoders, it is easy enough to enumerate the extreme points by inspection.  However, this can be formalized by a submodularity argument, which is given in Appendix \ref{app:mEncCEO}.} of polytope defined by the inequalities \eqref{eqn:RouterIneqs}:
\begin{align*}
P_1&=\bigg(0,0,I(Y_1;U_1|X,Q)+I(Y_2;U_2|X,Q)+H(X)\bigg) \\
P_2&=\bigg(I(Y_1;U_1|Q),0,I(U_2;Y_2|X,Q)+H(X|U_1,Q)\bigg)\\
P_3&=\bigg(0,I(Y_2;U_2|Q),I(U_1;Y_1|X,Q)+H(X|U_2,Q)\bigg)\\
P_4&=\bigg(I(Y_1;U_1|Q),I(Y_2;U_2|U_1,Q),H(X|U_1,U_2,Q)\bigg)\\
P_5&=\bigg(I(Y_1;U_1|U_2,Q),I(Y_2;U_2|Q),H(X|U_1,U_2,Q)\bigg),
\end{align*}
where the point $P_j$ is a triple $(R_1^{(j)},R_2^{(j)},D^{(j)})$.  We say a point $(R_1^{(j)},R_2^{(j)},D^{(j)})$ is \emph{dominated by} a point in $\mathcal{RD}_{CEO}^i$ if there exists some $(R_1,R_2,D)\in\mathcal{RD}_{CEO}^i$ for which $R_1\leq R_1^{(j)}$, $R_2\leq R_2^{(j)}$, and $D\leq D^{(j)}$. Observe that each of the extreme points $P_1,\dots,P_5$ is dominated by a point in $\mathcal{RD}_{CEO}^i$:  
\begin{itemize}
\item First, observe that $P_4$ and $P_5$ are both in $\mathcal{RD}_{CEO}^i$, so these points are not problematic.  
\item Next, observe that the point $(0,0,H(X))$ is in $\mathcal{RD}_{CEO}^i$, which can be seen by setting all auxiliary random variables to be constant.  This point dominates $P_1$.  
\item By using auxiliary random variables $(\hat{U}_1,\hat{U}_2,Q)=(U_1,\emptyset,Q)$, the point 
 $(I(Y_1;U_1|Q),0,H(X|U_1,Q))$ is in $\mathcal{RD}^i_{CEO}$, and dominates the point  $P_2$.  By a symmetric argument, the point $P_3$ is also dominated by a point in $\mathcal{RD}^i_{CEO}$.
\end{itemize}
Since $\mathcal{RD}_{CEO}^o$ is the convex hull of all such extreme points (i.e., the convex hull of the union of extreme points over all appropriate joint distributions), the theorem is proved.
\end{proof}

\begin{remark}
Theorem \ref{thm:CEOregion} can be extended to the general case of  $m$-encoders.  Details are provided in Appendix \ref{app:mEncCEO}.
\end{remark}

\subsection{A stronger converse result for the CEO problem} \label{subsec:ceoStrongerConverse}
As defined, our reproduction sequence $\hat{X}^n$ is restricted to be a product distribution on $\mathcal{X}^n$.  However, for a blocklength $n$ code, we can allow $\hat{X}^n$ to be \emph{any} probability distribution on $\mathcal{X}^n$ and the converse result still holds.  In this case, we define the sequence distortion as follows:
\begin{align*}
d({x}^n,\hat{x}^n)=\frac{1}{n}\log\left(\frac{1}{\hat{x}^n(x^n)} \right),
\end{align*}
which is compatible with the original definition when $\hat{X}^n$ is a product distribution. The reader can verify that the result of Lemma \ref{lem:minDistortion} is still true for this more general distortion alphabet by setting $s(x^n|z)=\hat{x}^n(x^n)$ in the corresponding proof.  Since Lemma \ref{lem:minDistortion} is the key tool in the CEO converse result, this implies that the converse holds even if $\hat{X}^n$ is allowed to be any probability distribution on $\mathcal{X}^n$ (rather than being restricted to the set of product distributions).

When this stronger converse result is taken together with the achievability result, we observe that restricting $\hat{X}^n$ to be a product distribution is in fact optimal and can achieve all points in $\overline{\mathcal{RD}}^{\star}_{CEO}$.

\subsection{An Example: Distributed compression of a posterior distribution} \label{subsec:ceoExamples}

Suppose two sensors observe sequences $Y_1^n$ and $Y_2^n$ respectively, which are conditionally independent given a hidden sequence $X^n$.  The sensors communicate with a fusion center through rate-limited links of capacity $R_1$ and $R_2$ respectively.  Given sequences $(Y_1^n,Y_2^n)$ are observed, the sequence $X^n$ cannot be determined in general, so the fusion center would like to estimate the posterior distribution $p(x^n|Y_1^n,Y_2^n)$.  Since the communication links are rate-limited, the fusion center cannot necessarily compute $p(x^n|Y_1^n,Y_2^n)$ exactly.  In this case, the fusion center would like to generate an estimate $\hat{p}(x^n|g_1^{(n)}(Y_1^n), g_2^{(n)}(Y_2^n) )$ that should approximate $p(x^n|Y_1^n,Y_2^n)$ in the sense that, on average:
\begin{align*}
D\Big(p(x^n|y_1^n,y_2^n) \Big\| \hat{p}(x^n|g_1^{(n)}(y_1^n), g_2^{(n)}(y_2^n) ) \Big)\leq n\varepsilon,
\end{align*}
where, consistent with standard notation (e.g. \cite{bib:CoverThomas2006}), we write \linebreak $D (p(x^n|y_1^n, y_2^n) \| \hat{p}(x^n|g_1^{(n)}(y_1^n), g_2^{(n)}(y_2^n) ))$ as shorthand for 
\begin{align*}
\sum_{x^n, y_1^n, y_2^n} p(x^n, y_1^n, y_2^n) \log \frac{p(x^n|y_1^n, y_2^n)}{\hat{p}(x^n|g_1^{(n)}(y_1^n), g_2^{(n)}(y_2^n) )}.
\end{align*} 
The relevant question here is the following. What is the minimum distortion $\varepsilon$ that is attainable given $R_1$ and $R_2$?

Considering the CEO problem for this setup, we have:
\begin{align*}
\mathbb{E}d(\hat{X}^n,X^n) &= \frac{1}{n}\sum_{(x^n,y_1^n,y_2^n)} p(x^n,y_1^n,y_2^n) 
\log \left( \frac{1}{\hat{x}^n(x^n)}\right)\\
&=\frac{1}{n} D\Big(p(x^n|y_1^n,y_2^n) \Big\| \hat{x}^n(x^n) \Big) + \frac{1}{n}H(X^n|Y_1^n,Y_2^n).
\end{align*}
Identifying $\hat{p}(x^n|g_1^{(n)}(Y_1^n), g_2^{(n)}(Y_2^n) )\leftarrow \hat{X}^n(x^n)$, we have:
\begin{align*}
D\Big(p(x^n|y_1^n,y_2^n) \Big\|  \hat{p}(x^n|g_1^{(n)}(y_1^n), g_2^{(n)}(y_2^n) ) \Big) = n\mathbb{E}d(\hat{X}^n,X^n)-n H(X|Y_1,Y_2).
\end{align*}
Thus, finding the minimum possible distortion reduces to an optimization problem over $\overline{\mathcal{RD}}_{CEO}^{\star} $.  In particular, the minimum attainable distortion $\varepsilon^*$ is given by
\begin{align}
\varepsilon^* = \inf \left\{D : (R_1,R_2,D)\in \overline{\mathcal{RD}}_{CEO}^{\star} \right\} -H(X|Y_1,Y_2). \label{eqn:minKL}
\end{align}
Moreover, the minimum distortion is obtained by estimating each $x_j$ separately.  In other words, there exists an optimal (essentially, for large $n$) estimate $\hat{p}^*(x^n|\cdot, \cdot )$ (which is itself a function of optimal encoding functions $g_1^{*(n)}(\cdot)$ and $g_2^{*(n)}(\cdot)$) that can be expressed as a product distribution
\begin{align*}
\hat{p}^*(x^n|\cdot, \cdot )=\prod_{j=1}^n \hat{p}_j^*\left(x_j|g_1^{*(n)}(\cdot), g_2^{*(n)}(\cdot) \right).
\end{align*}  
For this choice of $\hat{p}^*(x^n|\cdot, \cdot )$, we have the following relationship:
\begin{align*}
\frac{1}{n}\sum_{j=1}^n D\Big(p(x_j|y_{1,j},y_{2,j}) \Big\| \hat{p}_j^*\left(x_j|g_1^{*(n)}(y_1^n), g_2^{*(n)}(y_2^n) \right) \Big) =\varepsilon^*.
\end{align*}
In light of this fact, one can apply Markov's inequality to obtain the following estimate on peak component-wise distortion:
\begin{align*}
\# \Bigg\{ j ~\Big| ~D\Big(p(x_j|y_{1,j},y_{2,j}) \Big\| \hat{p}_j^*\left(x_j|g_1^{*(n)}(y_1^n), g_2^{*(n)}(y_2^n) \right) \Big) \geq \zeta \Bigg\} \leq n\frac{\varepsilon^*}{\zeta},
\end{align*}
where $\#(\cdot)$ is the counting measure.

To make this example more concrete, consider the scenario depicted in Figure \ref{fig:BSCex}, where $X\sim \mbox{Bernoulli}(\frac{1}{2})$ and $Y_i$ is the result of passing $X$ through a binary symmetric channel with crossover probability $\alpha$ for $i=1,2$.  To simplify things, we constrain the rates of each encoder to be at most $R$ bits per channel use.

\begin{figure}
\begin{center}
\def\svgwidth{6in}
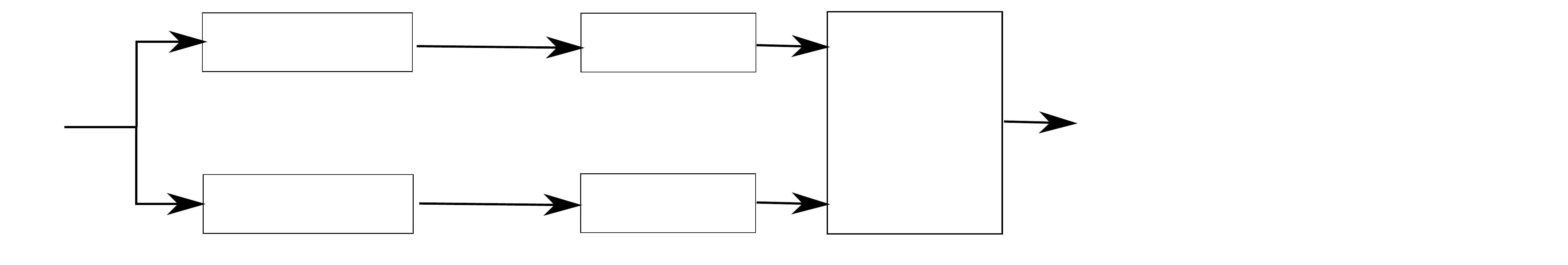
\caption[A symmetric CEO source coding network.]{An example CEO problem where $X\sim \mbox{Bernoulli}(\frac{1}{2})$, $\Pr(Y_i=X)=(1-\alpha)$, and both encoders are subject to the same rate constraint.}\label{fig:BSCex}
\end{center}
\end{figure}

By performing a brute-force search over a fine mesh of conditional distributions  $\{p(u_i|y_i)\}_{i=1}^2$, we numerically approximate the set of $(R,D)$ pairs such that $(R,R,D)$ is in the achievable region $\overline{\mathcal{RD}}_{CEO}^{\star}$ corresponding to the network in Figure \ref{fig:BSCex}.  The lower convex envelope of these $(R,D)$ pairs is plotted in Figure \ref{fig:distRateFn} for $\alpha\in\{0.01,0.1,0.25\}$.  Continuing our example above for this concrete choice of source parameters, we compute the minimum achievable Kullback-Leibler distance $\varepsilon^*$ according to \eqref{eqn:minKL}.  The result is given in Figure \ref{fig:epsRateFn}. 

These numerical results are intuitively satisfying in the sense that, if $Y_1,Y_2$ are high-quality estimates of $X$ (e.g., $\alpha=0.01$), then a small increase in the allowable rate $R$ results in a large relative improvement of $\hat{p}(x|\cdot,\cdot)$, the decoder's estimate of $p(x|Y_1,Y_2)$.  On the other hand, if $Y_1,Y_2$ are poor-quality estimates of $X$ (e.g., $\alpha=0.25$), then we require a large increase in the allowable rate $R$ in order to obtain an appreciable improvement of $\hat{p}(x|\cdot,\cdot)$.

\begin{figure}[t]
  \centering
  \includegraphics[trim = 37mm 82mm 37mm 87mm, clip, scale=.75]{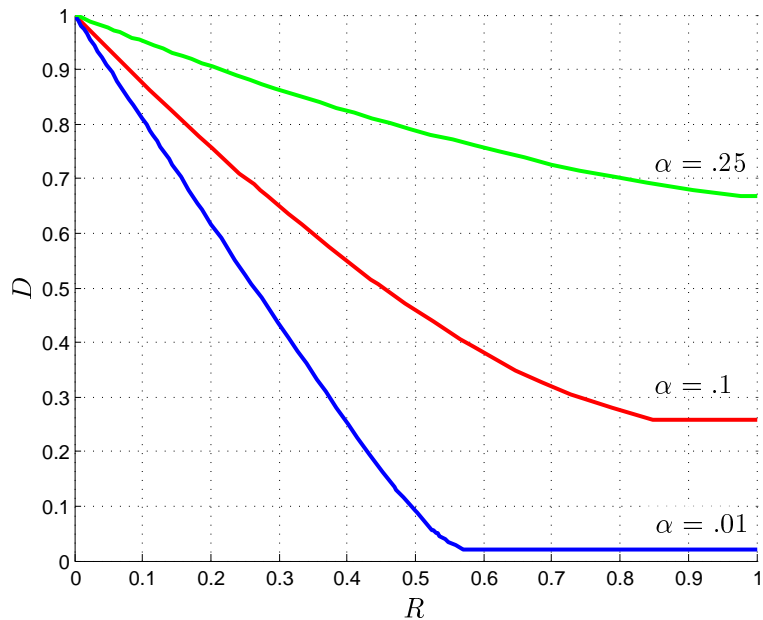}
  \caption[Distortion-rate function for the symmetric CEO   network.]{The distortion-rate function of the network in Figure \ref{fig:BSCex} computed for $\alpha\in\{0.01,0.1,0.25\}$.} \label{fig:distRateFn}         
  \end{figure}
  
\begin{figure}[t]  
\centering     
  \includegraphics[trim = 37mm 82mm 37mm 87mm, clip, scale=.75]{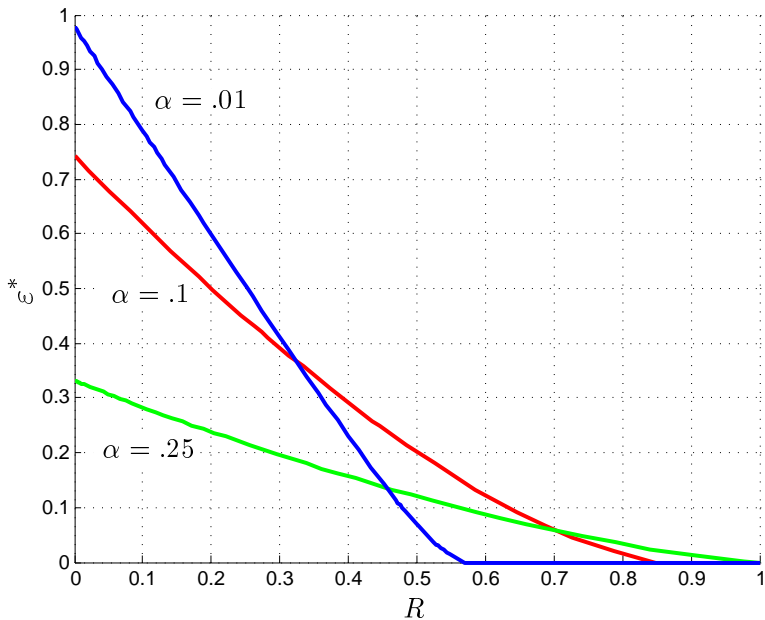}
  \caption[Achievable KL divergence for the symmetric CEO network.]{The minimum achievable Kullback-Leibler distance computed according to \eqref{eqn:minKL}, i.e., the curves here are those of Figure \ref{fig:distRateFn}, lowered by the constant $H(X|Y_1, Y_2)$.}\label{fig:epsRateFn}
\end{figure}

One field where this example is directly applicable is machine learning.  In this case, $X_j$ could represent the class of object $j$, and $Y_{1,j},Y_{2,j}$ are observable attributes.  In machine learning, one typically estimates the probability that an object belongs to a particular class given a set of observable attributes.  For this type of estimation problem, relative entropy is a natural penalty criterion.

Another application is to horse-racing with conditionally independent, rate-limited side informations.  In this case, the doubling rate of the gambler's wealth can be expressed in terms of the logarithmic loss distortion measure.  This example is consistent with the original interpretation of the CEO problem, where the CEO makes consecutive business decisions (investments) having outcomes $X^n$, with the objective of maximizing the wealth of the company. We omit the details.

\subsection{An Example: Joint estimation of the encoder observations}

Suppose one wishes to estimate the encoder observations $(Y_1,Y_2)$.  In this case, the rate region simplifies considerably. In particular, if we tolerate a distortion $D$ in our estimate of the pair $(Y_1,Y_2)$, then the achievable rate region is the same as the Slepian-Wolf rate region with each rate constraint relaxed by $D$ bits. Formally:

\begin{theorem} \label{thm:SWminusD}
If $X=(Y_1,Y_2)$, then $\overline{\mathcal{RD}}_{CEO}^{\star}$ consists of all vectors $(R_1,R_2,D)$ satisfying
\begin{align*}
R_1 &\geq H(Y_1 | Y_2)-D \\
R_2 &\geq H(Y_2 | Y_1)-D \\
R_1+R_2 &\geq H(Y_1,Y_2)-D\\
D &\geq 0.
\end{align*}
\end{theorem}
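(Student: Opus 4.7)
The plan is to invoke Theorem \ref{thm:CEOregion}, which has already identified $\overline{\mathcal{RD}}^{\star}_{CEO}$ with both $\mathcal{RD}^o_{CEO}$ and $\mathcal{RD}^i_{CEO}$, and then verify that these descriptions collapse to the claimed Slepian--Wolf-type region when $X=(Y_1,Y_2)$.

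For the converse half, I would substitute $X=(Y_1,Y_2)$ into the four inequalities defining $\mathcal{RD}^o_{CEO}$. Two simplifications do all the work. First, since $Y_i$ is a deterministic component of $X$, we have $I(Y_i;U_i|X,Q)=0$. Second, since $Q$ is independent of $(Y_1,Y_2)$ and $Y_i - Y_{3-i} - U_{3-i}$ is a Markov chain given $Q$, a short chain-rule expansion gives
\begin{align*}
H(X|U_{3-i},Q)=H(Y_i|Y_{3-i})+H(Y_{3-i}|U_{3-i},Q)\geq H(Y_i|Y_{3-i}).
\end{align*}
Plugging these into the three rate bounds of $\mathcal{RD}^o_{CEO}$ immediately produces $R_1\geq H(Y_1|Y_2)-D$, $R_2\geq H(Y_2|Y_1)-D$, and $R_1+R_2\geq H(Y_1,Y_2)-D$; the bound $D\geq 0$ is automatic from the nonnegativity of log-loss.

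For achievability, I would appeal to Theorem \ref{thm:CEOAchregion} with auxiliary variables obtained by time-sharing four modes indexed by $Q$: the Slepian--Wolf mode $(U_1,U_2)=(Y_1,Y_2)$, the asymmetric modes $(\emptyset,Y_2)$ and $(Y_1,\emptyset)$, and the null mode $(\emptyset,\emptyset)$. The idea is to show that each extreme point of the lower boundary of the claimed region is matched by an appropriate two-mode mixture. For example, to realize the corner $(H(Y_1|Y_2)-D,\,H(Y_2),\,D)$ when $D\in[0,H(Y_1|Y_2)]$, I would place weight $1-D/H(Y_1|Y_2)$ on the Slepian--Wolf mode and weight $D/H(Y_1|Y_2)$ on the $(\emptyset,Y_2)$ mode, and a direct computation of the Berger--Tung mutual informations shows that all four inner-bound inequalities are simultaneously satisfied. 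The symmetric corner $(H(Y_1),\,H(Y_2|Y_1)-D,\,D)$ is handled analogously, and the large-$D$ regime $D\in[H(Y_i|Y_{3-i}),H(Y_1,Y_2)]$ is covered by pairing an asymmetric mode with the null mode. The rest of the region follows from convexity (further time-sharing along the sum-rate facet) together with the trivial monotonicity of $\mathcal{RD}^i_{CEO}$ under componentwise increases of $(R_1,R_2,D)$.

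The main obstacle I foresee is bookkeeping rather than anything conceptual: one must enumerate the extreme points of the lower boundary, whose structure changes as $D$ crosses $H(Y_1|Y_2)$, $H(Y_2|Y_1)$, and $H(Y_1,Y_2)$, and confirm that each is achievable by a suitable two-mode mixture. Because the Slepian--Wolf mode alone already realizes the whole $(R_1,R_2)$-region at zero distortion, the marginal rates inside that mode must be coordinated with the target corner, but this is short case analysis rather than a deep argument.
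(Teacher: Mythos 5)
Your converse argument matches the paper's: you substitute $X=(Y_1,Y_2)$ into the $\mathcal{RD}^o_{CEO}$ inequalities, use $I(Y_i;U_i|X,Q)=0$, and decompose $H(X|U_{3-i},Q)=H(Y_{3-i}|U_{3-i},Q)+H(Y_i|Y_{3-i})\geq H(Y_i|Y_{3-i})$ (the Markov structure $Y_1 \leftrightarrow Y_2 \leftrightarrow (U_2,Q)$ and independence of $Q$ being what makes the second term collapse). This is exactly the paper's calculation.

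Your achievability argument, while sound in outline, takes a noticeably heavier route than the paper's. You work directly with the inner bound $\mathcal{RD}^i_{CEO}$, whose rate inequalities lack the $-D$ slack, so you are forced into explicit time-sharing among four modes and a case analysis indexed by where $D$ falls relative to $H(Y_1|Y_2)$, $H(Y_2|Y_1)$, and $H(Y_1,Y_2)$ --- bookkeeping you acknowledge you have not fully carried out. The paper sidesteps all of this by exploiting the equality $\mathcal{RD}^o_{CEO}=\overline{\mathcal{RD}}^\star_{CEO}$ from Theorem~\ref{thm:CEOregion} a second time: simply set $U_i=Y_i$ (with $Q$ trivial) in the \emph{outer}-bound description. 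Because the outer-bound rate inequalities already carry the additive $-D$ term, this single, non-time-shared choice yields precisely the three rate bounds and $D\geq 0$, so the claimed region sits inside $\mathcal{RD}^o_{CEO}=\overline{\mathcal{RD}}^\star_{CEO}$ with no extreme-point enumeration at all. In effect your proposal re-derives, for this special case, the domination argument that was already proved once in Theorem~\ref{thm:CEOregion}; the intended shortcut is to use that theorem in both directions rather than only for the converse. I spot-checked your Slepian--Wolf/asymmetric-mode mixture and the asymmetric/null mixture and they do land on the right corners, so if you persist with your approach the bookkeeping should close, but the one-line outer-bound observation is both safer and what the paper actually does.
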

\begin{proof}
First, note that Theorem  \ref{thm:CEOregion} implies that $\overline{\mathcal{RD}}_{CEO}^{\star}$ is equivalent to the the union of $(R_1,R_2,D)$ triples satisfying \eqref{eqn:RouterIneqs} taken over all joint distributions $p(q)p(x,y_1,y_2)p(u_1|y_1,q)p(u_2|y_2,q)$.   Now, since $X=(Y_1,Y_2)$, each of the inequalities \eqref{eqn:RouterIneqs} can be lower bounded as follows:
\begin{align*}
R_1 &\geq I(Y_1;U_1|Y_1,Y_2,Q)+H(Y_1,Y_2|U_2,Q)-D\\
&=  H(Y_2|U_2,Q)+H(Y_1|Y_2)-D\\
&\geq H(Y_1|Y_2)-D\\
R_2 &\geq I(Y_2;U_2|Y_1,Y_2,Q)+H(Y_1,Y_2|U_1,Q)-D\\
&=  H(Y_1|U_1,Q)+H(Y_2|Y_1)-D\\
&\geq H(Y_2|Y_1)-D\\
R_1 +R_2 &\geq I(U_1;Y_1|Y_1,Y_2,Q)+I(U_2;Y_2|Y_1,Y_2,Q)+H(Y_1,Y_2)-D\\
&=H(Y_1,Y_2)-D\\
D &\geq H(Y_1,Y_2|U_1,U_2,Q)\\
&\geq 0.
\end{align*}
Finally, observe that by setting $U_i=Y_i$ for $i=1,2$, we can achieve any point in this relaxed region (again, a consequence of Theorem \ref{thm:CEOregion}).
\end{proof}

We remark that this result was first proved in \cite{bib:CourtadeISIT2011} by Courtade and Wesel using a different method.

\section{Multiterminal Source Coding} \label{sec:MTSC}
With Theorem \ref{thm:CEOregion} in hand, we are now in a position to characterize the achievable rate distortion region $\overline{\mathcal{RD}}^{\star}$ for the multiterminal source coding problem under logarithmic loss.  As before, we prove an inner bound first.


\subsection{Inner Bound} \label{subsec:MTSCinnerBound}
\begin{definition}
Let $(R_1,R_2,D_1,D_2) \in \mathcal{RD}^{i}$ if and only if there exists a joint distribution of the form 
\begin{align*}
p(y_1,y_2)p(u_1|y_1,q)p(u_2|y_2,q)p(q)
\end{align*}
where $|\mathcal{U}_1|\leq |\mathcal{Y}_1|$, $|\mathcal{U}_2|\leq |\mathcal{Y}_2|$,  and $|\mathcal{Q}|\leq 5$, which satisfies
\begin{align*}
R_1 &\geq I(Y_1;U_1|U_2,Q)\\
R_2 &\geq I(Y_2;U_2|U_1,Q)\\
R_1 +R_2 &\geq I(U_1,U_2;Y_1,Y_2|Q)\\
D_1 &\geq H(Y_1|U_1,U_2,Q)\\
D_2 &\geq H(Y_2|U_1,U_2,Q).
\end{align*}
\end{definition}

\begin{theorem} \label{thm:MTSCachv}
$\mathcal{RD}^i\subseteq \overline{\mathcal{RD}}^{\star}$.  That is, all rate distortion vectors in $\mathcal{RD}^i$ are achievable.
\end{theorem}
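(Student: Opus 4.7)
The plan is to mimic the CEO achievability proof (Theorem \ref{thm:CEOAchregion}) almost verbatim, replacing the single reproduction function by a pair of reproduction functions, one for each source. First I would invoke the Berger-Tung inner bound in its standard form for the two-encoder multiterminal source coding problem: for any joint distribution $p(y_1,y_2)p(q)p(u_1|y_1,q)p(u_2|y_2,q)$ and any deterministic reproduction maps $f_i: \mathcal{U}_1\times\mathcal{U}_2\times\mathcal{Q}\to\hat{\mathcal{Y}}_i$, the rate-distortion vector with $R_1\geq I(U_1;Y_1|U_2,Q)$, $R_2\geq I(U_2;Y_2|U_1,Q)$, $R_1+R_2\geq I(U_1,U_2;Y_1,Y_2|Q)$ and $D_i\geq \mathbb{E}\,d(Y_i,f_i(U_1,U_2,Q))$ is achievable. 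This is the direct analogue of Proposition \ref{prop:BTCEO} and rests on the same quantize-and-bin scheme; no new ideas are needed here.

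Next I would make the key (and only non-routine) choice of reproduction functions, exactly as in the proof of Theorem \ref{thm:CEOAchregion}. For $i=1,2$, set
\begin{align*}
f_i(u_1,u_2,q) \triangleq \Pr\bigl[Y_i = \cdot \,\big|\, U_1=u_1,\,U_2=u_2,\,Q=q\bigr],
\end{align*}
so that the reproduction is the true posterior of $Y_i$ given the auxiliaries. A short computation identical to the one used in the CEO setting then shows
\begin{align*}
\mathbb{E}\,d\bigl(Y_i,\,f_i(U_1,U_2,Q)\bigr) \;=\; H(Y_i\,|\,U_1,U_2,Q),
\end{align*}
for $i=1,2$. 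Substituting these identities into the Berger-Tung bounds recovers precisely the four rate and distortion inequalities in the definition of $\mathcal{RD}^i$.

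Finally, the cardinality bounds $|\mathcal{U}_i|\leq|\mathcal{Y}_i|$ and $|\mathcal{Q}|\leq 5$ can be imposed without loss of generality by the standard support-lemma/Carath\'eodory argument (one additional dimension beyond the CEO case because we now have two distortion constraints rather than one), in the same spirit as the reduction cited in the proof of Theorem \ref{thm:CEOregion} via \cite[Lemma 2.2]{bib:Jana2009} and Appendix \ref{app:cardBounds}.

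There is really no hard step: the only substantive move is the choice of posterior reproductions together with the observation that log-loss expected distortion coincides with conditional entropy under that choice. Everything else is bookkeeping. If any subtlety arises, it is only in verifying that the Berger-Tung scheme accommodates simultaneous distortion constraints on two different sources with two different reproduction functions, but this is exactly the classical form of the Berger-Tung inner bound and can simply be quoted from \cite{bib:ElGamalYHKim2012}.
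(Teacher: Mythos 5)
Your proposal is correct and matches the paper's argument exactly: both invoke the two-source Berger-Tung inner bound (Proposition \ref{prop:BTMTSC}) and choose the posterior reproductions $f_i(U_1,U_2,Q)=\Pr[Y_i=\cdot\,|\,U_1,U_2,Q]$ so that the expected logarithmic loss becomes $H(Y_i|U_1,U_2,Q)$. The cardinality remark via \cite[Lemma 2.2]{bib:Jana2009} is likewise the paper's route (Appendix \ref{app:cardBounds}), so there is nothing to correct.
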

%
Again, we require an appropriate version of the Berger-Tung inner bound:
\begin{proposition}[Berger-Tung Inner Bound \cite{bib:BergerLongo1977, bib:Tung1978}]\label{prop:BTMTSC}
The rate distortion vector $(R_1,R_2,D_1,D_2)$ is achievable if  
\begin{align*}
R_1 &\geq I(U_1;Y_1|U_2,Q) \\
R_2 &\geq I(U_2;Y_2|U_1,Q) \\
R_1+R_2 &\geq I(U_1,U_2;Y_1,Y_2|Q) \\
D_1 &\geq \mathbb{E}\left[d(Y_1,f_1(U_1,U_2,Q) \right]\\
D_2 &\geq \mathbb{E}\left[d(Y_2,f_2(U_1,U_2,Q) \right].
\end{align*}
for a joint distribution 
\begin{align*}
p(y_1,y_2)p(u_1|y_1,q)p(u_2|y_2,q)p(q)
\end{align*}
and reproduction functions
\begin{align*}
&f_i:\mathcal{U}_1\times \mathcal{U}_2\times \mathcal{Q}\rightarrow \hat{\mathcal{Y}}_i, \mbox{~for $i=1,2$.}
\end{align*}
\end{proposition}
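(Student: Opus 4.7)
The plan is to establish the proposition via the standard Berger--Tung ``quantize-and-bin'' random coding scheme, followed by a symbol-wise application of the reproduction functions $f_1, f_2$ at the decoder. Fix a joint distribution $p(q)p(u_1|y_1,q)p(u_2|y_2,q)$ and reproduction maps $f_1, f_2$. For a large blocklength $n$, I would first draw a time-sharing sequence $Q^n$ i.i.d.\ from $p(q)$ and reveal it to all parties. Then, conditionally on $Q^n$, independently generate two codebooks: for $i=1,2$, produce $2^{n\tilde{R}_i}$ codewords $U_i^n(\ell_i)$ i.i.d.\ from $\prod_j p(u_i|Q_j)$, and partition these codewords uniformly at random into $2^{nR_i}$ bins.

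Encoding: given $Y_i^n$, encoder $i$ searches for an index $\ell_i$ such that $(U_i^n(\ell_i), Y_i^n, Q^n)$ lies in the (strongly/robustly) jointly typical set, and transmits the bin index $m_i$ containing $\ell_i$. By the covering lemma, such an $\ell_i$ exists with probability tending to one provided $\tilde{R}_i > I(U_i; Y_i|Q) + \delta$. Decoding: given $(m_1, m_2)$, the decoder searches for a unique pair $(\ell_1, \ell_2)$ with $\ell_i$ in bin $m_i$ such that $(U_1^n(\ell_1), U_2^n(\ell_2), Q^n)$ is jointly typical with respect to the marginal $p(u_1,u_2|q) = \sum_{y_1,y_2} p(y_1,y_2)p(u_1|y_1,q)p(u_2|y_2,q)$. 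A standard packing-lemma / mutual-covering analysis (in the style of distributed lossy coding) shows that the error probability vanishes provided
\begin{align*}
\tilde{R}_1 - R_1 &< I(U_1; U_2|Q) - \delta, \\
\tilde{R}_2 - R_2 &< I(U_2; U_1|Q) - \delta, \\
(\tilde{R}_1 - R_1) + (\tilde{R}_2 - R_2) &< I(U_1; U_2|Q) + I(U_1,U_2; \emptyset|Q) - \delta,
\end{align*}
and, crucially, that $(Y_1^n, Y_2^n, U_1^n, U_2^n, Q^n)$ is jointly typical with respect to the intended distribution (this last point is the Markov lemma, invoked using the long Markov chain $U_1 \leftrightarrow Y_1 \leftrightarrow Y_2 \leftrightarrow U_2$ given $Q$). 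Eliminating $\tilde{R}_1$ and $\tilde{R}_2$ via Fourier--Motzkin then recovers the three rate inequalities claimed in the proposition.

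Once correct decoding of $(U_1^n, U_2^n)$ is established, I would form reconstructions symbol-wise by $\hat{Y}_{i,j} = f_i(U_{1,j}(\hat{\ell}_1), U_{2,j}(\hat{\ell}_2), Q_j)$. The typical average lemma applied to the jointly typical tuple $(Y_i^n, U_1^n, U_2^n, Q^n)$ then yields
\begin{align*}
\frac{1}{n}\sum_{j=1}^n d(Y_{i,j}, f_i(U_{1,j}, U_{2,j}, Q_j)) \le \mathbb{E}\left[d(Y_i, f_i(U_1, U_2, Q))\right] + \varepsilon
\end{align*}
with high probability, after handling the boundedness issue caused by the logarithmic loss being potentially unbounded (since $\hat{y}_i(y_i)$ may be arbitrarily small). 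This is handled by restricting attention, without loss, to reproduction functions $f_i$ taking values in the interior of the probability simplex, which makes the per-letter distortion bounded on the finite alphabet and yields the bound by expected distortion on the error event.

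The main technical obstacle is two-fold. First, the Markov lemma step: we need to verify that the independently generated quantizations $U_1^n, U_2^n$ actually produce a jointly typical tuple with the sources, and this requires the conditional-independence structure $p(u_1|y_1,q)p(u_2|y_2,q)$ together with a careful robust-typicality argument. Second, the unboundedness of logarithmic loss requires a truncation or bounded-surrogate argument to convert the joint typicality of $(Y_i^n, U_1^n, U_2^n, Q^n)$ into an expected-distortion bound; one standard fix is to mix $f_i$ slightly with the uniform distribution on $\mathcal{Y}_i$, which increases the expected distortion by at most $O(\delta)$ while rendering $d(Y_i, f_i(\cdot))$ uniformly bounded by $\log(|\mathcal{Y}_i|/\delta)$. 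All remaining steps are routine and can be found in a standard treatment such as \cite{bib:ElGamalYHKim2012}, which is precisely why the authors choose to omit them.
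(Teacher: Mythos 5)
Your proposal is the standard Berger--Tung quantize-and-bin achievability argument, which is precisely the proof the paper omits as ``a standard exercise in information theory'' (deferring to \cite{bib:ElGamalYHKim2012}); your covering/packing/Markov-lemma conditions Fourier--Motzkin correctly to the three stated rate inequalities, and your extra care with the unboundedness of logarithmic loss (mixing $f_i$ with the uniform distribution on $\mathcal{Y}_i$ so the per-letter loss is bounded by $\log(|\mathcal{Y}_i|/\delta)$) is the right way to make the typical-average step rigorous for this distortion measure. The only cosmetic oddity is the vacuous term $I(U_1,U_2;\emptyset|Q)=0$ in your sum-rate packing condition, which correctly reduces to $(\tilde{R}_1-R_1)+(\tilde{R}_2-R_2)<I(U_1;U_2|Q)-\delta$.
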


\begin{proof}[Proof of Theorem \ref{thm:MTSCachv}]

To prove the theorem, we simply apply Proposition \ref{prop:BTMTSC} with the reproduction functions $
f_i(U_1,U_2,Q):=\Pr\left[Y_i=y_i| U_1,U_2,Q\right]$.\end{proof}

Hence, we again see that our inner bound $\mathcal{RD}^i\subseteq \overline{\mathcal{RD}}^{\star}$ is nothing more than the Berger-Tung inner bound specialized to the setting when distortion is measured under logarithmic loss.

\subsection{A Matching Outer Bound} \label{subsec:MTSouterBound}

The main result of this paper is the following theorem.
\begin{theorem}\label{thm:MTSCregionDesc}
$\mathcal{RD}^i=\overline{\mathcal{RD}}^{\star}$.
\end{theorem}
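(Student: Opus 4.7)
The achievability $\mathcal{RD}^i \subseteq \overline{\mathcal{RD}}^\star$ is given by Theorem \ref{thm:MTSCachv}, so only the converse $\overline{\mathcal{RD}}^\star \subseteq \mathcal{RD}^i$ remains. My plan is to apply the CEO converse (Theorem \ref{thm:CEOconv}) to the artificial CEO problem whose hidden source is the pair $X = (Y_1, Y_2)$. The Markov chain $Y_1 \leftrightarrow X \leftrightarrow Y_2$ holds trivially in this case, so the CEO problem is legitimate. Because logarithmic loss is additive over product estimates, $d((y_1,y_2), \hat{y}_1 \otimes \hat{y}_2) = d(y_1,\hat{y}_1) + d(y_2,\hat{y}_2)$, any MTSC code achieving $(R_1, R_2, D_1, D_2)$ induces a CEO code achieving $(R_1, R_2, D_1 + D_2)$ simply by having the decoder output the product of its two marginal estimates.

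I would then rerun the proof of Theorem \ref{thm:CEOconv} step-by-step on this induced code, using the canonical auxiliaries $U_{i,j} = (F_i, Y_i^{j-1})$, $Q_j = (Y_1^n, Y_2^n) \setminus (Y_{1,j}, Y_{2,j})$, and $D = D_1 + D_2$. After time-sharing this produces $R_1 \geq H(Y_1, Y_2 | U_2, Q) - (D_1 + D_2)$, the symmetric statement for $R_2$, the sum-rate inequality $R_1 + R_2 \geq H(Y_1, Y_2) - (D_1 + D_2)$, and $D_1 + D_2 \geq H(Y_1, Y_2 | U_1, U_2, Q)$. By construction, conditioned on $Q_j$ each $U_{i,j}$ reduces to a deterministic function of $Y_{i,j}$ alone, so the long Markov chain $U_1 \leftrightarrow Y_1 \leftrightarrow Y_2 \leftrightarrow U_2 \mid Q$ required by $\mathcal{RD}^i$ holds automatically, and $Q$ is independent of $(Y_1, Y_2)$ by the i.i.d.\ source assumption. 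The missing individual distortion bounds are then obtained from the MTSC analog of Lemma \ref{lem:minDistortion}: $n D_i \geq H(Y_i^n | F_1, F_2) \geq \sum_j H(Y_{i,j} | U_{1,j}, U_{2,j}, Q_j) = n H(Y_i | U_1, U_2, Q)$, where the middle step is \emph{conditioning reduces entropy} applied to the extra source coordinates already contained in $Q_j$.

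Each remaining inequality of $\mathcal{RD}^i$ then drops out of the above. Using $D_1 + D_2 \geq H(Y_1, Y_2 | U_1, U_2, Q)$, the individual-rate bound becomes $R_1 \geq H(Y_1, Y_2 | U_2, Q) - H(Y_1, Y_2 | U_1, U_2, Q) = I(U_1; Y_1, Y_2 | U_2, Q) \geq I(Y_1; U_1 | U_2, Q)$, with an analogous derivation for $R_2$; the sum-rate bound similarly becomes $R_1 + R_2 \geq H(Y_1, Y_2 | Q) - H(Y_1, Y_2 | U_1, U_2, Q) = I(U_1, U_2; Y_1, Y_2 | Q)$. Cardinalities are then bounded by the same Carath\'eodory-type argument invoked for Theorem \ref{thm:CEOregion} via Appendix \ref{app:cardBounds}. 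The main obstacle --- what the introduction calls the CEO ``tuning argument'' --- is producing a single auxiliary triple $(U_1, U_2, Q)$ that simultaneously verifies \emph{all} of the rate, distortion, and Markov constraints of $\mathcal{RD}^i$, rather than patching together different auxiliaries for each inequality. My choice of the ``fat'' conditioning variable $Q_j$ threads this needle precisely because it is large enough (i) to collapse each $U_{i,j}$ to a function of $Y_{i,j}$ alone given $Q_j$, making the long Markov chain exact, and (ii) to let conditioning-reduces-entropy single-letterize the block distortion bound, while (iii) remaining compatible with the CEO converse derivation since $U_{i,j}$ is still a function of $Y_i^n$ alone.
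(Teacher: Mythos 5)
Your plan has a genuine gap in the step that recovers the individual rate bounds. Applying Theorem~\ref{thm:CEOconv} to the artificial CEO problem with $X = (Y_1,Y_2)$ and $D = D_1 + D_2$ yields
\begin{align*}
R_1 &\geq H(Y_1,Y_2|U_2,Q) - (D_1+D_2),\\
D_1+D_2 &\geq H(Y_1,Y_2|U_1,U_2,Q),
\end{align*}
but these two inequalities do \emph{not} combine to give $R_1 \geq H(Y_1,Y_2|U_2,Q) - H(Y_1,Y_2|U_1,U_2,Q)$. Since $D_1+D_2 \geq H(Y_1,Y_2|U_1,U_2,Q)$, you have $-(D_1+D_2) \leq -H(Y_1,Y_2|U_1,U_2,Q)$, so the quantity you are substituting for makes the right-hand side \emph{larger}, not smaller; $R_1$ being at least a small quantity says nothing about $R_1$ being at least a bigger one. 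This is exactly the sense in which the CEO problem with $X=(Y_1,Y_2)$ is too coarse: it is what Theorem~\ref{thm:SWminusD} characterizes, and the resulting region constrains only the sum $D_1+D_2$, which is a strictly weaker description than the MTSC region.

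One might hope to repair this by avoiding the $D$-substitution and rerunning the chain with the exact quantity $\frac{1}{n}H(X^n\mid F_1,F_2)$, which gives $nR_1 \geq I(X^n;F_1\mid F_2)$. But even that does not single-letterize to the desired bound $\sum_j I(Y_{1,j};U_{1,j}\mid U_{2,j},Q_j)$ with your ``fat'' $Q_j=(Y_1^n,Y_2^n)\setminus(Y_{1,j},Y_{2,j})$. A concrete counterexample: let $Y_1,Y_2$ be independent i.i.d.\ uniform bits, $F_1=\bigoplus_j Y_{1,j}$, $F_2$ constant. Then $I(X^n;F_1\mid F_2)=H(F_1)\leq 1$ bit, whereas for each $j$, $I(Y_{1,j};U_{1,j}\mid U_{2,j},Q_j) = H(F_1 \mid F_2, X^{j-1},X_{j+1}^n) = H(Y_{1,j}) = 1$, so the single-letter sum is $n$. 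Your auxiliaries do satisfy the two individual distortion bounds (that part of your argument is sound), but they violate the rate bounds, precisely because conditioning on everything but one time index makes $U_{1,j}$ far too informative given $Q_j$. The paper avoids this in two different ways: the main proof sets $X=(Y_B,B)$ so that $X_j$ stays a single-letter random variable, applies the already-established \emph{region} theorem (Theorem~\ref{thm:CEOregion}) to obtain the inner-bound form of the rate inequalities directly, and then reconciles the two individual distortion constraints via the tuning Lemma~\ref{lem:amplifyConvex}; the alternative proof in Appendix~\ref{app:strengthenedConverse} uses thin Csisz\'{a}r-sum auxiliaries $Q_j=(Y_2^{j-1},Y_{1,j+1}^n)$ and explicitly couples the two distortions through the Csisz\'{a}r sum identity.
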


\begin{proof}
As before, we note that the cardinality bounds on the alphabets in the definition of $\mathcal{RD}^{i}$ can be imposed without any loss of generality.  This is discussed in detail in Appendix \ref{app:cardBounds}.

Assume $(R_1,R_2,D_1,D_2)$ is strict-sense achievable.  Observe that proving that $(R_1,R_2,D_1,D_2)\in \mathcal{RD}^i$ will prove the theorem, since $\mathcal{RD}^i\subseteq \overline{\mathcal{RD}}^{\star}$ and $\overline{\mathcal{RD}}^{\star}$ is closed by definition.

For convenience, define $\mathcal{P}(R_1,R_2)$ to be the set of joint distributions of the form 
\begin{align*}
p(y_1,y_2)p(u_1|y_1,q)p(u_2|y_2,q)p(q)
\end{align*}
with $|\mathcal{U}_1|\leq |\mathcal{Y}_1|$, $|\mathcal{U}_2|\leq |\mathcal{Y}_2|$, and $|\mathcal{Q}|\leq 4$ satisfying 
\begin{align*}
R_1 &\geq I(U_1;Y_1|U_2,Q) \\
R_2 &\geq I(U_2;Y_2|U_1,Q)\\
R_1+R_2 &\geq I(U_1,U_2;Y_1,Y_2|Q).
\end{align*}
We remark that $\mathcal{P}(R_1,R_2)$ is compact.  We also note that it will suffice to show the existence of a joint distribution in $\mathcal{P}(R_1,R_2)$ satisfying $H(Y_1|U_1,U_2,Q)\leq D_1$ and $H(Y_2|U_1,U_2,Q)\leq D_2$ to prove that $(R_1,R_2,D_1,D_2)\in \mathcal{RD}^i$.

With foresight, consider random variable $X$ defined as follows
\begin{align}
X=\left\{ \begin{array}{ll}
(Y_1,1) & \mbox{with probability $t$}\\
(Y_2,2) & \mbox{with probability $1-t$.}
\end{array} \right. \label{eqn:XassignedY1Y2}
\end{align}
In other words, $X=(Y_B,B)$, where $B$ is a Bernoulli random variable independent of $Y_1,Y_2$.  Observe that $Y_1 \leftrightarrow X \leftrightarrow Y_2$ form a Markov chain, and thus, we are able to apply Theorem \ref{thm:CEOregion}.  

Since $(R_1,R_2,D_1,D_2)$ is strict-sense achievable, the decoder can construct reproductions $\hat{Y}_1^n,\hat{Y}_2^n$ satisfying 
\begin{align*}
\frac{1}{n}\sum_{j=1}^n \mathbb{E}d({Y}_{i,j},\hat{Y}_{i,j})\leq D_i \mbox{~for $i=1,2$.}
\end{align*}
Fix the encoding operations and set $\hat{X}_j\left((y_1,1)\right)=t\hat{Y}_{1,j}(y_1)$ and $\hat{X}_j\left((y_2,2)\right)=(1-t)\hat{Y}_{2,j}(y_2)$.  Then for the CEO problem defined by $(X,Y_1,Y_2)$:
\begin{align}
&\frac{1}{n}\sum_{j=1}^n \mathbb{E}d({X}_j,\hat{X}_j) \notag \\
&= \frac{t}{n}\sum_{j=1}^n \mathbb{E} \log\left(\frac{1}{t\hat{Y}_{1,j}(Y_{1,j})} \right)  +\frac{1-t}{n}\sum_{j=1}^n \mathbb{E} \log\left(\frac{1}{(1-t)\hat{Y}_{2,j}(Y_{2,j})} \right)\notag  \\
&=h_2(t)+\frac{t}{n}\sum_{j=1}^n \mathbb{E}d({Y}_{1,j},\hat{Y}_{1,j}) +\frac{1-t}{n}\sum_{j=1}^n \mathbb{E}d({Y}_{2,j},\hat{Y}_{2,j})\notag \\
&\leq h_2(t)+tD_1+(1-t)D_2\notag
\end{align}
where $h_2(t)$ is the binary entropy function.  Hence, for this CEO problem, distortion $h_2(t)+tD_1+(1-t)D_2$ is achievable and Theorem \ref{thm:CEOregion} yields a joint distribution\footnote{Henceforth, we use the superscript $(t)$ to explicitly denote the dependence of the auxiliary random variables on the distribution parametrized by $t$.} $P_t \in \mathcal{P}(R_1,R_2)$ satisfying 
\begin{align*}
h_2(t)+tD_1+(1-t)D_2 &\geq H(X|U_1^{(t)},U_2^{(t)},Q^{(t)})\\
&=h_2(t)+tH(Y_1|U_1^{(t)},U_2^{(t)},Q^{(t)})\\
&\quad + (1-t)H(Y_2|U_1^{(t)},U_2^{(t)},Q^{(t)}), 
\end{align*}
where the second equality follows by 
by definition of $X$ in \eqref{eqn:XassignedY1Y2}.  For convenience, define $H_1(P_t) \triangleq H(Y_1|U_1^{(t)},U_2^{(t)},Q^{(t)})$ and $H_2(P_t) \triangleq H(Y_2|U_1^{(t)},U_2^{(t)},Q^{(t)})$.  Note the following two facts:
\begin{enumerate}
\item By continuity of entropy, the functions $H_1(\cdot)$ and $H_2(\cdot)$ are continuous on the compact domain $\mathcal{P}(R_1,R_2)$.
\item The above argument proves the existence of a function $\varphi : [0,1] \rightarrow \mathcal{P}(R_1,R_2)$ which satisfies
\begin{align*}
t H_1(\varphi(t)) + (1-t)H_2(\varphi(t)) \leq tD_1 + (1-t)D_2 \mbox{~ for all $t\in[0,1]$.}
\end{align*}
\end{enumerate}

These two facts satisfy the requirements of Lemma \ref{lem:amplifyConvex} (see Appendix \ref{app:convexityLemma}), and hence there exists $P_{t_1}\in\mathcal{P}(R_1,R_2)$, $P_{t_2}\in\mathcal{P}(R_1,R_2)$, and $\theta\in[0,1]$ for which
\begin{align*}
\theta H_1(P_{t_1}) + (1-\theta) H_1(P_{t_2}) &\leq D_1 \\
\theta H_2(P_{t_1}) + (1-\theta) H_2(P_{t_2}) &\leq D_2.
\end{align*}

Timesharing\footnote{The timesharing scheme can be embedded in the timesharing variable $Q$, increasing the cardinality of $\mathcal{Q}$ by a factor of two.} between distributions $P_{t_1}$ and $P_{t_2}$ with probabilities $\theta$ and $(1-\theta)$, respectively, yields a distribution $P^*\in \mathcal{P}(R_1,R_2)$ which satisfies $H_1(P^*)\leq D_1$ and $ H_2(P^*) \leq D_2$.
This proves the theorem.
\end{proof}


\subsection{A stronger converse} \label{subsec:MTSCconverseRemarks}
For the CEO problem, we are able to obtain a stronger converse result as discussed in Section \ref{subsec:ceoStrongerConverse}.  We can obtain a similar result for the multiterminal source coding problem.  Indeed, the converse result we just proved continues to hold even when $\hat{Y}_i^n$ is allowed to be any probability measure on $\mathcal{Y}_i^n$, rather than a product distribution.  The proof of this fact is somewhat involved and can be found in Appendix \ref{app:strengthenedConverse}.

We note that the proof of this strengthened converse result (i.e., Theorem \ref{thm:strongConverse} in Appendix \ref{app:strengthenedConverse}) offers a direct proof of the converse of Theorem \ref{thm:MTSCregionDesc}, and as such we do not require a CEO result (Theorem \ref{thm:CEOregion}) or a ``black box" tuning argument (Lemma \ref{lem:amplifyConvex}).  At the heart of this alternative proof lies the Csisz\'{a}r sum identity (and a careful choice of auxiliary random variables) which provides a coupling between the attainable distortions for each source.  In the original proof of Theorem \ref{thm:MTSCregionDesc}, this coupling is accomplished by the tuning argument through Lemma \ref{lem:amplifyConvex}.

Interestingly, the two proofs are similar in spirit, with the key differences being the use of the Csisz\'{a}r sum identity versus the tuning argument.  Intuitively, the original tuning argument allows a ``clumsier" choice of auxiliary random variables which leads to a more elegant and transparent proof, but appears incapable of establishing the strengthened converse.  On the other hand, applying the Csisz\'{a}r sum identity requires a very careful choice of auxiliary random variables which, in turn,  affords a finer degree of control over various quantities. 

\subsection{An Example: The Daily Double} \label{subsec:exHorse}
The \emph{Daily Double} is a single bet that links together wagers on the winners of two consecutive horse races.  Winning the Daily Double is dependent on both wagers winning together.  In general, the outcomes of two consecutive races can be correlated (e.g. due to track conditions), so a gambler can potentially use this information to maximize his expected winnings.  Let $\mathcal{Y}_1$ and $\mathcal{Y}_2$ be the set of horses running in the first and second races respectively.  If horses $y_1$ and $y_2$ win their respective races, then the payoff is  $o(y_1,y_2)$ dollars for each dollar invested in outcome $(Y_1,Y_2)=(y_1,y_2)$.

There are two betting strategies one can follow:
\begin{enumerate}
\item The gambler can wager a fraction $b_1(y_1)$ of his wealth on horse $y_1$ winning the first race and parlay his winnings by betting a fraction $b_2(y_2)$ of his wealth on horse $y_2$ winning the second race.  In this case, the gambler's wealth relative is $b_1(Y_1)b_2(Y_2)o(Y_1,Y_2)$ upon learning the outcome of the Daily Double.  We refer to this betting strategy as the \emph{product-wager}.
\item The gambler can wager a fraction $b(y_1,y_2)$ of his wealth on horses $(y_1,y_2)$ winning the first and second races, respectively. In this case, the gambler's wealth relative is $b(Y_1,Y_2)o(Y_1,Y_2)$ upon learning the outcome of the Daily Double.  We refer to this betting strategy as the \emph{joint-wager}.
\end{enumerate}
Clearly the joint-wager  includes the product-wager as a special case.  However, the product-wager requires less effort to place, so the question is: how do the two betting strategies compare? 

To make things interesting, suppose the gamblers have access to rate-limited information about the first and second race outcomes at rates $R_1,R_2$ respectively.  Further, assume that $R_1\leq H(Y_1)$, $R_2\leq H(Y_2)$, and $R_1+R_2\leq H(Y_1,Y_2)$. For $(R_1,R_2)$ and $p(y_1,y_2)$ given, let $\mathcal{P}(R_1,R_2)$ denote the set of joint pmf's of the form
\begin{align*}
p(q,y_1,y_2,u_1,u_2)=p(q)p(y_1,y_2)p(u_1|y_1,q)p(u_1|y_1,q)
\end{align*}
which satisfy
\begin{align*}
R_1 &\geq I(Y_1;U_1|U_2,Q) \\
R_2 &\geq I(Y_2;U_2|U_1,Q)\\
R_1+R_2 &\geq I(Y_1,Y_2;U_1,U_2|Q)
\end{align*}
for  alphabets $\mathcal{U}_1,\mathcal{U}_2,\mathcal{Q}$ satisfying $|\mathcal{U}_i|\leq |\mathcal{Y}_i|$ and $|Q|\leq 5$. 

Typically, the quality of a bet is measured by the associated doubling rate (cf. \cite{bib:CoverThomas2006}). Theorem \ref{thm:MTSCregionDesc} implies that the optimal doubling rate for the product-wager is given by:
\begin{align*}
W_{\mbox{p-w}}^*(p(y_1,y_2))&=\sum_{y_1,y_2} p(y_1,y_2) \log b^*_1(y_1) b^*_2(y_2) o(y_1,y_2)\\
&=\mathbb{E} \log o(Y_1,Y_2) - \inf_{p\in\mathcal{P}(R_1,R_2)}\left\{H(Y_1|U_1,U_2,Q) + H(Y_2|U_1,U_2,Q)\right\}.
\end{align*}
Likewise, Theorem \ref{thm:SWminusD} implies that the optimal doubling rate for the joint-wager is given by:
\begin{align*}
W_{\mbox{j-w}}^*(p(y_1,y_2))&=\sum_{y_1,y_2} p(y_1,y_2) \log b^*(y_1,y_2) o(y_1,y_2)\\
&=\mathbb{E} \log o(Y_1,Y_2) +\min\{R_1-H(Y_1|Y_2),R_2-H(Y_2|Y_1), \\
&\quad\quad\quad\quad\quad\quad\quad\quad\quad\quad\quad\quad   R_1+R_2-H(Y_1,Y_2) \}.
\end{align*}

It is important to note that we do not require the side informations to be the same for each type of wager, rather, the side informations are only provided at the same rates.  Thus, the gambler placing the joint-wager receives side information at rates $(R_1,R_2)$ that maximizes his doubling rate, while the gambler placing the product-wager receives (potentially different) side information at rates $(R_1,R_2)$ that maximizes his doubling rate.  However, as we will see shortly, for any rates $(R_1,R_2)$, there always exists rate-limited side information which simultaneously allows each type of gambler to attain their maximum doubling rate.

By combining the expressions for $W_{\mbox{p-w}}^* (  p(y_1,y_2))$ and $W_{\mbox{j-w}}^*(p(y_1,y_2))$, we find that the difference in doubling rates is given by:
\begin{align}
&\Delta(R_1,R_2)
= W_{\mbox{j-w}}^* (  p(y_1,y_2)) -W_{\mbox{p-w}}^*(p(y_1,y_2))\notag\\
&= \min\Big\{R_1-H(Y_1|Y_2),R_2-H(Y_2|Y_1),R_1+R_2-H(Y_1,Y_2) \Big\}\notag\\
&\quad + \inf_{p\in\mathcal{P}(R_1,R_2)}\left\{H(Y_1|U_1,U_2,Q) + H(Y_2|U_1,U_2,Q)\right\}\label{eqn:Inf}\\
&= \hspace{-10pt} \inf_{p \in \mathcal{P}(R_1,R_2)} \hspace{-5pt} \min\Big\{ R_1-I(Y_1;U_1|U_2,Q)+I(Y_1;Y_2) -I(Y_1;U_2,Q) + H(Y_2|U_1,U_2,Q),\notag\\
&\quad R_2-I(Y_2;U_2|U_1,Q)+I(Y_2;Y_1)-I(Y_2;U_1,Q) + H(Y_1|U_1,U_2,Q),\notag\\
&\quad R_1+R_2-I(Y_1,Y_2;U_1,U_2|Q) + I(Y_1;Y_2|U_1,U_2,Q)\Big\}\notag\\
&= \hspace{-10pt} \inf_{p \in \mathcal{P}(R_1,R_2)} I(Y_1;Y_2|U_1,U_2,Q).\label{eqn:complicatedEqual}
\end{align}
The final equality \eqref{eqn:complicatedEqual} follows since 
\begin{itemize}
\item $R_1 \geq I(Y_1;U_1|U_2,Q)$ and $R_2 \geq I(Y_2;U_2|U_1,Q)$ for any $p\in\mathcal{P}(R_1,R_2)$.
\item $I(Y_2;Y_1)\geq I(Y_2;U_1,Q)$ and $I(Y_1;Y_2)\geq I(Y_1;U_2,Q)$ for any $p\in\mathcal{P}(R_1,R_2)$ by the data processing inequality.
\item The infimum in \eqref{eqn:Inf} is attained by a $p\in \mathcal{P}(R_1,R_2)$ satisfying $R_1+R_2=I(Y_1,Y_2;U_1,U_2|Q)$.  See Lemma \ref{lem:dailyDouble} in Appendix \ref{app:dailyDoubleLemma} for details.
\item By definition of conditional mutual information, 
\begin{align*}
H(Y_i|U_1,U_2,Q)\geq I(Y_1;Y_2|U_1,U_2,Q)
\end{align*}
for $i=1,2$.
\end{itemize}

Let $p^*\in \mathcal{P}(R_1,R_2)$ be the distribution that attains the infimum in \eqref{eqn:Inf} (such a $p^*$ always exists),  then \eqref{eqn:complicatedEqual} yields
\begin{align*}
& W_{\mbox{j-w}}^*  (  p(y_1,y_2)) -W_{\mbox{p-w}}^*(p(y_1,y_2)) \\
&= \sum_{u_1,u_2,q}p^*(u_1,u_2,q) \sum_{y_1,y_2} p^*(y_1,y_2|u_1,u_2,q) 
\log \frac{p^*(y_1,y_2|u_1,u_2,q)}{p^*(y_1|u_1,u_2,q)p^*(y_2|u_1,u_2,q)}\\
&= \mathbb{E}_{p^*} \log o(Y_1,Y_2) p^*(Y_1,Y_2|U_1,U_2,Q)\\
&\quad -\mathbb{E}_{p^*} \log o(Y_1,Y_2) p^*(Y_1|U_1,U_2,Q)p^*(Y_2|U_1,U_2,Q).
\end{align*}
Hence, we can interpret the auxiliary random variables corresponding to $p^*$ as optimal rate-limited side informations for \emph{both} betting strategies.  Moreover, optimal bets for each strategy are given by 
\begin{enumerate}
\item $b^*(y_1,y_2) = p^*(y_1,y_2|u_1,u_2,q)$ for the joint-wager, and
\item $b_1^*(y_1)=p^*(y_1|u_1,u_2,q), ~b_2^*(y_2)=p^*(y_2|u_1,u_2,q)$ for the product-wager.
\end{enumerate}

Since $\mathcal{P}(R_1,R_2) \subseteq \mathcal{P}(R_1',R_2')$ for $R_1\leq R_1'$ and $R_2\leq R_2'$, the function $\Delta(R_1,R_2)$ is nonincreasing in $R_1$ and $R_2$.  Thus, the benefits of using the joint-wager over the product-wager diminish in the amount of side-information available.  It is also not difficult to show that $\Delta(R_1,R_2)$ is jointly convex in $(R_1,R_2)$.

Furthermore, for rate-pairs $(R_1,R_2)$ and $(R_1',R_2')$ satisfying $R_1 < R_1'$ and $R_2 < R_2'$, there exist corresponding optimal joint- and product-wagers $b^{*}(y_1,y_2)$ and $b_1^{*}(y_1)b_2^{*}(y_2)$, and $b^{*'}(y_1,y_2)$ and $b_1^{*'}(y_1)b_2^{*'}(y_2)$, respectively, satisfying 
\begin{align}
D\Big(b^{*'}(y_1,y_2)\Big| \Big| b_1^{*'}(y_1)b_2^{*'}(y_2)\Big) < D\Big(b^{*}(y_1,y_2)\Big| \Big| b_1^{*}(y_1)b_2^{*}(y_2)\Big).\label{eqn:strictIneqD}
\end{align}
So, roughly speaking, the joint-wager and product-wager look ``more alike" as the amount of side information is increased.  
The proof of the strict inequality in \eqref{eqn:strictIneqD} can be inferred from the proof of Lemma \ref{lem:dailyDouble} in Appendix \ref{app:dailyDoubleLemma}.


To conclude this example, we note that $\Delta(R_1,R_2)$ enjoys a great deal of symmetry near the origin in the sense that side information from either encoder contributes approximately the same amount to the improvement of the product-wager. We state this formally as a theorem:

\begin{theorem}
Define $\rho_m(Y_1,Y_2)$ to be the Hirschfeld-Gebelein-R\'{e}nyi maximal correlation between random variables $Y_1$ and $Y_2$.  Then,  $\Delta(R_1,R_2) \geq I(Y_1;Y_2)-\rho^2_m(Y_1,Y_2)\cdot(R_1+R_2)$. Moreover, this bound is tight as $(R_1,R_2)\rightarrow (0,0)$.
\end{theorem}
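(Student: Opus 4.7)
The plan is to reduce the claim to an inequality that follows from the Ahlswede--G\'{a}cs strong data processing bound, via an algebraic identity relating the ``information gain'' $\phi(p) \triangleq I(Y_1;Y_2) - I(Y_1;Y_2|U_1,U_2,Q)$ to the mutual informations that appear in the constraints defining $\mathcal{P}(R_1,R_2)$. Throughout, fix $p \in \mathcal{P}(R_1, R_2)$.

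First, I exploit the long Markov chain $U_1 \leftrightarrow Y_1 \leftrightarrow Y_2 \leftrightarrow U_2$ conditional on $Q$. This chain implies the two conditional independencies $U_1 \perp U_2 \,|\, Y_1, Q$ and $U_1 \perp U_2 \,|\, Y_2, Q$. Standard chain-rule manipulations (using $I(U_2;Y_1|U_1,Q) = I(U_2;Y_1|Q) - I(U_1;U_2|Q)$, which follows from the first independency) then yield the two identities
\begin{align*}
\phi(p) &= I(U_1;Y_2|Q) + I(U_2;Y_1|Q) - I(U_1;U_2|Q),\\
I(U_1, U_2; Y_1, Y_2|Q) &= I(U_1;Y_1|Q) + I(U_2;Y_2|Q) - I(U_1;U_2|Q).
\end{align*}

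Next, because $Q$ is independent of $(Y_1,Y_2)$, the maximal correlation of $(Y_1,Y_2)$ under $p(y_1,y_2|Q=q)$ equals $\rho_m(Y_1,Y_2)$ for every $q$. The Ahlswede--G\'{a}cs strong data processing inequality, applied conditionally on $Q$ to the Markov chains $U_1 \to Y_1 \to Y_2$ and $U_2 \to Y_2 \to Y_1$, therefore gives
\begin{align*}
I(U_1;Y_2|Q) \leq \rho_m^2 I(U_1;Y_1|Q), \qquad I(U_2;Y_1|Q) \leq \rho_m^2 I(U_2;Y_2|Q).
\end{align*}
Combining these with the sum-rate constraint $R_1+R_2 \geq I(U_1,U_2;Y_1,Y_2|Q)$ and the second identity above yields $I(U_1;Y_1|Q)+I(U_2;Y_2|Q) \leq R_1+R_2+I(U_1;U_2|Q)$. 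Substituting into the first identity:
\begin{align*}
\phi(p) &\leq \rho_m^2 \bigl[I(U_1;Y_1|Q) + I(U_2;Y_2|Q)\bigr] - I(U_1;U_2|Q)\\
&\leq \rho_m^2 (R_1+R_2) - (1-\rho_m^2)\, I(U_1;U_2|Q) \leq \rho_m^2 (R_1+R_2),
\end{align*}
where the last step uses $\rho_m^2 \leq 1$. Taking the infimum over $p \in \mathcal{P}(R_1,R_2)$ delivers $\Delta(R_1,R_2) \geq I(Y_1;Y_2) - \rho_m^2(R_1+R_2)$.

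For asymptotic tightness as $(R_1,R_2)\to(0,0)$, I would use the standard maximal-correlation perturbation. Let $f^*, g^*$ be the unit-norm, mean-zero real functions achieving $\mathbb{E}[f^*(Y_1) g^*(Y_2)] = \rho_m$. For small $\epsilon > 0$, take $Q$ trivial and let $U_1, U_2$ be binary with $\Pr(U_1=1|Y_1) = \tfrac{1}{2}+\tfrac{\epsilon}{2} f^*(Y_1)$ and $\Pr(U_2=1|Y_2) = \tfrac{1}{2}+\tfrac{\epsilon}{2} g^*(Y_2)$, rescaled as needed to remain valid probabilities. A direct Taylor expansion (using that $f^*, g^*$ are eigenfunctions of the conditional-expectation operators with eigenvalue $\rho_m$) shows $I(U_i;Y_i) = \Theta(\epsilon^2)$, $I(U_1;Y_2) = \rho_m^2 I(U_1;Y_1) + o(\epsilon^2)$ and symmetrically, and $I(U_1;U_2) = O(\epsilon^4)$; matching rates $R_i = I(U_i;Y_i|U_{3-i},Q) = I(U_i;Y_i)+o(\epsilon^2)$ shows every slack inequality in the proof above is saturated at first order, so $\phi/(R_1+R_2) \to \rho_m^2$. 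The main obstacle is the identity step: the cancellation of $I(U_1;U_2|Q)$ between the $\phi$-identity and the sum-rate identity is the decisive simplification that makes the bound clean, and the tightness step requires a careful first-order expansion to ensure that the interaction term $I(U_1;U_2)$ is of strictly higher order than the rates.
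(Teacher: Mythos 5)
Your algebraic identities check out: under the long Markov chain $U_1 \leftrightarrow Y_1 \leftrightarrow Y_2 \leftrightarrow U_2$ (conditional on $Q$), both
\begin{align*}
I(Y_1;Y_2)-I(Y_1;Y_2|U_1,U_2,Q) &= I(U_1;Y_2|Q)+I(U_2;Y_1|Q)-I(U_1;U_2|Q),\\
I(U_1,U_2;Y_1,Y_2|Q) &= I(U_1;Y_1|Q)+I(U_2;Y_2|Q)-I(U_1;U_2|Q)
\end{align*}
hold, and the cancellation of $I(U_1;U_2|Q)$ that you highlight is real. However, the pivotal step---invoking a strong data processing inequality with constant $\rho_m^2$---is false in general. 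The strong data processing constant $s^*(Y_1;Y_2) \triangleq \sup\{I(U;Y_2)/I(U;Y_1) : U\to Y_1\to Y_2\}$ satisfies $s^* \geq \rho_m^2(Y_1,Y_2)$, and the inequality can be strict; the maximal correlation squared is only the \emph{slope at the origin} of the information bottleneck curve, which is exactly the content of Erkip's result cited as \eqref{eqn:grad} in the paper. For joint distributions with $s^* > \rho_m^2$, your inequality $I(U_1;Y_2|Q)\leq \rho_m^2\, I(U_1;Y_1|Q)$ fails, and your chain of estimates only delivers $\Delta(R_1,R_2) \geq I(Y_1;Y_2) - s^*(R_1+R_2)$, which is strictly weaker than the claim.

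The paper's proof is designed precisely to avoid this trap. It never asserts a global SDPI with constant $\rho_m^2$. Instead, it first establishes the one-dimensional tangent-line bound $\Delta(0,R) \geq I(Y_1;Y_2) - \rho_m^2 R$, which is legitimate because it only uses Erkip's derivative-at-zero result together with the \emph{convexity} of $R \mapsto \Delta(0,R)$, i.e., a convex function dominates its tangent at $0$. Then, in the two-variable chain (equations \eqref{eqn:defnDelta}--\eqref{eqn:lastStep}), it invokes Lemma \ref{lem:dailyDouble} to select a minimizer $p^*$ saturating the sum-rate constraint, and controls the cross-terms $I_{p^*}(Y_1;U_2|Q)$ and $I_{p^*}(Y_2;U_1|Q)$ by passing them through the function $\Delta(0,\cdot)$ evaluated at the amount of rate actually spent on the relevant $U_i$, rather than through an SDPI applied to $U_i$ directly. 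That detour is what makes the coefficient $\rho_m^2$, rather than $s^*$, appear in the final bound. Your sketch of tightness at the origin via the eigenfunction perturbation $\Pr(U_i=1\,|\,Y_i) = \tfrac12 + \tfrac{\epsilon}{2}f^*(Y_i)$ is on the right track and mirrors the standard argument, but the lower-bound half of the theorem needs the convexity/tangent route, not a raw SDPI.
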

\begin{proof}
If $R_2=0$, then it is readily verified that $\Delta(R_1,0)$ can be expressed as follows:
\begin{align*}
\Delta(R_1,0) & =I(Y_1;Y_2)-\max_{\substack{p(u_1|y_1): I(U_1;Y_1)=R_1,\\
 U_1\rightarrow Y_1\rightarrow Y_2,~ |\mathcal{U}_1|\leq |\mathcal{Y}_1|+1}} I(U_1;Y_2).
\end{align*}
By symmetry:
\begin{align*}
\Delta(0,R_2) & =I(Y_1;Y_2)-\max_{\substack{p(u_2|y_2): I(U_2;Y_2)=R_2,\\
 U_2\rightarrow Y_2\rightarrow Y_1,~ |\mathcal{U}_2|\leq |\mathcal{Y}_2|+1}} I(U_2;Y_1).
\end{align*}
Here, we can apply a result of Erkip \cite[Theorem 10]{bib:Erkip1996} to evaluate the gradient of $\Delta(R_1,R_2)$ at $(R_1,R_2)=(0,0)$:
\begin{align}
\left. \frac{\partial}{\partial R_1} \Delta(R_1,R_2) \right|_{(R_1,R_2)=(0,0)} = \left. \frac{\partial}{\partial R_2} \Delta(R_1,R_2) \right|_{(R_1,R_2)=(0,0)} = -\rho^2_m(Y_1,Y_2). \label{eqn:grad}
\end{align}
Note, since $\Delta(R_1,0)$ and $\Delta(0,R_2)$ are each convex in their respective variable and $\Delta(0,0)=I(Y_1;Y_2)$, we have
\begin{align}
\Delta(R_1,0) &\geq I(Y_1;Y_2) -\rho^2_m(Y_1,Y_2)R_1 \notag\\
\Delta(0,R_2) &\geq I(Y_1;Y_2) -\rho^2_m(Y_1,Y_2)R_2 \label{eqn:sepCvx2}.
\end{align}

Taking this one step further, for $\nu_1,\nu_2>0$, we can evaluate the one-sided derivative:
\begin{align}
\lim_{\lambda \downarrow 0 } \frac{\Delta(\lambda \nu_1,\lambda \nu_2)-\Delta(0,0)}{\lambda} = -\rho_m^2(Y_1,Y_2) \cdot (\nu_1+\nu_2).
\label{eqn:directDeriv}
\end{align}
We remark that \eqref{eqn:directDeriv} does not follow immediately from \eqref{eqn:grad} since the point at which we are taking the derivatives (i.e., the origin) does not lie in an open neighborhood of the domain.  Nonetheless, the expected result holds.

Since $\Delta(R_1,R_2)$ is convex, we obtain an  upper bound on the one-sided derivative as follows:
\begin{align*}
\lim_{\lambda \downarrow 0 } \frac{\Delta(\lambda \nu_1,\lambda \nu_2)-\Delta(0,0)}{\lambda} &\leq \lim_{\lambda \downarrow 0 } \frac{\frac{1}{2} \Delta(2\lambda\nu_1,0)+\frac{1}{2} \Delta(0,2\lambda \nu_2)- \Delta(0,0;p)}{\lambda} \\
&= \frac{1}{2}\lim_{\lambda \downarrow 0 } \frac{\Delta(\lambda 2\nu_1,0)- \Delta(0,0)}{\lambda} \\
&\quad+ \frac{1}{2}\lim_{\lambda \downarrow 0 } \frac{\Delta(0,\lambda 2 \nu_2)- \Delta(0,0)}{\lambda}\\
&=-\rho_m^2(Y_1,Y_2) \cdot (\nu_1+\nu_2),
\end{align*}
where the final equality follows by \eqref{eqn:grad} and the positive homogeneity of the directional derivative.

Therefore, to complete the proof of \eqref{eqn:directDeriv}, it suffices to prove the lower bound 
\begin{align*}
\lim_{\lambda \downarrow 0 } \frac{\Delta(\lambda \nu_1,\lambda \nu_2)-\Delta(0,0)}{\lambda} \geq -\rho_m^2(Y_1,Y_2) \cdot (\nu_1+\nu_2).
\end{align*}
To this end, fix $\lambda,\nu_1,\nu_2>0$ and observe that
%
%
\begin{align}
& \frac{\Delta(\lambda \nu_1,\lambda \nu_2)-\Delta(0,0)}{\lambda} \notag \\
&\quad=
\frac{1}{\lambda}\inf_{p\in\mathcal{P}(\lambda \nu_1,\lambda \nu_2)}\Big\{ 
I(Y_1;Y_2|U_1,U_2|Q)-I(Y_1;Y_2)
\Big\}\label{eqn:defnDelta}\\
&\quad=
\frac{1}{\lambda}\inf_{p\in\mathcal{P}(\lambda \nu_1,\lambda \nu_2)}\Big\{ I(Y_1,Y_2;U_1,U_2|Q)-I(Y_1;U_1,U_2|Q)-I(Y_2;U_1,U_2|Q)\Big\}\notag\\
&\quad=(\nu_1+\nu_2)-\frac{1}{\lambda}\Big(I_{p^*}(Y_1;U_1,U_2|Q)+I_{p^*}(Y_2;U_1,U_2|Q)\Big)\label{eqn:ddLemmaCor} \\
&\quad= (\nu_1+\nu_2)-\frac{1}{\lambda}\Big(I_{p^*}(Y_1;U_1|U_2,Q)+I_{p^*}(Y_1;U_2|Q)\notag \\
&\quad\quad\quad\quad\quad\quad\quad\quad\quad +I_{p^*}(Y_2;U_2|U_1,Q)+I_{p^*}(Y_2;U_1|Q)\Big)\notag \\
&\quad \geq (\nu_1+\nu_2)-\rho_m^2(Y_1,Y_2)\left(  2 \nu_1 + 2 \nu_2  \right)\notag\\
&\quad\quad\quad -\frac{(1-\rho_m^2(Y_1,Y_2))}{\lambda}  \left(I_{p^*}(Y_1;U_1|U_2,Q) +I_{p^*}(Y_2;U_2|U_1,Q) \right) \label{eqn:trickyStep}\\
&\quad = -\rho_m^2(Y_1,Y_2)\left(  \nu_1 + \nu_2  \right) + (1-\rho_m^2(Y_1,Y_2))\left(  \nu_1 + \nu_2  \right)\notag\\
&\quad\quad\quad -\frac{(1-\rho_m^2(Y_1,Y_2))}{\lambda}  \left(I_{p^*}(Y_1;U_1|U_2,Q) +I_{p^*}(Y_2;U_2|U_1,Q) \right)\notag\\
&\quad\geq-\rho_m^2(Y_1,Y_2)\left(  \nu_1 + \nu_2  \right). \label{eqn:lastStep}
%
%
%
\end{align}
In the above string of inequalities
\begin{itemize}
\item \eqref{eqn:defnDelta} follows by definition of $\Delta(R_1,R_2)$.
\item Equality \eqref{eqn:ddLemmaCor} follows since Lemma \ref{lem:dailyDouble} guarantees that the infimum is attained in \eqref{eqn:defnDelta} for some $p^*\in \mathcal{P}(\lambda \nu_1,\lambda \nu_2)$ satisfying $I_{p^*}(Y_1,Y_2;U_1,U_2|Q)=\lambda(\nu_1+\nu_2)$.  Here, we write $I_{p^*}(Y_1,Y_2;U_1,U_2|Q)$ to denote the mutual information $I(Y_1,Y_2;U_1,U_2|Q)$ evaluated for the distribution $p^*$.
\item To see that \eqref{eqn:trickyStep} holds, note that
\begin{align*}
I_{p^*}(Y_2;U_2|Q) = \lambda \nu_1 + \lambda \nu_2 -I_{p^*}(Y_1;U_1|U_2,Q),
\end{align*}
and thus
\begin{align}
&I(Y_1;Y_2)-\rho_m^2(Y_1,Y_2)\left(  \lambda \nu_1 + \lambda \nu_2 -I_{p^*}(Y_1;U_1|U_2,Q) \right)\notag\\
&\quad\leq \Delta(0,\lambda \nu_1 + \lambda \nu_2 -I_{p^*}(Y_1;U_1|U_2,Q)) \label{eqn:Step1}\\
&\quad=I(Y_1;Y_2)-\max_{\substack{p(\tilde{u}_2|y_2): I(Y_2;\tilde{U}_2)\leq \lambda \nu_1 + \lambda \nu_2 -I_{p^*}(Y_1;U_1|U_2,Q), \\
 \tilde{U}_2\leftrightarrow Y_2\leftrightarrow Y_1}} I(\tilde{U}_2;Y_1)  \label{eqn:Step2} \\
&\quad\leq I(Y_1;Y_2)-I_{p^*}(Y_1;U_2|Q), \label{eqn:Step3}
\end{align}
which implies
\begin{align*}
-\rho_m^2(Y_1,Y_2) \left(  \lambda \nu_1 + \lambda \nu_2 -I_{p^*}(Y_1;U_1|U_2,Q) \right) \leq -I_{p^*}(Y_1;U_2|Q).
\end{align*}
The above steps are justified as follows:
\begin{itemize}
\item \eqref{eqn:Step1} follows from \eqref{eqn:sepCvx2}.
\item \eqref{eqn:Step2} follows by definition of the function $\Delta(0,x)$.
\item \eqref{eqn:Step3} follows since $Q$ is independent of $Y_1,Y_2$ (by definition of $p^*$), and thus $\tilde{U}_2=(U_2,Q)$ lies in the set over which we take the maximum in \eqref{eqn:Step2}.
\end{itemize}
By symmetry, we conclude that
\begin{align*}
&-(I_{p^*}(Y_1;U_2|Q)+I_{p^*}(Y_2;U_1|Q))\\
&\quad\geq -\rho_m^2(Y_1,Y_2)\left(  2\lambda \nu_1 + 2\lambda \nu_2 -I_{p^*}(Y_1;U_1|U_2,Q)-I_{p^*}(Y_2;U_2|U_1,Q) \right),
\end{align*}
and \eqref{eqn:trickyStep} follows.
%
\item   \eqref{eqn:lastStep} follows since $\lambda \nu_1 \geq I_{p^*}(Y_1;U_1|U_2,Q)$ and $\lambda \nu_2 \geq I_{p^*}(Y_2;U_2|U_1,Q)$ for $p^*\in \mathcal{P}(\lambda \nu_1,\lambda \nu_2)$.
\end{itemize}
\end{proof}

\subsection{An Application: List Decoding} \label{subsec:ListDecoding}
In the previous example, we did not take advantage of the stronger converse result which we proved in Appendix \ref{app:strengthenedConverse} (see the discussion in Section \ref{subsec:MTSCconverseRemarks}).  In this section, we give an application that requires this strengthened result.

Formally, a 2-list code (of blocklength $n$ consists) of encoding functions:
\begin{align*}
g_i^{(n)}: \mathcal{Y}_i^n \rightarrow \left\{ 1,\dots,M_i^{(n)}\right\} \mbox{~for $i=1,2$}
\end{align*}
and list decoding functions
\begin{align*}
&L_1^{(n)}: \left\{ 1,\dots,M_1^{(n)} \right\} \times \left\{ 1,\dots,M_2^{(n)} \right\} \rightarrow 2^{\mathcal{Y}_1^n}\\
&L_2^{(n)}: \left\{ 1,\dots,M_1^{(n)} \right\} \times \left\{ 1,\dots,M_2^{(n)} \right\} \rightarrow 2^{\mathcal{Y}_2^n}.
\end{align*}
A list decoding tuple $(R_1,R_2,\Delta_1,\Delta_2)$ is achievable if, for any $\epsilon>0$, there exists a 2-list code of blocklength $n$ satisfying the rate constraints
\begin{align*}
\frac{1}{n}\log M_1^{(n)} &\leq R_1+\epsilon\\
\frac{1}{n}\log M_2^{(n)} &\leq R_2+\epsilon,\\
\end{align*}
and the probability of list-decoding error constraints
\begin{align*}
\Pr\left[Y_1^n \notin L_1^{(n)}\left(g_1^{(n)}(Y_1^n),g_2^{(n)}(Y_2^n)\right) \right]&\leq \epsilon,\\
\Pr\left[Y_2^n \notin L_2^{(n)}\left(g_1^{(n)}(Y_1^n),g_2^{(n)}(Y_2^n)\right) \right]&\leq \epsilon.
\end{align*}
with list sizes
\begin{align*}
\frac{1}{n}\log|L_1^{(n)}|&\leq \Delta_1+\epsilon\\
\frac{1}{n}\log|L_2^{(n)}|&\leq \Delta_2+\epsilon.
\end{align*}
With a 2-list code so defined, the following theorem  shows that the 2-list decoding problem and  multiterminal source coding problem under logarithmic loss are equivalent (inasmuch as the achievable regions are identical):
\begin{theorem} \label{thm:listDecoding}
The list decoding tuple $(R_1,R_2,\Delta_1,\Delta_2)$ is achievable if and only if
\begin{align*}
R_1 &\geq I(U_1;Y_1|U_2,Q)\\
R_2 &\geq I(U_2;Y_2|U_1,Q)\\
R_1+R_2 &\geq I(U_1,U_2;Y_1,Y_2|Q)\\
\Delta_1 &\geq H(Y_1|U_1,U_2,Q)\\
\Delta_2 &\geq H(Y_2|U_1,U_2,Q).
\end{align*}
for some joint distribution 
\begin{align*}
p(y_1,y_2,u_1,u_2,q)=p(y_1,y_2)p(u_1|y_1,q)p(u_2|y_2,q)p(q),
\end{align*}
where $|\mathcal{U}_1|\leq |\mathcal{Y}_1|$, $|\mathcal{U}_2|\leq |\mathcal{Y}_2|$, and $|\mathcal{Q}|\leq 5$.
\end{theorem}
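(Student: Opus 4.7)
The plan is to prove the theorem as a two-way reduction between the 2-list decoding problem and the multiterminal source coding problem under logarithmic loss, and then invoke Theorem \ref{thm:MTSCregionDesc} together with the strengthened converse discussed in Section \ref{subsec:MTSCconverseRemarks}. The key observation is that a soft reconstruction $\hat{Y}_i^n$ and a list $L_i$ are essentially dual objects: a list of size $2^{n\Delta_i}$ corresponds to a uniform soft reconstruction incurring log-loss $\Delta_i$, while a soft reconstruction with log-loss $D_i$ admits a list of effective size $2^{nD_i}$ via a level-set construction.

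For the ``if'' direction, I would start from a blocklength-$n$ multiterminal source code with product reproductions $\hat{Y}_i^n$ satisfying $\mathbb{E}d(Y_i^n,\hat{Y}_i^n)\le D_i$ with $D_i=\Delta_i-\epsilon'$, guaranteed by Theorem \ref{thm:MTSCachv}. Given the encoder outputs, define the list
\begin{align*}
L_i^{(n)} = \bigl\{\,y_i^n \in \mathcal{Y}_i^n : \hat{Y}_i^n(y_i^n) \ge 2^{-n(\Delta_i-\epsilon'/2)}\,\bigr\}.
\end{align*}
Its cardinality is at most $2^{n(\Delta_i-\epsilon'/2)}$ since the probabilities sum to one, so the list-size constraint is satisfied. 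Markov's inequality applied to the nonnegative random variable $-\log \hat{Y}_i^n(Y_i^n)$ shows that $\Pr[Y_i^n\notin L_i^{(n)}]\le D_i/(\Delta_i-\epsilon'/2)<1$, and a standard blowing-up/typicality refinement (or a simple averaging over many independent sub-blocks) drives this error probability to zero.

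For the ``only if'' direction, I would start from a 2-list code achieving $(R_1,R_2,\Delta_1,\Delta_2)$ with vanishing list-decoding error probability $\epsilon$, and build a soft reconstruction which is a probability measure on $\mathcal{Y}_i^n$ (not necessarily a product measure):
\begin{align*}
\hat{Y}_i^n(y_i^n) \;=\; (1-\eta)\,\frac{\mathbf{1}\{y_i^n\in L_i^{(n)}\}}{|L_i^{(n)}|} \;+\; \eta\,\frac{\mathbf{1}\{y_i^n\notin L_i^{(n)}\}}{|\mathcal{Y}_i^n\setminus L_i^{(n)}|}.
\end{align*}
When $Y_i^n\in L_i^{(n)}$ the per-symbol log-loss is at most $\Delta_i+\epsilon - \tfrac{1}{n}\log(1-\eta)$, and when $Y_i^n\notin L_i^{(n)}$ it is bounded by $\log|\mathcal{Y}_i|-\tfrac{1}{n}\log\eta$. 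Averaging, the expected log-loss is at most $\Delta_i + \epsilon - \tfrac{1}{n}\log(1-\eta) + \epsilon(\log|\mathcal{Y}_i|-\tfrac{1}{n}\log\eta)$, which can be made arbitrarily close to $\Delta_i$ by choosing $\eta$ small and then letting $\epsilon\downarrow 0$ and $n\to\infty$. Now invoke the strengthened converse (Theorem \ref{thm:strongConverse} in Appendix \ref{app:strengthenedConverse}), which applies precisely because $\hat{Y}_i^n$ is permitted to be an arbitrary probability measure on $\mathcal{Y}_i^n$, to conclude that $(R_1,R_2,\Delta_1,\Delta_2)$ lies in $\mathcal{RD}^i$.

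The main obstacle is the converse side: if we only had the standard converse (restricting $\hat{Y}_i^n$ to product distributions), we could not build the above soft reconstruction directly from a list. The uniform-on-list measure is inherently non-product, so the whole argument hinges on the strengthened converse established in Appendix \ref{app:strengthenedConverse}. A secondary technical point is ensuring the list-size and error-probability conditions fit cleanly with the $\epsilon$-slack in the definition of achievability; this can be handled by choosing the thresholds (for the forward direction) and the mixing parameter $\eta$ (for the converse) as functions of $\epsilon$ that tend to zero sufficiently slowly, and then taking $n\to\infty$ and closing the region.
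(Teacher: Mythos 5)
Your converse direction is correct and is essentially the same idea the paper uses: you build a non-product measure on $\mathcal{Y}_i^n$ from the list (uniform on the list, tiny mass off it), observe that its log-loss is about $\tfrac{1}{n}\log|L_i^{(n)}|$ plus small corrections, and then invoke the strengthened converse of Appendix~\ref{app:strengthenedConverse} precisely because the reconstruction is not a product distribution. The paper instead sets $\hat{Y}_i^n$ equal to the true posterior $\Pr[Y_i^n\mid F_1,F_2]$ and bounds its log-loss via Lemma~\ref{lem:listDecoding}; your construction is in effect a self-contained proof of that lemma, so the two arguments are interchangeable.

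The direct direction has a genuine gap. Starting from $\mathbb{E}\,d(Y_i^n,\hat{Y}_i^n)\le D_i=\Delta_i-\epsilon'$ and applying Markov's inequality to $-\tfrac{1}{n}\log\hat{Y}_i^n(Y_i^n)$ at threshold $\Delta_i-\epsilon'/2$ gives
\begin{align*}
\Pr\bigl[Y_i^n\notin L_i^{(n)}\bigr]\le \frac{\Delta_i-\epsilon'}{\Delta_i-\epsilon'/2},
\end{align*}
which is close to $1$ for small $\epsilon'$, not close to $0$. Knowing only the expected distortion is not enough to conclude a vanishing list-decoding error. The two repairs you float do not close this gap: averaging over $k$ independent sub-blocks requires \emph{every} sub-block to succeed, so the success probability $(1-p)^k$ tends to $0$, not $1$; and the blowing-up lemma would have to be applied to the conditional law $p(y_i^n\mid F_1,F_2)$, which is not a product measure, so the standard statement does not apply and the needed generalization is far from ``standard.'' The paper avoids this entirely: in the Berger-Tung scheme the decoder recovers codewords $(U_1^n,U_2^n)$ jointly typical with $(Y_1^n,Y_2^n)$ with probability tending to one, and one simply takes $L_i^{(n)}$ to be the conditionally typical set of $y_i^n$ sequences given $(U_1^n,U_2^n)$, whose cardinality is at most $2^{n(H(Y_i|U_1,U_2,Q)+\epsilon)}$ and which contains $Y_i^n$ whenever joint typicality holds. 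Your achievability argument should be replaced by (or reduced to) this typicality-based construction rather than the Markov bound.
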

\begin{remark}
We note that a similar connection to list decoding can be made for other multiterminal scenarios, in particular the CEO problem.
\end{remark}
To prove the theorem, we require a slightly modified version of \cite[Lemma 1]{bib:YHKimSutivongCover2008}:
\begin{lemma}\label{lem:listDecoding}
If the list decoding tuple $(R_1,R_2,\Delta_1,\Delta_2)$ is achieved by a sequence of 2-list codes $\{g_1^{(n)},g_2^{(n)},L_1^{(n)},L_2^{(n)}\}_{n\rightarrow \infty}$, then 
\begin{align*}
H(Y_1^n|g_1^{(n)}(Y_1^n),g_2^{(n)}(Y_2^n)) &\leq |L_1^{(n)}| + n \epsilon_n\\
H(Y_2^n|g_1^{(n)}(Y_1^n),g_2^{(n)}(Y_2^n)) &\leq |L_2^{(n)}| + n \epsilon_n,
\end{align*}
where $\epsilon_n\rightarrow 0$ as $n\rightarrow \infty$.
\end{lemma}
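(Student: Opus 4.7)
The plan is to prove a generalized Fano-type inequality, exactly along the lines of \cite{bib:YHKimSutivongCover2008}. For each $i \in \{1,2\}$, write $F_j = g_j^{(n)}(Y_j^n)$ and define the list-decoding error indicator
\[
E_i \triangleq \mathbb{1}\bigl\{Y_i^n \notin L_i^{(n)}(F_1, F_2)\bigr\},
\]
so by the hypothesis on achievability we have $\Pr(E_i = 1) \leq \epsilon$ (with $\epsilon \to 0$ along the sequence of achieving codes).

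The first step is the identity $H(Y_i^n \mid F_1, F_2) = H(Y_i^n, E_i \mid F_1, F_2)$, which holds because $E_i$ is a deterministic function of $(Y_i^n, F_1, F_2)$. Applying the chain rule gives
\[
H(Y_i^n \mid F_1, F_2) \;=\; H(E_i \mid F_1, F_2) + H(Y_i^n \mid F_1, F_2, E_i).
\]
The first summand is at most $H(E_i) \leq 1$. The second summand I would split by conditioning on the value of $E_i$: on the event $\{E_i = 0\}$, $Y_i^n$ takes values in $L_i^{(n)}(F_1, F_2)$, a set of cardinality at most $|L_i^{(n)}|$, so $H(Y_i^n \mid F_1, F_2, E_i = 0) \leq \log |L_i^{(n)}|$; on $\{E_i = 1\}$, use the crude bound $H(Y_i^n \mid F_1, F_2, E_i = 1) \leq n \log |\mathcal{Y}_i|$.

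Combining these three estimates and weighting by $\Pr(E_i = 0) \leq 1$ and $\Pr(E_i = 1) \leq \epsilon$ yields
\[
H(Y_i^n \mid F_1, F_2) \;\leq\; \log |L_i^{(n)}| + 1 + \epsilon\, n \log |\mathcal{Y}_i|,
\]
and setting $\epsilon_n \triangleq \tfrac{1}{n} + \epsilon \log|\mathcal{Y}_i|$ gives the claimed bound with $\epsilon_n \to 0$ as $n \to \infty$ (since $\epsilon \to 0$ along the achieving sequence and $|\mathcal{Y}_i|$ is finite and fixed).

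I do not expect a substantive obstacle here; this is the standard list-Fano template, and the only point that requires a bit of care is simply identifying the list-cardinality bound as $\log |L_i^{(n)}|$, which is the quantity normalized by $1/n$ in the definition of achievable list size $\Delta_i$ above. The argument treats $i = 1$ and $i = 2$ completely symmetrically, so both inequalities follow in one stroke.
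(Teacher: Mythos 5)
Your proof is correct and is exactly the standard list-decoding Fano argument that the paper gestures at when it says the proof is ``virtually identical'' to Lemma~1 of \cite{bib:YHKimSutivongCover2008}: introduce the error indicator $E_i$, use that it is a deterministic function of $(Y_i^n,F_1,F_2)$, expand $H(Y_i^n\mid F_1,F_2)=H(E_i\mid F_1,F_2)+H(Y_i^n\mid F_1,F_2,E_i)$, and bound the two branches by $\log|L_i^{(n)}|$ and $n\log|\mathcal{Y}_i|$ respectively. The only point worth flagging is that the lemma as printed writes $|L_i^{(n)}|$ where it clearly means $\log|L_i^{(n)}|$ (a typographical slip; the quantity normalized by $1/n$ in the achievability definition is $\log|L_i^{(n)}|$, and the application in Theorem~\ref{thm:listDecoding} also requires the logarithm). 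Your derivation produces the correct bound with the logarithm, and your choice $\epsilon_n=\tfrac{1}{n}+\epsilon\max_i\log|\mathcal{Y}_i|$ (taking the max over $i$ so that a single $\epsilon_n$ serves both inequalities) vanishes as required.
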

\begin{proof}
The proof is virtually identical to that of \cite[Lemma 1]{bib:YHKimSutivongCover2008}, and is therefore omitted.
\end{proof}
\begin{proof}[Proof of Theorem \ref{thm:listDecoding}]
First observe that the direct part is trivial.  Indeed, for a joint distribution $p(y_1,y_2,u_1,u_2,q)=p(y_1,y_2)p(u_1|y_1,q)p(u_2|y_2,q)p(q)$, apply the Berger-Tung achievability scheme and take $L_i^{(n)}$ to be the set of $y_i^n$ sequences which are jointly typical with the decoded quantizations $(U_1^n,U_2^n)$.  This set has cardinality no larger than $2^{n(H(Y_i|U_1,U_2,Q)+\epsilon)}$, which proves achievability.  

To see the converse, note that setting
\begin{align*}
\hat{Y}_i^n = \Pr\left[Y_i^n | g_1^{(n)}(Y_1^n),g_2^{(n)}(Y_2^n)\right]
\end{align*}
achieves a logarithmic loss of $\frac{1}{n}H(Y_i^n|g_1^{(n)}(Y_1^n),g_2^{(n)}(Y_2^n))$ for source $i$ in the setting where reproductions are not restricted to product distributions.  Applying the strengthened converse of Theorem \ref{thm:MTSCregionDesc} together with Lemma \ref{lem:listDecoding} yields the desired result.
\end{proof}

\section{Relationship to the General Multiterminal Source Coding Problem} \label{sec:generalMTSC}

In this section, we relate our results for logarithmic loss to multiterminal source coding problems with arbitrary distortion measures and reproduction alphabets.

As before, we let $\left\{Y_{1,j},Y_{2,j}\right\}_{j=1}^n$ be a sequence of $n$ independent, identically distributed random variables with finite alphabets $\mathcal{Y}_1$ and $\mathcal{Y}_2$, respectively, and joint pmf $p(y_1,y_2)$.

In this section, the reproduction alphabets $\breve{\mathcal{Y}}_i$, $i=1,2$,  are arbitrary.  We also consider generic distortion measures:
\begin{align*}
\breve{d}_i:\mathcal{Y}_i \times \breve{\mathcal{Y}}_i \rightarrow \mathbb{R}^+ \mbox{~for $i=1,2$},
\end{align*}
where $\mathbb{R}^+$ denotes the set of nonnegative real numbers.  The sequence distortion is then defined as follows:
\begin{align*}
\breve{d}_i({y}_i^n,\breve{y}_i^n)=\frac{1}{n}\sum_{j=1}^n \breve{d}_i({y}_{i,j},\breve{y}_{i,j}).
\end{align*}

We will continue to let $d(\cdot,\cdot)$ and $\hat{\mathcal{Y}}_1,\hat{\mathcal{Y}}_2$ denote the logarithmic loss distortion measure and the associated reproduction alphabets, respectively.

A rate distortion code (of blocklength $n$) consists of encoding functions:
\begin{align*}
\breve{g}_i^{(n)}: \mathcal{Y}_i^n \rightarrow \left\{ 1,\dots,M_i^{(n)}\right\} \mbox{~for $i=1,2$}
\end{align*}
and decoding functions
\begin{align*}
\breve{\psi}_i^{(n)}: \left\{ 1,\dots,M_1^{(n)} \right\} \times \left\{ 1,\dots,M_2^{(n)} \right\} \rightarrow {\breve{\mathcal{Y}}_i}^n \mbox{~for $i=1,2$}.
\end{align*}

A rate distortion vector $(R_1,R_2,D_1,D_2)$ is strict-sense achievable if there exists a blocklength $n$, encoding functions $\breve{g}_1^{(n)},\breve{g}_2^{(n)}$ and a decoder $(\breve{\psi}_1^{(n)},
\breve{\psi}_2^{(n)})$ such that
\begin{align}
R_i &\geq \frac{1}{n}\log M_i^{(n)} \mbox{~for $i=1,2$} \label{eqn:genRate} \\
D_i &\geq  \mathbb{E} \breve{d}_i({Y}_i^n,\breve{Y}_i^n) \mbox{~for $i=1,2$}. \label{eqn:genDist}
\end{align}
Where 
\begin{align*}
\breve{Y}_i^n=\breve{\psi}_i^{(n)}(\breve{g}_1^{(n)}(Y_1^n),\breve{g}_2^{(n)}(Y_2^n)) \mbox{~for $i=1,2$}.
\end{align*}
For these functions, we define the quantity
\begin{align}
\beta_i\left(\breve{g}_1^{(n)},\breve{g}_2^{(n)},\breve{\psi}_1^{(n)},
\breve{\psi}_2^{(n)}\right):=\frac{1}{n}\sum_{j=1}^n \mathbb{E}\log
\left(\sum_{y_i\in \mathcal{Y}_i} 2^{-\breve{d}_i(y_i,\breve{Y}_{i,j})} \right) \mbox{~for $i=1,2$}.
\end{align}
Now, let $\beta_i(R_1,R_2,D_1,D_2)$ be the infimum of the $\beta_i\left(\breve{g}_1^{(n)},\breve{g}_2^{(n)},\breve{\psi}_1^{(n)},
\breve{\psi}_2^{(n)}\right)$'s, where the infimum is taken over all codes that achieve the rate distortion vector $(R_1,R_2,D_1,D_2)$.

At this point it is instructive to pause and consider some examples.
\begin{example}[Binary Sources and Hamming Distortion]
For $i=1,2$, let $\breve{\mathcal{Y}}_i=\mathcal{Y}_i=\{0,1\}$ and let $\breve{d}_i$ be the $\alpha$-scaled Hamming distortion measure:
\begin{align*}
\breve{d}_i({y}_i,\breve{y}_i)=\left\{
\begin{array}{ll}
0 & \mbox{if $\breve{y}_i=y_i$,}\\
\alpha & \mbox{if $\breve{y}_i\neq y_i$.}
\end{array}
 \right.
\end{align*}
In this case,
\begin{align}
\sum_{y_i\in \mathcal{Y}_i} 2^{-\breve{d}_i(y_i,\breve{Y}_{i,j})}=2^0+2^{-\alpha},
\end{align}
so $\beta_i(R_1,R_2,D_1,D_2)=\log(1+2^{-\alpha})$ for any $(R_1,R_2,D_1,D_2)$. This notion that $\beta_i(R_1,R_2,D_1,D_2)$ is a constant extends to all distortion measures for which the columns of the $|\mathcal{Y}_i|\times |\breve{\mathcal{Y}}_i|$ distortion matrix are permutations of one another.

\end{example}

\begin{example}[Binary Sources and Erasure Distortion]
For $i=1,2$, let $\mathcal{Y}_i=\{0,1\}$, $\breve{\mathcal{Y}}_i=\{0,1,e\}$ and let $\breve{d}_i$ be the standard erasure distortion measure:
\begin{align*}
\breve{d}_i({y}_i,\breve{y}_i)=\left\{
\begin{array}{ll}
0 & \mbox{if $\breve{y}_i=y_i$}\\
1 & \mbox{if $\breve{y}_i=e$}\\
\infty & \mbox{if $\breve{y}_i\in\{0,1\}$ and $\breve{y}_i\neq y_i$}.
\end{array}
 \right.
\end{align*}
In this case,
\begin{align}
\sum_{y_i\in \mathcal{Y}_i} 2^{-\breve{d}_i(y_i,\breve{Y}_{i,j})}=
\left\{
\begin{array}{ll}
2^{-\infty}+2^0=1 & \mbox{if $\breve{Y}_{i,j} \in \{0,1\}$}\\
2^{-1}+2^{-1}=1 & \mbox{if $\breve{Y}_{i,j}=e$}.
\end{array}
 \right.
\end{align}
so $\beta_i(R_1,R_2,D_1,D_2)=0$ for any $(R_1,R_2,D_1,D_2)$. This result can easily be extended to erasure distortion on larger alphabets by setting the penalty to $\log |\mathcal{Y}_i|$ when $\breve{Y}_i=e$.
\end{example}
%

\begin{theorem}\label{thm:outerBoundGeneral}
Suppose $(R_1,R_2,D_1,D_2)$ is strict-sense achievable for the general multiterminal source coding problem.  Then
\begin{align}
\left.
\begin{array}{rl}
R_1 &\geq I(U_1;Y_1|U_2,Q) \\
R_2 &\geq I(U_2;Y_2|U_1,Q)\\
R_1+R_2 &\geq I(U_1,U_2;Y_1,Y_2|Q)\\
D_1 &\geq H(Y_1|U_1,U_2,Q) - \beta_1(R_1,R_2,D_1,D_2) \\
D_2 &\geq H(Y_2|U_1,U_2,Q) - \beta_2(R_1,R_2,D_1,D_2) 
\end{array} \right\} \label{eqn:outerBoundEq1}
\end{align}
for some joint distribution $p(y_1,y_2)p(q)p(u_1|y_1,q)p(u_2|y_2,q)$ with $|\mathcal{U}_i|\leq |\mathcal{Y}_i|$ and $|\mathcal{Q}|\leq 5$.
\end{theorem}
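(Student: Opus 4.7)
The plan is to reduce this general-distortion converse to the logarithmic loss converse already established in Theorem \ref{thm:MTSCregionDesc}, via a Boltzmann-style transformation of the hard reproductions produced by any strict-sense achieving code. Fix such a code $(\breve g_1^{(n)},\breve g_2^{(n)},\breve\psi_1^{(n)},\breve\psi_2^{(n)})$, keep the two encoders unchanged (this preserves the rates $R_1,R_2$), and replace each reproduction symbol $\breve Y_{i,j}\in\breve{\mathcal{Y}}_i$ by the soft estimate
\begin{align*}
\hat Y_{i,j}(y_i) := \frac{2^{-\breve d_i(y_i,\breve Y_{i,j})}}{\sum_{y\in\mathcal{Y}_i} 2^{-\breve d_i(y,\breve Y_{i,j})}}\in \hat{\mathcal{Y}}_i.
\end{align*}

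For this soft code, $-\log\hat Y_{i,j}(y_i)=\breve d_i(y_i,\breve Y_{i,j})+\log\sum_{y} 2^{-\breve d_i(y,\breve Y_{i,j})}$ pointwise, so averaging over $j$ and taking expectation gives $\mathbb{E}d(Y_i^n,\hat Y_i^n)=\mathbb{E}\breve d_i(Y_i^n,\breve Y_i^n)+\beta_i(\breve g_1^{(n)},\breve g_2^{(n)},\breve\psi_1^{(n)},\breve\psi_2^{(n)})\le D_i+\beta_i^{\mathrm{code}}$, where I abbreviate the $\beta_i$ of the chosen code by $\beta_i^{\mathrm{code}}$. Consequently the vector $(R_1,R_2,D_1+\beta_1^{\mathrm{code}},D_2+\beta_2^{\mathrm{code}})$ is strict-sense achievable under logarithmic loss, and the converse direction of Theorem \ref{thm:MTSCregionDesc} yields auxiliaries $(U_1,U_2,Q)$ with the required factorization $p(y_1,y_2)p(q)p(u_1|y_1,q)p(u_2|y_2,q)$, the stated cardinality bounds, the three Berger--Tung rate inequalities, and the entropy bounds $H(Y_i|U_1,U_2,Q)\le D_i+\beta_i^{\mathrm{code}}$ for $i=1,2$.

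To complete the proof I would promote $\beta_i^{\mathrm{code}}$ to the infimum $\beta_i(R_1,R_2,D_1,D_2)$ that appears in the theorem statement by choosing a sequence of achieving codes $C^{(k)}$ whose per-code $\beta_i$'s converge to the respective infima and running the construction above on each. The cardinality bounds $|\mathcal{U}_i|\le|\mathcal{Y}_i|$ and $|\mathcal{Q}|\le 5$ confine each joint distribution of $(Y_1,Y_2,U_1^{(k)},U_2^{(k)},Q^{(k)})$ to a compact simplex, so along a convergent subsequence the continuity of entropy and mutual information carries every inequality to the limit.

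The main obstacle is that $\beta_1(R_1,R_2,D_1,D_2)$ and $\beta_2(R_1,R_2,D_1,D_2)$ are defined as \emph{separate} infima over codes, so a priori no single sequence of codes drives both of them to their infima simultaneously. I expect this to be the technical crux: one must argue, perhaps through a time-share combination absorbed into $Q$ (accounting for the slack in the cardinality bound relative to the pure log-loss statement) together with the convexity and closedness of $\overline{\mathcal{RD}}^{\star}$ and its monotonicity in $(D_1,D_2)$, that the joint lower corner $(\beta_1(R_1,R_2,D_1,D_2),\beta_2(R_1,R_2,D_1,D_2))$ is nevertheless attainable in the appropriate limit. The rest of the argument is a straightforward combination of the Boltzmann transformation with the already-proven log-loss converse.
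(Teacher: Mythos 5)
Your construction is exactly the paper's: the Boltzmann transform $\hat Y_{i,j}(y_i)\propto 2^{-\breve d_i(y_i,\breve Y_{i,j})}$, the resulting identity $\mathbb{E}d(Y_i^n,\hat Y_i^n)=\mathbb{E}\breve d_i(Y_i^n,\breve Y_i^n)+\beta_i^{\mathrm{code}}$, and the application of Theorem~\ref{thm:MTSCregionDesc} to the point $(R_1,R_2,D_1+\beta_1^{\mathrm{code}},D_2+\beta_2^{\mathrm{code}})$ all appear verbatim in the paper's proof. The step you flag --- passing from the code-specific $\beta_i^{\mathrm{code}}$ to the per-coordinate infima $\beta_i(R_1,R_2,D_1,D_2)$ --- is dispatched in the paper with a single clause, ``taking the infimum over all coding schemes,'' with no argument that both infima can be approached by one code. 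You have therefore faithfully reproduced the paper's proof up to and including its one imprecise step, and you were right to be uneasy about it.

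The remedies you sketch would not close the gap. Time-sharing between two codes produces a \emph{convex combination} of the pairs $(\beta_1^{\mathrm{code}},\beta_2^{\mathrm{code}})$, which moves away from the lower corner $(\beta_1,\beta_2)$, not toward it; and there is no cardinality slack to spend, since Theorem~\ref{thm:MTSCregionDesc} and the present theorem both impose $|\mathcal Q|\le 5$. A compactness argument along a code sequence with $\beta_1^{(k)}\to\beta_1$ passes the $D_1$-inequality to the limit, but along that very subsequence $\beta_2^{(k)}$ converges only to some $\bar\beta_2\ge\beta_2$, yielding the $D_2$-inequality with $\bar\beta_2$ rather than $\beta_2$. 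What this route actually delivers is a single pair $(b_1,b_2)$ lying in the closure of the set of code-wise pairs $(\beta_1^{\mathrm{code}},\beta_2^{\mathrm{code}})$, together with one distribution satisfying the rate inequalities and $D_i\ge H(Y_i|U_1,U_2,Q)-b_i$; one has $b_i\ge\beta_i(R_1,R_2,D_1,D_2)$ for each $i$, but not in general both with equality. The paper never pays for this: in every downstream application (scaled Hamming, erasure), $\beta_i$ is a constant independent of the code, so the two infima coincide with a single constant and the issue never manifests. But as a matter of rigor for the theorem as literally stated, the step you flagged is genuinely open in both your write-up and the paper's.
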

\begin{proof}
Since $(R_1,R_2,D_1,D_2)$ is strict-sense achievable, there exists a blocklength $n$, encoding functions $\breve{g}_1^{(n)},\breve{g}_2^{(n)}$ and a decoder $(\breve{\psi}_1^{(n)},
\breve{\psi}_2^{(n)})$ satisfying \eqref{eqn:genRate}-\eqref{eqn:genDist}.  Given these functions, the decoder can generate reproductions $\breve{Y}_1^n,\breve{Y}_2^n$ satisfying the average distortion constraints \eqref{eqn:genDist}.  From the reproduction $\breve{Y}_i^n$, we construct the reproduction $\hat{Y}_i^n$ as follows:
\begin{align*}
\hat{Y}_j(y_i)=\frac{ 2^{-\breve{d}_i(y_i,\breve{Y}_{i,j})}}{\sum_{y_i'\in \mathcal{Y}_i} 2^{-\breve{d}_i(y_i',\breve{Y}_{i,j})}}.
\end{align*}
Now, using the logarithmic loss distortion measure, observe that $\hat{Y}_i^n$ satisfies
\begin{align*}
\mathbb{E} d({Y}_i^n,\hat{Y}_i^n) &= 
\frac{1}{n}\sum_{j=1}^n \mathbb{E} \log\left(2^{\breve{d}_i({Y}_{i,j},\breve{Y}_{i,j})} \right) + \frac{1}{n}\sum_{j=1}^n \mathbb{E} \log \left( \sum_{y_i'\in \mathcal{Y}_i} 2^{-\breve{d}_i(y_i',\breve{Y}_{i,j})} \right) \\
&= \frac{1}{n}\sum_{j=1}^n \mathbb{E}\breve{d}_i({Y}_{i,j},\breve{Y}_{i,j}) + \beta_i\left(\breve{g}_1^{(n)},\breve{g}_2^{(n)},\breve{\psi}_1^{(n)}, \breve{\psi}_2^{(n)}\right)\\
&\leq D_i +\beta_i\left(\breve{g}_1^{(n)},\breve{g}_2^{(n)},\breve{\psi}_1^{(n)}, \breve{\psi}_2^{(n)}\right)\\
&:=\tilde{D}_i.
\end{align*}
Thus, $(R_1,R_2,\tilde{D}_1,\tilde{D}_2)$ is achievable for the multiterminal source coding problem with the logarithmic loss distortion measure.  Applying Theorem \ref{thm:MTSCregionDesc} and taking the infimum over all coding schemes that achieve $(R_1,R_2,D_1,D_2)$ proves the theorem.
\end{proof}

This outer bound is interesting because the region is defined over the same set of probability distributions that define the Berger-Tung inner bound.  While the $\beta_i$'s can be difficult to compute in general, we have shown that they can be readily determined for many popular distortion measures.  As an application,  we now give a quantitative approximation of the rate distortion region for binary sources subject to Hamming distortion constraints.  Before proceeding, we prove the following lemma.

\begin{lemma} \label{lem:BTalpha}
Suppose $(R_1,R_2,\tilde{D}_1,\tilde{D}_2)$ is strict-sense achievable for the multiterminal source coding problem with binary sources and $\breve{d}_i$ equal to the $\alpha_i$-scaled Hamming distortion measure, for $i=1,2$.  Then the Berger-Tung achievability scheme can achieve a point $(R_1,R_2,D_1,D_2)$ satisfying
\begin{align*}
D_i-\tilde{D}_i \leq \left(\frac{\alpha_i}{2}-1\right) H_i +\log(1+2^{-\alpha_i})
\end{align*}
for some $H_i \in[0,1]$, $i=1,2$.
\end{lemma}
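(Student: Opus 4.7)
The plan is to use Theorem~\ref{thm:outerBoundGeneral} to extract a joint distribution from the assumed Hamming-achievable point and then plug that same distribution into the Berger-Tung inner bound (Proposition~\ref{prop:BTMTSC}) with a MAP-style reproduction function, and finally compare the two achievable distortions source-by-source.

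First, I would compute $\beta_i$ for the $\alpha_i$-scaled binary Hamming distortion: by the calculation in the first example of the section, $\sum_{y_i} 2^{-\breve d_i(y_i,\breve y_i)} = 1+2^{-\alpha_i}$ regardless of $\breve y_i$, so $\beta_i(R_1,R_2,\tilde D_1,\tilde D_2) = \log(1+2^{-\alpha_i})$. Applying Theorem~\ref{thm:outerBoundGeneral} to the strict-sense achievable point $(R_1,R_2,\tilde D_1,\tilde D_2)$ then yields auxiliary random variables $(U_1,U_2,Q)$ of the Berger-Tung form, satisfying the three usual rate inequalities, for which
\begin{align*}
\tilde D_i \;\geq\; H(Y_i\mid U_1,U_2,Q) - \log(1+2^{-\alpha_i}), \qquad i=1,2.
\end{align*}
Set $H_i := H(Y_i\mid U_1,U_2,Q)$; since $Y_i$ is binary, $H_i\in[0,1]$.

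Next I would use the \emph{same} auxiliaries $(U_1,U_2,Q)$ in the Berger-Tung inner bound with the MAP reproduction function
\begin{align*}
f_i(u_1,u_2,q) \;=\; \arg\max_{y_i\in\{0,1\}} p(y_i\mid u_1,u_2,q),
\end{align*}
the rate inequalities of Proposition~\ref{prop:BTMTSC} being automatically met. Writing $V=(U_1,U_2,Q)$ and $p_V = p(Y_i=1\mid V)$, the achieved Hamming distortion equals
\begin{align*}
D_i \;=\; \alpha_i\,\mathbb{E}\bigl[\min(p_V,\,1-p_V)\bigr].
\end{align*}

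Now comes the pointwise comparison between Hamming error and binary entropy: the classical inequality
\begin{align*}
\min(p,\,1-p) \;\leq\; \tfrac{1}{2}\,h_2(p), \qquad p\in[0,1],
\end{align*}
(checked easily on $[0,1/2]$ by comparing $h_2'$ to $2$ and using that both sides vanish at the endpoints). Taking expectations gives $D_i \leq \tfrac{\alpha_i}{2}\,\mathbb{E}[h_2(p_V)] = \tfrac{\alpha_i}{2}\,H_i$. Combining with the outer-bound inequality for $\tilde D_i$ yields
\begin{align*}
D_i - \tilde D_i \;\leq\; \tfrac{\alpha_i}{2}H_i - \bigl(H_i - \log(1+2^{-\alpha_i})\bigr) \;=\; \Bigl(\tfrac{\alpha_i}{2}-1\Bigr)H_i + \log(1+2^{-\alpha_i}),
\end{align*}
which is the claimed bound.

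I do not expect any real obstacle here; the whole argument is just the observation that the outer bound of Theorem~\ref{thm:outerBoundGeneral} and the Berger-Tung inner bound are described in terms of the \emph{same} Markov kernels $p(u_1|y_1,q)p(u_2|y_2,q)p(q)$, so one can reuse the outer-bound auxiliaries inside the inner bound. The only nontrivial ingredient is the scalar inequality $\min(p,1-p)\leq\tfrac12 h_2(p)$, and both the outer-bound gap $\log(1+2^{-\alpha_i})$ and the inner-bound MAP distortion expression are direct. The mild subtlety worth flagging is that the $H_i$ produced this way is not a free parameter but rather $H(Y_i\mid U_1,U_2,Q)$ for the particular distribution extracted from the outer bound; it lies in $[0,1]$ because $Y_i$ is binary, which matches the statement.
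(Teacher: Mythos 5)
Your proof is correct and follows essentially the same route as the paper: extract the Berger-Tung auxiliaries from Theorem~\ref{thm:outerBoundGeneral} with $\beta_i=\log(1+2^{-\alpha_i})$, reuse them in Proposition~\ref{prop:BTMTSC} with the MAP reproduction, and bound the resulting Hamming loss via the inequality $\min(p,1-p)\leq\tfrac12 h_2(p)$ (which the paper phrases as $2p\leq h_2(p)$ for $p\in[0,\tfrac12]$). The only cosmetic difference is that you carefully write $D_i\leq\tfrac{\alpha_i}{2}H_i$ rather than asserting equality, which is a minor tightening of the paper's wording and changes nothing.
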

\begin{proof}
By Theorem \ref{thm:outerBoundGeneral}, $(R_1,R_2,\tilde{D}_1,\tilde{D}_2)$ satisfy \eqref{eqn:outerBoundEq1} for some joint distribution \linebreak $p(y_1,y_2)p(q)p(u_1|y_1,q)p(u_2|y_2,q)$.  For this distribution, define the reproduction functions 
\begin{align}
\breve{Y}_i(U_1,U_2,Q)=\arg \max_{y_i} p(y_i |U_1,U_2,Q) \mbox{~for $i=1,2$}.
\end{align}
Then, observe that for $i=1,2$:
\begin{align}
\mathbb{E} \breve{d}_i({Y}_i,\breve{Y}_i) &= 
\sum_{u_1,u_2,q} p(u_1,u_2,q) \left[ \alpha_i \cdot \min_{y_i} p(y_i|u_1,u_2,q) + 0\cdot \max_{y_i} p(y_i|u_1,u_2,q)\right]\notag \\
&= \alpha_i \sum_{u_1,u_2,q} p(u_1,u_2,q) \cdot \min_{y_i} p(y_i|u_1,u_2,q) \notag \\
&\leq \frac{\alpha_i}{2} \sum_{u_1,u_2,q} p(u_1,u_2,q) \cdot H(Y_i|U_1,U_2,Q=u_1,u_2,q) \label{eqn:entropyUB}\\
&=\frac{\alpha_i}{2} H(Y_i|U_1,U_2,Q). \notag 
\end{align}
Where \eqref{eqn:entropyUB} follows from the fact that $2p\leq h_2(p)$ for $0\leq p\leq 0.5$.  Thus, $D_i=\frac{\alpha_i}{2} H(Y_i|U_1,U_2,Q)$ is achievable for rates $(R_1,R_2)$ using the Berger-Tung achievability scheme.  Combining this with the fact that $\tilde{D}_i \geq H(Y_i|U_1,U_2,Q) - \log (1+2^{-\alpha_i})$, we see that
\begin{align*}
D_i-\tilde{D}_i\leq \frac{\alpha_i}{2} H(Y_i|U_1,U_2,Q) - H(Y_i|U_1,U_2,Q) + \log (1+2^{-\alpha_i}). 
\end{align*}
\end{proof}

Lemma \ref{lem:BTalpha} allows us to give a quantitative outer bound on the achievable rate distortion region in terms of the Berger-Tung inner bound.

\begin{corollary} \label{cor:BTbinHam}
Suppose $(R_1,R_2,\tilde{D}_1^{(1)},\tilde{D}_2^{(1)})$ is strict-sense achievable for the multiterminal source coding problem with binary sources and $\breve{d_i}$ equal to the standard $1$-scaled Hamming distortion measure, for  $i=1,2$.  Then the Berger-Tung achievability scheme can achieve a point $(R_1,R_2,D_1^{(1)},D_2^{(1)})$, where
\begin{align*}
D_i^{(1)}-\tilde{D}_i^{(1)} \leq  \frac{1}{2}\log\left(\frac{5}{4}\right) < 0.161  \mbox{~for $i=1,2$}.
\end{align*}
\end{corollary}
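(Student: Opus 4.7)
The plan is to invoke Lemma \ref{lem:BTalpha} with scaling $\alpha_1 = \alpha_2 = 2$ rather than $1$, and then rescale the resulting distortion bound back down to the $1$-scaled setting. First, I observe that a code achieves distortion $\tilde{D}_i^{(1)}$ under the standard ($1$-scaled) Hamming measure if and only if the same code achieves distortion $2\tilde{D}_i^{(1)}$ under the $2$-scaled Hamming measure, since the $\alpha$-scaled Hamming distortion is simply $\alpha$ times the per-symbol error probability. Consequently, the hypothesis that $(R_1, R_2, \tilde{D}_1^{(1)}, \tilde{D}_2^{(1)})$ is strict-sense achievable under $1$-scaled Hamming distortion implies that $(R_1, R_2, 2\tilde{D}_1^{(1)}, 2\tilde{D}_2^{(1)})$ is strict-sense achievable under $2$-scaled Hamming distortion.

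Applying Lemma \ref{lem:BTalpha} with $\alpha_1 = \alpha_2 = 2$ then produces a Berger-Tung scheme achieving some $(R_1, R_2, D_1^{(2)}, D_2^{(2)})$ under the $2$-scaled problem with
\[
D_i^{(2)} - 2\tilde{D}_i^{(1)} \;\leq\; \left(\tfrac{2}{2} - 1\right) H_i + \log\!\left(1 + 2^{-2}\right) \;=\; \log(5/4), \qquad i = 1, 2,
\]
where the choice $\alpha = 2$ has been made precisely so that the coefficient of the unknown quantity $H_i = H(Y_i \mid U_1, U_2, Q)$ vanishes, leaving a bound that requires no control over that conditional entropy. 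Reinterpreting the same Berger-Tung code under the $1$-scaled measure gives $D_i^{(1)} = D_i^{(2)}/2$, so dividing the above by $2$ yields
\[
D_i^{(1)} - \tilde{D}_i^{(1)} \;\leq\; \tfrac{1}{2}\log(5/4) \;<\; 0.161,
\]
which is exactly the corollary.

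To motivate why $\alpha = 2$ is the right choice, the same argument with arbitrary scaling $\alpha > 0$ produces the bound $D_i^{(1)} - \tilde{D}_i^{(1)} \leq (1/2 - 1/\alpha) H_i + (1/\alpha)\log(1 + 2^{-\alpha})$. The coefficient of $H_i$ is nonpositive iff $\alpha \leq 2$, so in order to obtain a bound independent of $H_i$ one needs $\alpha \in (0, 2]$; on this interval $(1/\alpha)\log(1 + 2^{-\alpha})$ is strictly decreasing in $\alpha$, so $\alpha = 2$ is optimal. Since the corollary is essentially a one-line application of Lemma \ref{lem:BTalpha} after this rescaling, I do not expect any real obstacle beyond recognizing that the scaling parameter $\alpha$ in the lemma may be treated as a free optimization variable even when the target problem is the unscaled Hamming one.
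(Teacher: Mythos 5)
Your proof is correct and follows essentially the same route as the paper's: rescale the $1$-Hamming problem to an $\alpha$-Hamming problem, invoke Lemma~\ref{lem:BTalpha}, and rescale back. The only cosmetic difference is that you plug in $\alpha=2$ up front (noticing it kills the coefficient of $H_i$) and then justify its optimality afterward, whereas the paper carries a general $\alpha_i$ through to arrive at the expression $(\tfrac{1}{2}-\tfrac{1}{\alpha_i})H_i + \tfrac{1}{\alpha_i}\log(1+2^{-\alpha_i})$ and then solves the $\max_{H_i}\inf_{\alpha_i}$ problem explicitly to find the saddle point at $\alpha_i=2$; these are the same calculation in different order.
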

\begin{proof}
For rates $(R_1,R_2)$, note that distortions $(\tilde{D}_1,\tilde{D}_2)$ are strict-sense achievable for the $\alpha_i$-scaled Hamming distortion measures if and only if distortions  $(\tilde{D}_1^{(1)},\tilde{D}_2^{(1)})=(\frac{1}{\alpha_1}\tilde{D}_1,\frac{1}{\alpha_2}\tilde{D}_2)$ are strict-sense achievable for the $1$-scaled Hamming distortion measure.  Likewise, the point $(R_1,R_2,D_1,D_2)$ is achieved by the Berger-Tung coding scheme for the $\alpha_i$-scaled Hamming distortion measures if and only if $(R_1,R_2,\frac{1}{\alpha_1}D_1,\frac{1}{\alpha_2}D_2)$ is achieved by the Berger-Tung coding scheme for the $1$-scaled Hamming distortion measure.  

Thus, applying Lemma \ref{lem:BTalpha}, we can use the Berger-Tung achievability scheme to achieve a point $(R_1,R_2,D_1^{(1)},D_2^{(1)})$ satisfying 
\begin{align}
D_i^{(1)}-\tilde{D}_i^{(1)} &= \frac{1}{\alpha_i}\left( D_i-\tilde{D}_i \right)\notag\\
&\leq \frac{1}{\alpha_i} \left(\frac{\alpha_i}{2}-1\right) H_i +\frac{1}{\alpha_i} \log(1+2^{-\alpha_i}) \notag\\
&= \left(\frac{1}{2}-\frac{1}{\alpha_i} \right) H_i +\frac{1}{\alpha_i} \log(1+2^{-\alpha_i}) \label{eqn:minExpr}
\end{align}
for some $H_i\in[0,1]$.  We can optimize \eqref{eqn:minExpr} over $\alpha_i$ to find the minimum gap for a given $H_i$.  Maximizing over $H_i\in[0,1]$ then gives the worst-case gap.  Straightforward calculus yields the saddle-point:
\begin{align*}
\max_{H_i\in[0,1]} & \inf_{\alpha_i > 0} \left\{ \left(\frac{1}{2}-\frac{1}{\alpha_i} \right) H_i +\frac{1}{\alpha_i} \log(1+2^{-\alpha_i})\right\} \\
&=   \inf_{\alpha_i > 0} \max_{H_i\in[0,1]} \left\{ \left(\frac{1}{2}-\frac{1}{\alpha_i} \right) H_i +\frac{1}{\alpha_i} \log(1+2^{-\alpha_i})\right\}\\
&=\frac{1}{2}\log\left(\frac{5}{4}\right) < 0.161,
\end{align*}
which is achieved for $\alpha_i=2$ and any $H\in[0,1]$.  
\end{proof}
\begin{remark}
We note briefly that this estimate can potentially be improved if one knows more about the source distribution.
\end{remark}

\section{Concluding Remarks} \label{sec:Conc}
One immediate direction for further work would be to extend our results to more than two encoders.  For the CEO problem, our results can be extended to an arbitrary number of encoders.  This extension is proved in Appendix \ref{app:mEncCEO}.  

On the other hand, generalizing the results for the two-encoder source coding problem with distortion constraints on $Y_1$ and $Y_2$ poses a significant challenge.  The obvious point of difficulty in the proof is extending the interpolation argument to higher dimensions so that it yields a distribution with the desired properties.  In fact, a ``quick-fix" to the interpolation argument alone would not be sufficient since this would imply that the Berger-Tung inner bound is tight for more than two encoders.  This is known to be false (even for the logarithmic loss distortion measure) since the Berger-Tung achievability scheme is not optimal for the lossless modulo-sum problem studied by K\"{o}rner and Marton in \cite{bib:KornerMarton1979}. 

\section*{Acknowledgement}
The authors would like to thank Professors Suhas Diggavi and Aaron Wagner for the helpful discussions on this topic.  

\appendix

\section{Cardinality Bounds on Auxiliary Random Variables} \label{app:cardBounds}

In order to obtain tight cardinality bounds on the auxiliary random variables used throughout this paper, we refer to a recent result by Jana.  In \cite{bib:Jana2009}, the author carefully applies the Caratheodory-Fenchel-Eggleston theorem in order to obtain tight cardinality bounds on the auxiliary random variables in the Berger-Tung inner bound.  This result extends the results and techniques employed by Gu and Effros for the Wyner-Ahlswede-K\"{o}rner problem \cite{bib:GuEffros2007}, and by Gu, Jana, and Effros for the Wyner-Ziv problem \cite{bib:GuJana2008}.  We now state Jana's result, appropriately modified for our purposes:

Consider an arbitrary joint distribution $p(v,y_1,\dots,y_m)$ with random variables $V,Y_1,\dots,Y_m$ coming from alphabets $\mathcal{V},\mathcal{Y}_1,\dots, \mathcal{Y}_m$ respectively.

Let $d_l: \mathcal{V}\times \hat{\mathcal{V}}_l \rightarrow \mathbb{R}$, $1\leq l \leq L$ be arbitrary distortion measures defined for possibly different reproduction alphabets $\hat{\mathcal{V}}_l$.

\begin{definition}
Define $\mathcal{A}^{\star}$ to be the set of $(m+L)$-vectors $(R_1,\dots,R_m,D_1,\dots,D_L)$ satisfying the following conditions:
\begin{enumerate}
\item auxiliary random variables $U_1,\dots,U_m$ exist such that 
\begin{align*}
\sum_{i\in \mathcal{I}}R_i \geq I(Y_{\mathcal{I}};U_{\mathcal{I}}|U_{\mathcal{I}^c}), \mbox{~for all  $\mathcal{I}\subseteq \{1,\dots,m\}$, and}
\end{align*}
\item mappings $\psi_l : \mathcal{U}_1\times \dots \times \mathcal{U}_m \rightarrow \hat{\mathcal{V}}_l$, $1\leq l \leq L$ exist such that
\begin{align*}
\mathbb{E} d_l(V,\psi_l(U_1,\dots, U_m))\leq D_l
\end{align*}
\end{enumerate}
for some joint distribution
\begin{align*}
p(v,y_1,\dots,y_m)\prod_{j=1}^m p(u_j|y_j).
\end{align*}
\end{definition}

\begin{lemma}[Lemma 2.2 from \cite{bib:Jana2009}]\label{lem:Jana}
Every extreme point of $\mathcal{A}^{\star}$ corresponds to
some choice of auxiliary variables $U_1,\dots, U_m$ with alphabet
sizes $|\mathcal{U}_j|\leq |\mathcal{Y}_j|$, $1\leq j \leq m$.
\end{lemma}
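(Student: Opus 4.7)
The argument follows the standard Carath\'eodory-Fenchel-Eggleston (CFE) support-lemma paradigm, applied iteratively over $j=1,\dots,m$. Fix an extreme point $(R_1^*,\dots,R_m^*,D_1^*,\dots,D_L^*)$ of $\mathcal{A}^{\star}$ together with an achieving joint distribution $p(v,y_1,\dots,y_m)\prod_k p(u_k|y_k)$ and decoders $\psi_1,\dots,\psi_L$. For a fixed index $j$, the plan is to replace $U_j$ by $U_j'$ with $|\mathcal{U}_j'|\leq|\mathcal{Y}_j|$ while preserving both the Markov factorization $U_j'\leftrightarrow Y_j\leftrightarrow(V,Y_{-j},U_{-j})$ and the rate-distortion vector; iterating over $j$ then yields the lemma.

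Holding the joint distribution of $(V,Y_1,\dots,Y_m)$ and of the $U_k$ for $k\neq j$ (with their original dependences on $Y_k$) fixed, I parametrize candidate replacements for $U_j$ by the law $P_{U_j}$, viewing each atom $u_j$ as carrying a conditional $p(\cdot|u_j)\in\mathcal{P}(\mathcal{Y}_j)$. Every quantity of interest --- the marginal $p(y_j)$, each mutual information $I(Y_{\mathcal{I}};U_{\mathcal{I}}|U_{\mathcal{I}^c})$, and each expected distortion $\mathbb{E}[d_l(V,\psi_l(U_1,\dots,U_m))]$ --- then becomes an integral, against $P_{U_j}$, of a continuous functional of $p(\cdot|u_j)$. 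Preserving the marginal $p(y_j)$ accounts for $|\mathcal{Y}_j|-1$ independent linear constraints; by the Fenchel-Eggleston strengthening of Carath\'eodory (which applies since the simplex $\mathcal{P}(\mathcal{Y}_j)$ is connected), a replacement law supported on at most $|\mathcal{Y}_j|-1+K$ atoms is available whenever we insist on preserving $K$ additional scalar functionals.

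The technical heart of the argument --- and the main obstacle --- is to show that $K=1$ suffices. The plan is to use extremality of $(R_1^*,\dots,R_m^*,D_1^*,\dots,D_L^*)$ in $\mathcal{A}^{\star}$ to produce a separating linear functional $\langle\lambda,(R,D)\rangle$ that is minimized on $\mathcal{A}^{\star}$ at this point, and to have CFE preserve precisely the corresponding scalar expectation together with the marginal. After the reduction, the modified distribution still lies in $\mathcal{A}^{\star}$ (each rate/distortion functional continues to give a valid bound on the achievable region) and attains a point on the supporting hyperplane; an averaging argument combined with extremality then forces the reduced vector to coincide with the original extreme point. The delicate sub-case is an extreme point that is not exposed by any single linear functional: here one iterates the construction along the minimal face of $\mathcal{A}^{\star}$ containing the point, exploiting the polymatroidal structure of the Berger-Tung rate region (cf.\ the enumeration of extreme points in the proof of Theorem~\ref{thm:CEOregion} and the submodularity argument of Appendix~\ref{app:mEncCEO}) to show that one scalar functional per reduction round always suffices. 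Iterating over $j=1,\dots,m$ then delivers the claimed alphabet bounds.
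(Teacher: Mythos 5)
The paper does not prove this lemma: it is quoted directly from Jana~\cite{bib:Jana2009} and used as a black box, so there is no in-paper argument against which to compare yours. Evaluated on its own, your proposal sketches the general Carath\'eodory--Fenchel--Eggleston strategy that Jana's proof also uses, and the moment-count ($|\mathcal{Y}_j|-1$ constraints to fix the marginal of $Y_j$ plus one additional functional, giving support $\leq|\mathcal{Y}_j|$ via Fenchel--Eggleston on the connected simplex $\mathcal{P}(\mathcal{Y}_j)$) is the right target.

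However, there is a genuine gap precisely at what you call the ``technical heart.'' Preserving only the scalar $\langle\lambda,(R,D)\rangle$ under the CFE reduction yields a new distribution achieving some $(R',D')\in\mathcal{A}^{\star}$ with $\langle\lambda,(R',D')\rangle=\langle\lambda,(R^*,D^*)\rangle$, but this alone does not force $(R',D')=(R^*,D^*)$. That conclusion requires $(R^*,D^*)$ to be the \emph{unique} minimizer of $\langle\lambda,\cdot\rangle$ over $\mathcal{A}^{\star}$, i.e.\ an \emph{exposed} point, not merely an extreme point. Your proposed fix for the non-exposed case --- ``iterate along the minimal face $\ldots$ exploiting the polymatroidal structure'' --- is announced, not argued: it is not shown that each face-restricted round again needs only one extra preserved functional (the affine dimension of that face may exceed one), nor that the iteration terminates without reintroducing a large support, nor that the convex-analytic apparatus applies as stated given that $\mathcal{A}^{\star}$ here is a union of polyhedra over single-letter distributions and is not \emph{a priori} convex, so that ``extreme point of $\mathcal{A}^{\star}$'' itself needs to be reconciled with ``extreme point of $\mathrm{conv}(\mathcal{A}^{\star})$.'' One also needs to actually verify (rather than assert) that each $I(Y_{\mathcal{I}};U_{\mathcal{I}}\mid U_{\mathcal{I}^c})$ and each $\mathbb{E}\,d_l$ decomposes as a constant plus an integral, against $P_{U_j}$, of a function of $p(\cdot\mid u_j)$ alone --- this does hold, via the Markov structure $U_j\leftrightarrow Y_j\leftrightarrow(V,Y_{-j},U_{-j})$, but it is a step, not a remark. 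As written, then, your argument establishes the cardinality bound only for exposed extreme points, and the extension to general extreme points is exactly the nontrivial content of Jana's lemma.
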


In order to obtain the cardinality bounds for the CEO problem, we simply let $L=1$, $V=X$, and $\hat{\mathcal{V}}_1=\hat{\mathcal{X}}$.  Defining 
\begin{align*}
d_1({x},\hat{x})=\log\left(\frac{1}{\hat{x}(x)} \right),
\end{align*}
we see that $\overline{ \mathcal{RD} }_{CEO}^{ \star } = \mbox{conv} \left(\mathcal{A}^{\star} \right)$, where $\mbox{conv} \left(\mathcal{A}^{\star} \right)$ denotes the convex hull of $\mathcal{A}^{\star}$.  Therefore, Lemma \ref{lem:Jana} implies that all extreme points of $\overline{ \mathcal{RD} }_{CEO}^{ \star }$ are achieved with a choice of auxiliary random variables $U_1,\dots, U_m$ with alphabet
sizes $|\mathcal{U}_j|\leq |\mathcal{Y}_j|$, $1\leq j \leq m$. By timesharing between extreme points, any point in $\overline{ \mathcal{RD} }_{CEO}^{ \star }$ can be achieved for these alphabet sizes.

Obtaining the cardinality bounds for the multiterminal source coding problem proceeds in a similar fashion.  In particular, let $L=m=2$, $V=(Y_1,Y_2)$, and $\hat{\mathcal{V}}_j=\hat{\mathcal{Y}_j}$, $j=1,2$.  Defining 
\begin{align*}
d_j((y_1,y_2),\hat{y}_j)=\log\left(\frac{1}{\hat{y}_j(y_j)} \right) \mbox{~for $j=1,2$},
\end{align*}
we see that $\overline{ \mathcal{RD} }^{ \star } = \mbox{conv} \left(\mathcal{A}^{\star} \right)$.  In this case, Lemma \ref{lem:Jana} implies that all extreme points of $\overline{ \mathcal{RD} }^{ \star }$ are achieved with a choice of auxiliary random variables $U_1,U_2$ with alphabet
sizes $|\mathcal{U}_j|\leq |\mathcal{Y}_j|$, $1\leq j \leq 2$.  By timesharing between extreme points, any point in $\overline{ \mathcal{RD} }^{ \star }$ can be achieved for these alphabet sizes.

In order to obtain cardinality bounds on the timesharing variable $Q$, we can apply Caratheodory's theorem (cf. \cite{bib:Witsenhausen1980}). In particular, if $C\subset\mathbb{R}^n$ is compact, then any point in $\mbox{conv}(C)$ is a convex combination of at most $n+1$ points of $C$.  Taking $C$ to be the closure of the set of extreme points of $\mathcal{A}^{\star}$ is sufficient for our purposes (boundedness of $C$ can be dealt with by a standard truncation argument).

\section{Extension of CEO Results to $m$ Encoders} \label{app:mEncCEO}
In this appendix, we prove the generalization of Theorem \ref{thm:CEOregion} to $m$ encoders, which essentially amounts to extending the argument in the proof of Theorem \ref{thm:CEOregion} to the general case.  We begin by stating the $m$-encoder generalizations of Theorems \ref{thm:CEOAchregion} and \ref{thm:CEOconv}, the proofs of which are trivial extensions of the proofs given for the two-encoder case and are therefore omitted.

\begin{definition}
Let $\mathcal{R}^i_{CEO,m}$ be the set of all $(R_1,\dots,R_m,D)$ satisfying
\begin{align*}
\sum_{i\in \mathcal{I}} R_i & \geq I(Y_{\mathcal{I}};U_{\mathcal{I}}|U_{\mathcal{I}^c},Q) \mbox{~for all $\mathcal{I}\subseteq \{1,\dots,m\}$}\\
D &\geq H(X|U_1,\dots ,U_m ,Q).
\end{align*}
for some joint distribution $p(q)p(x)\prod_{i=1}^m p(y_i|x) p(u_i|y_i,q)$. 
\end{definition}

\begin{theorem} \label{thm:CEOAchregionMenc}
All rate distortion vectors $(R_1,\dots,R_m,D)\in \mathcal{R}^i_{CEO,m}$ are achievable.
\end{theorem}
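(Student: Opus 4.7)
The plan is to prove Theorem~\ref{thm:CEOAchregionMenc} by directly mimicking the two-encoder argument used for Theorem~\ref{thm:CEOAchregion}, since nothing in that proof was particular to the number of encoders beyond the invocation of the appropriate Berger-Tung inner bound. The key ingredient will be the $m$-encoder generalization of Proposition~\ref{prop:BTCEO}, which states that a rate distortion vector $(R_1,\dots,R_m,D)$ is achievable whenever there exist a joint distribution $p(q)p(x)\prod_{i=1}^m p(y_i|x)p(u_i|y_i,q)$ and a reproduction function $f:\mathcal{U}_1\times\cdots\times \mathcal{U}_m\times \mathcal{Q}\to \hat{\mathcal{X}}$ such that $\sum_{i\in\mathcal{I}} R_i \geq I(Y_\mathcal{I};U_\mathcal{I}|U_{\mathcal{I}^c},Q)$ for every $\mathcal{I}\subseteq\{1,\dots,m\}$ and $D\geq \mathbb{E}[d(X,f(U_1,\dots,U_m,Q))]$. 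This is the standard quantize-and-bin scheme where each encoder independently quantizes $Y_i^n$ to $U_i^n$ matching the prescribed conditional marginal and then performs distributed Slepian-Wolf binning against all the others; its proof is essentially identical to the two-encoder proof and is treated in detail in \cite{bib:ElGamalYHKim2012}.

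With this inner bound in hand, I would choose the reproduction function
\[
f(u_1,\dots,u_m,q) \;\triangleq\; \Pr[X=x \mid U_1=u_1,\dots,U_m=u_m,Q=q],
\]
i.e., the soft estimate is the true posterior of $X$ given the auxiliary variables. Under logarithmic loss, a direct computation (identical in form to the one performed in the proof of Theorem~\ref{thm:CEOAchregion}) yields
\[
\mathbb{E}\!\left[d\bigl(X,f(U_1,\dots,U_m,Q)\bigr)\right] \;=\; H(X\mid U_1,\dots,U_m,Q).
\]
Plugging this identity into the distortion constraint of the $m$-encoder Berger-Tung bound reproduces precisely the defining inequalities of $\mathcal{R}^i_{CEO,m}$, so every element of $\mathcal{R}^i_{CEO,m}$ is achievable.

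I do not expect any real obstacle, which is consistent with the paper's remark that the extension is trivial. The only nontrivial piece is the $m$-encoder Berger-Tung achievability itself, but this is a well-known result that requires no new idea beyond the two-encoder version: the same joint typicality decoding works verbatim with the rate region replaced by its natural contra-polymatroidal form over all subsets $\mathcal{I}$. Once that black box is quoted, the reduction from expected log-loss to conditional entropy via the posterior reproduction is one line and identical to the two-encoder case, so the proof is complete.
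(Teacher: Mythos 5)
Your proposal is correct and is exactly what the paper intends: the paper itself states that the proof of Theorem~\ref{thm:CEOAchregionMenc} is a ``trivial extension'' of the two-encoder proof and omits it, and your argument --- invoke the $m$-encoder Berger-Tung inner bound, take the reproduction $f$ to be the posterior of $X$ given $(U_1,\dots,U_m,Q)$, and observe that expected logarithmic loss then equals $H(X\mid U_1,\dots,U_m,Q)$ --- is precisely that extension.
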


\begin{definition}
Let $\mathcal{R}^o_{CEO,m}$ be the set of $(R_1,\dots,R_m,D)$ satisfying
\begin{align}
\sum_{i\in \mathcal{I}} R_i & \geq \sum_{i\in \mathcal{I}} I(U_i;Y_i|X,Q) +
H(X|U_{\mathcal{I}^c},Q)-D  \mbox{~for all $\mathcal{I}\subseteq \{1,\dots,m\}$} \label{eqn:Ro1} \\
D &\geq H(X|U_1,\dots ,U_m ,Q) \label{eqn:Ro2}.
\end{align}
for some joint distribution $p(q)p(x)\prod_{i=1}^m p(y_i|x) p(u_i|y_i,q)$. 
\end{definition}

\begin{theorem} \label{thm:CEOconvMenc}
If $(R_1,\dots,R_m,D)$ is strict-sense achievable, then  \linebreak $(R_1,\dots,R_m,D) \in \mathcal{R}^o_{CEO,m}$.
\end{theorem}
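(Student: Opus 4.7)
The plan is to mimic the two-encoder converse proof (Theorem~\ref{thm:CEOconv}) essentially verbatim, replacing the subset $A\subseteq\{1,2\}$ with an arbitrary $\mathcal{I}\subseteq\{1,\dots,m\}$ and carefully using the fact that the $Y_i$'s are mutually conditionally independent given $X$. The auxiliary random variables will be chosen identically: set $F_i=g_i^{(n)}(Y_i^n)$, $U_{i,j}\triangleq (F_i,Y_i^{j-1})$, and $Q_j\triangleq X^n\setminus X_j$, with the usual shorthand $F_{\mathcal{I}}=\{F_i\}_{i\in\mathcal{I}}$, $U_{\mathcal{I},j}=\{U_{i,j}\}_{i\in\mathcal{I}}$, etc.

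First I would bound the subset sum rate by
\begin{align*}
n\sum_{i\in\mathcal{I}} R_i &\geq H(F_{\mathcal{I}}) \geq I(Y_{\mathcal{I}}^n;F_{\mathcal{I}}|F_{\mathcal{I}^c}) = I(X^n,Y_{\mathcal{I}}^n;F_{\mathcal{I}}|F_{\mathcal{I}^c}),
\end{align*}
where the last equality uses that $F_{\mathcal{I}}$ is a function of $Y_{\mathcal{I}}^n$. Expanding via chain rule and invoking the conditional independence $p(y_1,\dots,y_m|x)=\prod_{i=1}^m p(y_i|x)$ — which implies that the $F_i$'s are mutually conditionally independent given $X^n$ and that $F_{\mathcal{I}^c}$ drops out when conditioning on $X^n$ — I would obtain
\begin{align*}
n\sum_{i\in\mathcal{I}} R_i \geq H(X^n|F_{\mathcal{I}^c}) - H(X^n|F_1,\dots,F_m) + \sum_{i\in\mathcal{I}}\sum_{j=1}^n I(Y_{i,j};F_i|X^n,Y_i^{j-1}).
\end{align*}
Invoking Lemma~\ref{lem:minDistortion} (which holds verbatim for $m$ encoders since its proof does not depend on the number of encoders beyond the assignment $Z=(F_1,\dots,F_m)$) we get $H(X^n|F_1,\dots,F_m)\leq nD$. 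The i.i.d.\ structure of the source yields the Markov chain $Y_{i,j}\leftrightarrow X^n\leftrightarrow Y_i^{j-1}$, letting us rewrite $I(Y_{i,j};F_i|X^n,Y_i^{j-1})=I(Y_{i,j};U_{i,j}|X_j,Q_j)$. Combined with the fact that conditioning reduces entropy applied to $H(X_j|F_{\mathcal{I}^c},X^{j-1})\geq H(X_j|U_{\mathcal{I}^c,j},Q_j)$, this single-letterizes the bound on $\sum_{i\in\mathcal{I}} R_i$, while Lemma~\ref{lem:minDistortion} also directly yields the distortion inequality \eqref{eqn:Ro2}.

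Finally, a standard time-sharing argument introduces a uniformly distributed variable over $\{1,\dots,n\}$ absorbed into $Q$, giving a joint distribution of the required Markov form. Here I would verify that, conditioned on $Q_j$, we retain the long Markov chain structure $U_{i,j}\leftrightarrow Y_{i,j}\leftrightarrow X_j\leftrightarrow (Y_{\mathcal{I}\setminus\{i\},j},U_{\mathcal{I}\setminus\{i\},j})$ for each $i$, which guarantees that the auxiliaries factor as $p(u_i|y_i,q)$ in the limiting single-letter distribution. The main (minor) obstacle I anticipate is bookkeeping the conditional-independence reductions when $|\mathcal{I}^c|>1$: specifically, justifying that $I(X^n;F_{\mathcal{I}}|F_{\mathcal{I}^c})$ splits so that the cross term collapses to $\sum_{i\in\mathcal{I}} I(F_i;Y_i^n|X^n)$ without a spurious $F_{\mathcal{I}^c}$ conditioning. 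This follows cleanly from the product structure $p(y_1,\dots,y_m|x)=\prod_i p(y_i|x)$, but it is the one place where the two-encoder argument appealed (implicitly) to the simpler $Y_1\leftrightarrow X\leftrightarrow Y_2$ chain and now must be stated in its $m$-variate form.
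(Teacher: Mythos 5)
Your proposal is correct and follows exactly the route the paper intends: the paper omits this proof, stating that Theorems \ref{thm:CEOAchregionMenc} and \ref{thm:CEOconvMenc} are ``trivial extensions'' of the two-encoder arguments, and your step-by-step replacement of $A\subseteq\{1,2\}$ by $\mathcal{I}\subseteq\{1,\dots,m\}$ with the same choices $U_{i,j}=(F_i,Y_i^{j-1})$, $Q_j=X^n\setminus X_j$ is precisely that extension. You also correctly isolate the one point that needs explicit care in the $m$-encoder setting, namely that $I(Y_{\mathcal{I}}^n;F_{\mathcal{I}}|X^n,F_{\mathcal{I}^c})=\sum_{i\in\mathcal{I}}I(Y_i^n;F_i|X^n)$, which indeed follows from the mutual conditional independence of $Y_1^n,\dots,Y_m^n$ given $X^n$.
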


Given the definitions of $\mathcal{R}^i_{CEO,m}$ and $\mathcal{R}^o_{CEO,m}$, the generalization of Theorem \ref{thm:CEOregion} to $m$ encoders is an immediate consequence of the following lemma:
\begin{lemma} \label{lem:dominationLemmaMenc}
$\mathcal{R}^o_{CEO,m} \subseteq \mathcal{R}^i_{CEO,m}$.
\end{lemma}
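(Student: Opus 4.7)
The plan is to follow the extreme-point domination strategy used in the two-encoder proof of Theorem \ref{thm:CEOregion}, now systematized via polymatroid theory. First I would fix any admissible distribution $p(q)p(x)\prod_{i=1}^{m} p(y_i|x)p(u_i|y_i,q)$ from the definition of $\mathcal{R}^o_{CEO,m}$ and let $\mathcal{Q}$ denote the polyhedron of $(R_1,\ldots,R_m,D)$ satisfying \eqref{eqn:Ro1}--\eqref{eqn:Ro2} together with $R_i \geq 0$. Since $\mathcal{Q}$ and $\mathcal{R}^i_{CEO,m}$ are both upward-closed coordinate-wise, it suffices to show that every extreme point of $\mathcal{Q}$ is dominated coordinate-wise by some element of $\mathcal{R}^i_{CEO,m}$; a time-sharing argument absorbed into $Q$ then finishes the job.

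Next I would consolidate \eqref{eqn:Ro1}--\eqref{eqn:Ro2} as the single family $\sum_{i \in \mathcal{I}} R_i + D \geq g(\mathcal{I})$ over $\mathcal{I} \subseteq \{1,\ldots,m\}$, where $g(\mathcal{I}) := \sum_{i \in \mathcal{I}} I(U_i; Y_i | X, Q) + H(X | U_{\mathcal{I}^c}, Q)$ (the case $\mathcal{I}=\emptyset$ recovering the distortion constraint), and verify that $g$ is supermodular. The key structural input is the conditional independence $U_j \perp U_\mathcal{S} | X, Q$ for $j \notin \mathcal{S}$, inherited from the Markov structure $Y_1 \leftrightarrow X \leftrightarrow \cdots \leftrightarrow Y_m$ together with the factorization $\prod_i p(u_i|y_i,q)$. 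A direct expansion, followed by the identity $I(U_j; Y_j | X, Q) + I(U_j; X | U_\mathcal{S}, Q) = I(U_j; Y_j | U_\mathcal{S}, Q)$ (valid for $j \notin \mathcal{S}$, and itself a corollary of the same conditional independence), yields
\begin{align*}
g(\mathcal{I} \cup \{j\}) - g(\mathcal{I}) = I(U_j; Y_j | U_{\mathcal{I}^c \setminus \{j\}}, Q), \qquad j \notin \mathcal{I},
\end{align*}
which is non-decreasing in $\mathcal{I}$ because shrinking the conditioning only enlarges the mutual information. Hence $g$ is supermodular.

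The second step is to enumerate the extreme points of $\mathcal{Q}$. Substituting $\bar{D} := D - g(\emptyset) \geq 0$ and appending it as an extra coordinate (with index $0$) promotes $\mathcal{Q}$ to a genuine contra-polymatroid in $\mathbb{R}^{m+1}$, whose extreme points are, by the contra-polymatroidal analogue of Edmonds' greedy algorithm, in bijection with pairs $(\mathcal{S}, \sigma)$ where $\mathcal{S} \subseteq \{1,\ldots,m\}$ and $\sigma$ is a permutation of $\mathcal{S}^c$. At the $(\mathcal{S},\sigma)$-vertex we have $R_i = 0$ for $i \in \mathcal{S}$, $D = g(\mathcal{S}) = \sum_{i \in \mathcal{S}} I(U_i; Y_i | X, Q) + H(X | U_{\mathcal{S}^c}, Q)$, and
\begin{align*}
R_{\sigma(k)} = I(U_{\sigma(k)}; Y_{\sigma(k)} | U_{\sigma(k+1)}, \ldots, U_{\sigma(|\mathcal{S}^c|)}, Q), \quad 1 \leq k \leq |\mathcal{S}^c|,
\end{align*}
the closed form following by telescoping the $g$-increments and invoking the identity above. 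For $m=2$ this recovers precisely the five vertices $P_1,\ldots,P_5$ enumerated in the proof of Theorem \ref{thm:CEOregion}.

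The final step is the domination itself. Given the $(\mathcal{S},\sigma)$-vertex, I would replace each $U_i$ with $i \in \mathcal{S}$ by a deterministic constant and consider the Berger-Tung corner of the resulting $|\mathcal{S}^c|$-encoder reduced CEO problem indexed by $\sigma$, padded by $R_i = 0$ for $i \in \mathcal{S}$. The rate coordinates of this point coincide exactly with those of the vertex, and its distortion $H(X | U_{\mathcal{S}^c}, Q)$ is dominated by $g(\mathcal{S})$ because $\sum_{i \in \mathcal{S}} I(U_i; Y_i | X, Q) \geq 0$; the resulting point therefore lies in $\mathcal{R}^i_{CEO,m}$ and dominates the vertex. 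The principal obstacle is the enumeration step: one must verify rigorously that, under supermodularity of $g$ and the non-negativity constraints $R_i \geq 0$, the $(m{+}1)$-dimensional polyhedron $\mathcal{Q}$ admits exactly the $(\mathcal{S},\sigma)$-parametrization of extreme points, with no exotic vertices in which the distortion coordinate interacts with the rate polymatroid in an unexpected way. This is the contra-polymatroidal calculation alluded to in the footnote of the two-encoder proof; once it is in hand, the algebra at each vertex is routine, and the domination step reduces to a one-line application of Berger-Tung for the reduced distribution.
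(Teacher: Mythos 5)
Your approach is correct and takes a genuinely different route from the paper's. The paper fixes the distortion value $D$ and works in $\mathbb{R}^m$ with the rate polytope $\mathcal{P}_D$ defined by the \emph{truncated} supermodular function $f^+ = \max\{f,0\}$, where $f(\mathcal{I}) = g(\mathcal{I}) - D$; a substantial part of the appendix is devoted to verifying that truncation preserves supermodularity, and the vertex-domination step then interpolates, with weight $\theta$, between the Berger--Tung corners for $\{j,\dots,m\}$ and $\{j+1,\dots,m\}$, where $j$ is the first index at which the greedy increment of $f^+$ becomes positive. You instead append $\bar{D} = D - g(\emptyset)$ as an extra coordinate, work in $\mathbb{R}^{m+1}$, and avoid both the explicit truncation and the $\theta$-interpolation: at the $(\mathcal{S},\sigma)$-vertex you replace $U_i$ with constants for $i \in \mathcal{S}$ and read off a single Berger--Tung corner of the reduced CEO problem, with $g(\mathcal{S}) \geq H(X|U_{\mathcal{S}^c},Q)$ supplying the domination. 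The step you explicitly defer --- checking that the lifted rank function $\tilde{g}$ on $2^{\{0,\dots,m\}}$, equal to $g(\mathcal{J}\setminus\{0\}) - g(\emptyset)$ when $0 \in \mathcal{J}$ and $0$ otherwise, is supermodular --- does go through, but its proof is a case analysis over whether $0$ lies in $S$, $T$, both, or neither, using monotonicity and supermodularity of $g$; this is essentially the same mathematical content as the paper's proof that $f^+$ is supermodular, merely packaged differently. Two small cautions: the monotonicity of the increment $g(\mathcal{I}\cup\{j\}) - g(\mathcal{I}) = I(U_j;Y_j|U_{\mathcal{I}^c\setminus\{j\}},Q)$ in $\mathcal{I}$ is \emph{not} a generic ``less conditioning, more information'' fact, but relies on the Markov chain $U_j \leftrightarrow (Y_j,Q) \leftrightarrow U_{\mathcal{A}}$ (giving $I(U_j;Y_j|U_\mathcal{A},Q) = H(U_j|U_\mathcal{A},Q) - H(U_j|Y_j,Q)$), which is a slightly different conditional independence than the one you cite; and the $(\mathcal{S},\sigma)$-parametrization of extreme points is a surjection rather than a bijection, though that does not affect the argument. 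Your version has the advantage that the $(\mathcal{S},\sigma)$ enumeration recovers the two-encoder vertices $P_1,\ldots,P_5$ without fixing a distortion level, and the domination step is a one-line comparison rather than the paper's telescoping estimate.
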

\begin{proof}
Suppose $(R_1,\dots,R_m,D)\in \mathcal{R}^o_{CEO,m}$, then by definition there exists $p(q)$ and  conditional distributions $\{p(u_i|y_i,q)\}_{i=1}^m$ so that \eqref{eqn:Ro1} and \eqref{eqn:Ro2} are satisfied.  For the joint distribution corresponding to $p(q)$ and  conditional distributions $p\{(u_i|y_i,q)\}_{i=1}^m$, define $\mathcal{P}_D\subset \mathbb{R}^m$ to be the polytope defined by the inequalities \eqref{eqn:Ro1}.  Now, to show $(R_1,\dots,R_m,D)\in \mathcal{R}^i_{CEO,m}$, it suffices to show that each extreme point of $\mathcal{P}_D$ is dominated by a point in $\mathcal{R}^i_{CEO,m}$ that achieves distortion at most $D$.


To this end, define the set function $f: 2^{[m]} \rightarrow \mathbb{R}$ as follows:
\begin{align}
f(\mathcal{I}) &:= I(Y_{\mathcal{I}};U_{\mathcal{I}}| U_{\mathcal{I}^c},Q)-(D-H(X|U_1,\dots,U_m,Q)) \notag \\
&=\sum_{i\in \mathcal{I}} I(U_i;Y_i|X,Q) +
H(X|U_{\mathcal{I}^c},Q)-D. \notag
\end{align}

It can be verified that the function $f$ and the function $f^+(\mathcal{I})=\max \{ f(\mathcal{I}),0 \}$ are supermodular functions (see Appendix \ref{app:submodular}).  By construction, $\mathcal{P}_D$ is equal to the set of $(R_1,\dots,R_m)$ which satisfy:
\begin{align*}
\sum_{i\in \mathcal{I}} R_i &\geq f^+(\mathcal{I}).
\end{align*}

It follows by basic results in submodular optimization (see Appendix \ref{app:submodular}) that, for a linear ordering $i_1\prec i_2 \prec \dots \prec i_m$ of $\{1,\dots,m\}$, an extreme point of $\mathcal{P}_D$ can be greedily computed as follows:
\begin{align*}
\tilde{R}_{i_j} = f^+(\{i_1,\dots,i_j \})-f^+(\{i_1,\dots,i_{j-1} \}) \mbox{~for $j=1,\dots,m$}.
\end{align*}
Furthermore, all extreme points of  $\mathcal{P}_D$ can be enumerated by looking over all linear orderings $i_1\prec i_2 \prec \dots \prec i_m$ of $\{1,\dots,m\}$.  Each ordering of $\{1,\dots,m\}$ is analyzed in the same manner, hence we assume (for notational simplicity) that the ordering we consider is the natural ordering $i_j=j$.

Let $j$ be the first index for which $\tilde{R}_j>0$.  Then, by construction,
\begin{align*}
\tilde{R}_k=I(U_k;Y_k|U_{k+1},\dots , U_m,Q) \mbox{~for all $k>j$.}
\end{align*}
Furthermore, we must have $f(\{1,\dots,j'\}) \leq 0$ for all $j'<j$.  Thus, $\tilde{R}_j$ can be expressed as
\begin{align*}
\tilde{R}_j &= \sum_{i=1}^j I(Y_i;U_i|X,Q)+H(X|U_{j+1},\dots,U_m,Q)-D \\
&=I(Y_j;U_j|U_{j+1},\dots,U_m,Q) + f(\{1,\dots,j-1\})\\
&=(1-\theta)I(Y_j;U_j|U_{j+1},\dots,U_m,Q),
\end{align*}
where $\theta \in [0,1)$ is defined as:
\begin{align*}
\theta&=\frac{-f(\{1,\dots,j-1\})}{I(Y_j;U_j|U_{j+1},\dots,U_m,Q)}\\
&=\frac{D-H(X|U_1,\dots,U_m,Q)-I(U_1,\dots,U_{j-1};Y_1,\dots,Y_{j-1}|U_j,\dots,U_m,Q)}{I(Y_j;U_j|U_{j+1},\dots,U_m,Q)}.
\end{align*}

By the results of Theorem \ref{thm:CEOAchregionMenc}, the rates $(\tilde{R}_1,\dots,\tilde{R}_m)$ permit the following coding scheme:  For a fraction $(1-\theta)$ of the time, a codebook can be used that allows the decoder to recover $U_j^n,\dots,U_m^n$ with high probability.  The other fraction $\theta$ of the time, a codebook can be used that allows the decoder to recover $U_{j+1}^n,\dots,U_m^n$ with high probability.  As $n\rightarrow \infty$, this coding scheme can achieve distortion
%
\begin{align}
\tilde{D}&=(1-\theta)H(X|U_j,\dots,U_m,Q)+\theta H(X|U_{j+1},\dots,U_m,Q)\notag\\
&=H(X|U_j,\dots,U_m,Q)+\theta I(X;U_j|U_{j+1},\dots,U_m,Q)\notag\\
&=H(X|U_j,\dots,U_m,Q)+\frac{I(X;U_j|U_{j+1},\dots,U_m,Q)}{I(Y_j;U_j|U_{j+1},\dots,U_m,Q)}\times \notag\\
& \quad \left[D-H(X|U_1,\dots,U_m,Q)-I(U_1,\dots,U_{j-1};Y_1,\dots,Y_{j-1}|U_j,\dots,U_m,Q)\right]\notag\\
&\leq H(X|U_j,\dots,U_m,Q)+D-H(X|U_1,\dots,U_m,Q)\notag\\
&\quad -I(U_1,\dots,U_{j-1};Y_1,\dots,Y_{j-1}|U_j,\dots,U_m,Q) \label{eqn:DPI} \\
&= D+I(X;U_1,\dots U_{j-1}| U_j,\dots,U_m,Q) \notag \\
&\quad -I(U_1,\dots,U_{j-1};Y_1,\dots,Y_{j-1}|U_j,\dots,U_m,Q)\notag\\
&= D-I(U_1,\dots,U_{j-1};Y_1,\dots,Y_{j-1}|X,U_j,\dots,U_m,Q)\notag\\
&\leq D.\label{eqn:nonNegI}
\end{align}
In the preceding string of inequalities \eqref{eqn:DPI} follows since $U_j$ is conditionally independent of everything else given $(Y_j,Q)$, and \eqref{eqn:nonNegI} follows from the non-negativity of mutual information.

Therefore, for every extreme point $(\tilde{R}_1,\dots,\tilde{R}_m)$ of $\mathcal{P}_D$, the point \linebreak $(\tilde{R}_1,\dots,\tilde{R}_m,D)$ lies in $\mathcal{R}^i_{CEO,m}$. This proves the lemma.
\end{proof}

Finally, we remark that the results of Appendix \ref{app:cardBounds} imply that it suffices to consider auxiliary random variables $U_1,\dots, U_m$ with alphabet
sizes $|\mathcal{U}_j|\leq |\mathcal{Y}_j|$, $1\leq j \leq m$.  The timesharing variable $Q$ requires an alphabet size bounded by $|\mathcal{Q}|\leq m+2$.

\section{Supermodular Functions} \label{app:submodular}
In this appendix, we review some basic results in submodular optimization that were used in Appendix \ref{app:mEncCEO} to prove Lemma \ref{lem:dominationLemmaMenc}.  We tailor our statements toward supermodularity, since this is the property we require in Appendix \ref{app:mEncCEO}.

We begin by defining a supermodular function.  
\begin{definition}
Let $E=\{1,\dots,n\}$ be a finite set.  A function $s:2^E\rightarrow \mathbb{R}$ is \emph{supermodular} if for all $S,T\subseteq  E$
\begin{align}
s(S)+s(T)\leq s(S\cap T)+s(S \cup T). \label{eqn:submodDefn}
\end{align}
\end{definition}

One of the fundamental results in submodular optimization is that a greedy algorithm minimizes a linear function over a supermodular polyhedron.  By varying the linear function to be minimized, all extreme points of the supermodular polyhedron can be enumerated.  In particular, define the supermodular polyhedron $\mathcal{P}(s)\subset \mathbb{R}^n$ to be the set of $x\in \mathbb{R}^n$ satisfying
\begin{align*}
\sum_{i\in T} x_i \geq s(T) \mbox{~for all $T\subseteq E$}.
\end{align*}
The following theorem provides an algorithm that enumerates the extreme points of $\mathcal{P}(s)$.

\begin{theorem}[See \cite{bib:Schrijver2003,bib:Fujishige2010,bib:McCormick2005}]\label{thm:greedy}
For a linear ordering $e_1 \prec e_2 \prec \dots \prec e_n$ of the elements in $E$, Algorithm \ref{alg:greedy} returns an extreme point $v$ of $\mathcal{P}(s)$.  Moreover, all extreme points of $\mathcal{P}(s)$ can be enumerated by considering all linear orderings of the elements of $E$.

\begin{pseudocode}[ruled]{Greedy}{s,E,\prec} \label{alg:greedy}
\COMMENT{ Returns extreme point $v$ of $\mathcal{P}(s)$ corresponding to the ordering $\prec$. }\\

\FOR i = 1,\dots n \\
\quad \mbox{Set~} v_i = s(\{e_1,e_2,\dots,e_i\})-s(\{e_1,e_2,\dots,e_{i-1}\})\\
\RETURN{v}
\end{pseudocode}
\end{theorem}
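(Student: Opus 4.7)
The plan is to establish three claims in sequence: (a) the output $v$ of the greedy algorithm is feasible, i.e., $v \in \mathcal{P}(s)$; (b) $v$ is an extreme point of $\mathcal{P}(s)$; and (c) every extreme point of $\mathcal{P}(s)$ arises from some linear ordering of $E$ via the greedy algorithm. Throughout I assume the standard normalization $s(\emptyset) = 0$ (otherwise one translates by this constant).

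For part (a), I would assume without loss of generality that the ordering is $1 \prec 2 \prec \cdots \prec n$, so $v_i = s([i]) - s([i-1])$ with $[i] = \{1,\ldots,i\}$. To verify $\sum_{i \in T} v_i \geq s(T)$ for an arbitrary $T = \{t_1 < \cdots < t_k\}$, I apply supermodularity \eqref{eqn:submodDefn} to the pair $([t_j-1], \{t_1,\ldots,t_j\})$, whose intersection is $\{t_1,\ldots,t_{j-1}\}$ and union is $[t_j]$. This yields
\[
v_{t_j} \;=\; s([t_j]) - s([t_j-1]) \;\geq\; s(\{t_1,\ldots,t_j\}) - s(\{t_1,\ldots,t_{j-1}\}).
\]
Summing over $j = 1, \ldots, k$ telescopes to $\sum_{i \in T} v_i \geq s(T)$. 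Part (b) is then immediate: the chain $[1] \subsetneq [2] \subsetneq \cdots \subsetneq [n]$ yields $n$ tight constraints of the form $\sum_{i \in [j]} v_i = s([j])$ by construction, and the corresponding incidence vectors are linearly independent, so $v$ is uniquely determined as an extreme point.

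For part (c), let $u$ be an arbitrary extreme point of $\mathcal{P}(s)$ and let $\mathcal{F} = \{T \subseteq E : \sum_{i \in T} u_i = s(T)\}$ be its family of tight sets. The key lemma is that $\mathcal{F}$ is closed under union and intersection: writing $u(T) := \sum_{i \in T} u_i$, if $S,T \in \mathcal{F}$ then supermodularity and feasibility give
\[
s(S) + s(T) \;\leq\; s(S \cap T) + s(S \cup T) \;\leq\; u(S \cap T) + u(S \cup T) \;=\; u(S) + u(T) \;=\; s(S) + s(T),
\]
forcing equality throughout, so both $S \cap T$ and $S \cup T$ lie in $\mathcal{F}$. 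Since $u$ is extreme, the incidence vectors of sets in $\mathcal{F}$ must span $\mathbb{R}^n$; combined with lattice closure one can recursively refine any spanning subfamily into a saturated chain $\emptyset = T_0 \subsetneq T_1 \subsetneq \cdots \subsetneq T_n = E$ with $|T_j \setminus T_{j-1}| = 1$ and each $T_j \in \mathcal{F}$. Defining $e_j$ to be the unique element of $T_j \setminus T_{j-1}$ gives a linear ordering under which the greedy algorithm, constrained to hit the tight values $s(T_j)$ on the chain $T_j = \{e_1,\ldots,e_j\}$, reproduces $u$.

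The main obstacle is the chain-refinement step in part (c): one must show that from the spanning property of $\mathcal{F}$ (which is guaranteed by extremality) together with lattice closure, a \emph{saturated} chain of tight sets actually exists. The standard maneuver is to pick any maximal chain in $\mathcal{F}$ starting from $\emptyset$, and to argue that if consecutive terms differed by more than one element, then intersecting (or unioning) with an appropriately chosen element of $\mathcal{F}$ would yield a strictly longer chain, contradicting maximality. Handling this ranking argument cleanly is the most delicate part; the feasibility and extremality portions (parts (a) and (b)) are essentially bookkeeping built on a single supermodularity inequality.
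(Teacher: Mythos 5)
The paper offers no proof of this theorem---its ``proof'' is simply a pointer to the cited references---so there is no in-paper argument to compare against. Your proposal is the standard textbook proof (essentially Edmonds' argument as presented in Schrijver and Fujishige, phrased here for supermodular rather than submodular functions), and it is correct. Part (a) is right: applying \eqref{eqn:submodDefn} to $([t_j-1],\{t_1,\dots,t_j\})$ gives exactly the telescoping estimate. Part (b) is right: the indicator vectors of the chain $[1]\subsetneq\cdots\subsetneq[n]$ form an upper-triangular, hence invertible, system, so the $n$ tight constraints determine $v$ uniquely and it is a vertex. In part (c) the lattice-closure computation is the correct one, and the chain-refinement step you flag as delicate does go through as you describe: if $T_{j-1}\subsetneq T_j$ are consecutive in a maximal chain of tight sets and $i,i'\in T_j\setminus T_{j-1}$ are distinct, then since $\mathcal{F}$ spans $\mathbb{R}^n$ there is $S\in\mathcal{F}$ separating $i$ from $i'$, and $R=(S\cap T_j)\cup T_{j-1}\in\mathcal{F}$ strictly interpolates, contradicting maximality. (Two small points worth making explicit if you write this out fully: $E\in\mathcal{F}$ follows because spanning forces each $i$ to lie in some tight set, and then lattice closure under unions puts $E$ itself in $\mathcal{F}$; and $\mathcal{P}(s)$ contains no line, since any recession direction $d$ with $-d$ also in the recession cone would satisfy $\sum_{i\in T}d_i=0$ for all $T$, forcing $d=0$, so vertices exist.)
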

\begin{proof}
See  \cite{bib:Schrijver2003,bib:Fujishige2010,bib:McCormick2005}.
\end{proof}

Theorem \ref{thm:greedy} is the key tool we employ to establish Lemma \ref{lem:dominationLemmaMenc}.  In order to apply it, we require the following lemma. 

\begin{lemma}
For any joint distribution of the form $p(q)p(x)\prod_{i=1}^m p(y_i|x) p(u_i|y_i,q)$ and fixed $D\in \mathbb{R}$, define the set function $f: 2^{[m]} \rightarrow \mathbb{R}$ as:
\begin{align}
f(\mathcal{I}) &:= I(Y_{\mathcal{I}};U_{\mathcal{I}}| U_{\mathcal{I}^c},Q)-(D-H(X|U_1,\dots,U_m,Q)) \label{eqn:setFunction2} \\
&=\sum_{i\in \mathcal{I}} I(U_i;Y_i|X,Q) +
H(X|U_{\mathcal{I}^c},Q)-D, \notag
\end{align}
and the corresponding non-negative set function $f^+: 2^{[m]} \rightarrow \mathbb{R}$ as $f^+=\max\{ f,0\}$. The functions $f$ and $f^+$ are supermodular.
\end{lemma}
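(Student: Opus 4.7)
My plan is to establish supermodularity of $f$ first, then bootstrap to $f^+$ using monotonicity.

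Using the second expression for $f(\mathcal{I})$, the piece $\sum_{i\in\mathcal{I}} I(U_i;Y_i|X,Q)-D$ is modular (hence trivially supermodular), so it suffices to show that $\mathcal{I}\mapsto H(X|U_{\mathcal{I}^c},Q)$ is supermodular. Substituting $\mathcal{J}=\mathcal{I}^c$, this reduces to the \emph{submodularity} of $\mathcal{J}\mapsto H(X|U_{\mathcal{J}},Q)$. I would verify the diminishing-returns form: for $A\subseteq B\subseteq[m]$ and $i\notin B$,
\[
I(X;U_i|U_A,Q)-I(X;U_i|U_B,Q) \;=\; I(U_i;U_{B\setminus A}|U_A,Q)-I(U_i;U_{B\setminus A}|X,U_A,Q)
\]
by two applications of the chain rule for mutual information. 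Now the Markov structure $p(u_i|y_i,q)$, combined with conditional independence of the $Y_j$'s given $X$, gives $U_i\perp U_{B\setminus A}\mid X,Q$, and this extends to $U_i\perp U_{B\setminus A}\mid X,U_A,Q$ because $U_A$ is a function of $(Y_A,Q)$. The second term therefore vanishes, the first is a non-negative mutual information, and submodularity (hence supermodularity of $f$) follows.

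For $f^+=\max\{f,0\}$, I would first show monotonicity of $f$. Using the first expression for $f$, the increment on adjoining $j\notin\mathcal{I}$ simplifies to
\[
f(\mathcal{I}\cup\{j\})-f(\mathcal{I}) \;=\; I(U_j;Y_j|X,Q)+I(X;U_j|U_{(\mathcal{I}\cup\{j\})^c},Q)\ \geq\ 0,
\]
and $f(\emptyset)=H(X|U_1,\ldots,U_m,Q)-D\leq 0$ by hypothesis \eqref{eqn:Ro2}. The inequality $f^+(S)+f^+(T)\leq f^+(S\cap T)+f^+(S\cup T)$ then follows from a short case split on the signs of $f(S)$ and $f(T)$. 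Both cases where $f(S)$ and $f(T)$ have the same sign are easy; the only delicate case is when exactly one is negative, say $f(T)<0\leq f(S)$, and there monotonicity yields $f(S\cup T)\geq f(S)\geq 0$ and $f(S\cap T)\leq f(T)<0$, so $f^+(S\cup T)=f(S\cup T)$ while $f^+(S\cap T)=0$. The supermodular inequality for $f$, together with the slack $-f(T)>0$, then gives the desired inequality for $f^+$.

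The main obstacle will be that last step: the positive part of a supermodular function is not in general supermodular---a small counterexample on $m=2$ is easy to produce once monotonicity is dropped. Monotonicity of $f$ is therefore essential, and, encouragingly, it drops out of the same Markov structure that drives the submodularity calculation for $H(X|U_{\mathcal{J}},Q)$.
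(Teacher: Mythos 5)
Your overall route for the supermodularity of $f$ is genuinely different from the paper's and, once a sign error in your exposition is corrected, is actually cleaner. The paper works directly from the first expression $f(\mathcal{I})=I(Y_{\mathcal{I}};U_{\mathcal{I}}|U_{\mathcal{I}^c},Q)-\text{const}$, expanding everything into entropies and invoking the chain rule plus ``conditioning reduces entropy.'' You instead peel off the modular piece $\sum_{i\in\mathcal{I}}I(U_i;Y_i|X,Q)-D$ and reduce to a statement about the single set function $\mathcal{J}\mapsto H(X|U_{\mathcal{J}},Q)$, which you attack via the increments $I(X;U_i|U_A,Q)-I(X;U_i|U_B,Q)$. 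This is a more structural argument, and the key chain-rule identity plus the conditional independence $U_i\perp U_{B\setminus A}\mid X,U_A,Q$ do the work. Both proofs are correct in substance; yours localizes where the Markov structure enters.

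However, there is a sign-flip in your claim that you ``reduce to \emph{submodularity} of $\mathcal{J}\mapsto H(X|U_{\mathcal{J}},Q)$.'' Complementation $\mathcal{J}=\mathcal{I}^c$ preserves supermodularity; it does not swap it with submodularity (apply the defining inequality for $g$ to $A^c,B^c$ and use $A^c\cap B^c=(A\cup B)^c$). So you need $\mathcal{J}\mapsto H(X|U_{\mathcal{J}},Q)$ to be \emph{supermodular}. Fortunately, the inequality your display actually establishes, namely $I(X;U_i|U_A,Q)\geq I(X;U_i|U_B,Q)$ for $A\subseteq B$, $i\notin B$, is precisely the increasing-differences criterion for supermodularity of that map (the increments $g(A\cup\{i\})-g(A)=-I(X;U_i|U_A,Q)$ are nondecreasing in $A$). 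So the computation proves exactly what is needed; only the word ``submodularity'' and the ``diminishing returns'' label are backwards.

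Two smaller points. First, your justification of $U_i\perp U_{B\setminus A}\mid X,U_A,Q$ by saying ``$U_A$ is a function of $(Y_A,Q)$'' is imprecise (it is not a deterministic function); the clean reason is that, given $(X,Q)$, the triples $(Y_j,U_j)$ for distinct $j$ are mutually independent, and $\{i\}$, $B\setminus A$, $A$ are pairwise disjoint, so $(U_i,U_{B\setminus A},U_A)$ are mutually conditionally independent given $(X,Q)$. Second, you invoke \eqref{eqn:Ro2} to argue $f(\emptyset)\leq 0$, but the lemma is stated for arbitrary $D\in\mathbb{R}$; that hypothesis belongs to Lemma~\ref{lem:dominationLemmaMenc}, not to this one, and in any case the sign case analysis for $f^+$ does not require it (the paper's Cases 1--3 plus the two trivial all-nonnegative and all-nonpositive cases cover everything). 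Your $f^+$ case split, based on monotonicity of $f$, is otherwise the same as the paper's; your ``delicate case'' is the paper's Case 2, and as you note it really only needs monotonicity.
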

\begin{proof}
In order to verify that $f$ is supermodular, it suffices to check that the function $f'(\mathcal{I})=I(Y_{\mathcal{I}};U_{\mathcal{I}}|U_{\mathcal{I}^c},Q)$ is supermodular since the latter two terms in \eqref{eqn:setFunction2} are constant.  To this end, consider sets $T,S \subseteq \{1,\dots,m\}$ and observe that:
\begin{align}
f'(S)+f'(T)&=I(Y_{{S}};U_{{S}}|U_{{S}^c},Q)+I(Y_{{T}};U_{{T}}|U_{{T}^c},Q)\notag \\
&=H(U_{S}|U_{S^c},Q)-H(U_{S}|Y_{S},Q)+H(U_{T}|U_{T^c},Q)-H(U_{T}|Y_{T},Q)\notag\\
&=H(U_{S}|U_{S^c},Q)+H(U_{T}|U_{T^c},Q) \notag \\
&\quad -H(U_{S\cup T}|Y_{S\cup T},Q) - H(U_{S\cap T}|Y_{S\cap T},Q)\label{eqn:condIndepSM} \\
&=H(U_{S\backslash T}|U_{S^c},Q)+H(U_{S \cap T}|U_{(S \cap T)^c},Q) + H(U_{T}|U_{T^c},Q)\notag\\
&\quad -H(U_{S\cup T}|Y_{S\cup T},Q) - H(U_{S\cap T}|Y_{S\cap T},Q) \label{eqn:chainRuleSM} \\
&=H(U_{S\backslash T}|U_{S^c},Q)+H(U_{T}|U_{T^c},Q)-H(U_{S\cup T}|Y_{S\cup T},Q)\notag \\
&\quad + I(U_{S \cap T};Y_{S \cap T} |U_{(S \cap T)^c},Q)\notag \\
&\leq H(U_{S\backslash T}|U_{(S\cup T)^c},Q)+H(U_{T}|U_{T^c},Q)-H(U_{S\cup T}|Y_{S\cup T},Q) \notag\\
&\quad + I(U_{S \cap T};Y_{S \cap T} |U_{(S \cap T)^c},Q) \label{eqn:condRedSM} \\
&= I(U_{S\cup T};Y_{S\cup T}|U_{(S\cup T)^c},Q) + I(U_{S \cap T};Y_{S \cap T} |U_{(S \cap T)^c},Q)\notag\\
&=f'(S \cap T)+f'(S \cup T).\notag
\end{align}
The labeled steps above can be justified as follows:
\begin{itemize}
\item \eqref{eqn:condIndepSM} follows since $U_i$ is conditionally independent of everything else given $(Y_i,Q)$.
\item \eqref{eqn:chainRuleSM} is simply the chain rule.
\item \eqref{eqn:condRedSM} follows since conditioning reduces entropy.
\end{itemize}

Next, we show that $f^+=\max\{f,0\}$ is supermodular.  Observe first that $f$ is monotone increasing, i.e., if $S\subset T$, then $f(S) \leq f(T)$. Thus, fixing $S,T\subseteq \{1,\dots,m\}$, we can assume without loss of generality that 
\begin{align*}
f(S \cap T)\leq f(S )\leq f(T)\leq f(S \cup T).
\end{align*}
If $f(S \cap T)\geq 0$, then \eqref{eqn:submodDefn} is satisfied for $s=f^+$ by the supermodularity of $f$.  On the other hand, if $f(S \cup T)\leq 0$, then \eqref{eqn:submodDefn} is a tautology for $s=f^+$.  Therefore, it suffices to check the following three cases:
\begin{itemize}
\item Case 1: $f(S \cap T)\leq 0 \leq f(S )\leq f(T)\leq f(S \cup T)$.  In this case, the supermodularity of $f$ and the fact that $f^+\geq f$ imply:
\begin{align*}
f^+(S \cup T) + f^+(S \cap T) &\geq f(S \cup T)+f(S \cap T)\\ &\geq f(S)+f(T) = f^+(S)+f^+(T).
\end{align*}
\item Case 2: $f(S \cap T)\leq  f(S )\leq 0 \leq f(T)\leq f(S \cup T)$.  Since $f$ is monotone increasing, we have:
\begin{align*}
f^+(S \cup T) + f^+(S \cap T) = f(S \cup T)+0 \geq f(T) +0 = f^+(S)+f^+(T).
\end{align*}
\item Case 3: $f(S \cap T)\leq  f(S )\leq f(T)\leq 0 \leq f(S \cup T)$.  By definition of $f^+$:
\begin{align*}
f^+(S \cup T) + f^+(S \cap T) = f(S \cup T)+0 \geq 0 +0 = f^+(S)+f^+(T).
\end{align*}
\end{itemize}
Hence, $f^+ =\max\{f,0\}$ is supermodular.

\end{proof}

\section{Amplifying a Pointwise Convexity Constraint}\label{app:convexityLemma}

\begin{lemma}\label{lem:amplifyConvex}
Let $r_1,r_2\in\mathbb{R}$ be given, and suppose $f_1 : K\rightarrow \mathbb{R}$ and $f_2 : K\rightarrow \mathbb{R}$ are continuous functions defined on a compact domain $K\subset \mathbb{R}^n$. If there exists a function $h:[0,1]\rightarrow K$ satisfying
\begin{align}
t \left( f_1\circ h\right) (t)+(1-t)\left( f_2\circ h\right) (t) \leq t r_1 +(1-t) r_2 \mbox{~~for all $t\in[0,1]$,} \label{eqn:cond1}
\end{align}
then there exists $x_1^*, x_2^*\in K$ and $t^*\in [0,1]$ for which
\begin{align*}
t^{*} f_1( x_{1}^{*} ) + (1-t^{*}) f_1( x_{2}^{*})&\leq r_1 \\
t^{*} f_2( x_{1}^{*} ) + (1-t^{*}) f_2( x_{2}^{*})&\leq r_2.
\end{align*}
\end{lemma}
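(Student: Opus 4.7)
The plan is to recast the statement geometrically in $\mathbb{R}^2$. Let $S := \{(f_1(x), f_2(x)) : x \in K\}$, which is compact by continuity of $f_1, f_2$ and compactness of $K$, and let $Q := (-\infty, r_1] \times (-\infty, r_2]$. Under this reformulation, the hypothesis reads: for every $t \in [0,1]$, the set $S$ contains a point $(y_1, y_2)$ satisfying $t y_1 + (1-t) y_2 \leq t r_1 + (1-t) r_2$; the conclusion asks for a point in $\text{conv}(S) \cap Q$ that is a convex combination of \emph{at most two} elements of $S$.

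The first step is to establish the weaker claim $\text{conv}(S) \cap Q \neq \emptyset$. I would argue by contradiction: $\text{conv}(S)$ is compact (the convex hull of a compact set in finite dimensions), and $Q$ is closed and convex, so a strict-separation theorem produces a normal $(a,b) \neq (0,0)$ and a constant separating the two. Because $Q$ is unbounded below in both coordinate directions, any such separating normal must satisfy $a, b \geq 0$, so $t^* := a/(a+b) \in [0,1]$ is well-defined, and strict separation at $(r_1, r_2) \in Q$ forces $t^* y_1 + (1-t^*) y_2 > t^* r_1 + (1-t^*) r_2$ for every $(y_1, y_2) \in S$. But this contradicts the hypothesis evaluated at $t = t^*$ using the witness $h(t^*)$.

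The next step is to upgrade a generic point of $\text{conv}(S) \cap Q$ to one expressible as a convex combination of only two elements of $S$. I would choose $p^*$ minimizing $y_1 + y_2$ over $\text{conv}(S) \cap Q$ (a minimum exists by compactness). Any $p' \in \text{conv}(S)$ coordinatewise dominated by $p^*$ would itself lie in $Q$ and contradict minimality; hence $p^*$ is Pareto-optimal in $\text{conv}(S)$. A supporting line to $\text{conv}(S)$ at $p^*$ must therefore have nonnegative normal $(a,b)$, and the exposed face $F$ it carves out is either a singleton or a line segment (since $\text{conv}(S) \subset \mathbb{R}^2$). The extreme points of $F$ are extreme in $\text{conv}(S)$, and so belong to $S$ by Milman's converse to Krein--Milman. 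Writing $p^* = t^* s_1 + (1-t^*) s_2$ with $s_i = (f_1(x_i^*), f_2(x_i^*)) \in S$ and $t^* \in [0,1]$ produces the required triple, since $p^* \in Q$.

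The main obstacle I expect is the ``two-point'' requirement: Carath\'{e}odory in $\mathbb{R}^2$ by itself would only guarantee three points. Restricting attention to a Pareto-optimal exposed face is the key device, since the face is at most one-dimensional and its extreme points must lie in $S$. The remainder of the argument reduces to standard separation-theorem manipulations and does not appear to require any quantitative analysis of $h$ beyond the pointwise inequality.
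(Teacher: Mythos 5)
Your proof is correct and takes a genuinely different route from the paper. The paper's argument is concrete and analytic: it partitions $[0,1]$ into a fine mesh, builds piecewise-linear interpolants $g_1, g_2$ of the values $f_i(h(t_j))$ at the mesh points, shows that the hypothesis survives interpolation up to an additive $\epsilon$ (so $g_1(t)>r_1+\epsilon$ and $g_2(t)>r_2+\epsilon$ cannot hold simultaneously), and then applies an intermediate-value-theorem argument to locate a $t^*$ with $g_1(t^*)\leq r_1+\epsilon$ and $g_2(t^*)\leq r_2+\epsilon$ at once; the two-point convex combination falls out of the piecewise-linear structure, and a sequential compactness argument sends $\epsilon\to 0$. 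Your argument is geometric: you hide the parameterization $h$ behind the image set $S$, use strict separation to show $\mathrm{conv}(S)$ meets $Q$, and then use face/extreme-point theory of planar convex bodies plus Milman's theorem to obtain a two-point representation. Your approach buys conceptual clarity and makes the ``two points suffice'' refinement of Carath\'eodory transparent through the face argument, at the cost of heavier machinery; the paper's proof is more elementary and self-contained. Both use $h$ only through the pointwise inequality and both ultimately rely on sequential compactness.

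One imprecision worth flagging: the assertion that ``a supporting line at $p^*$ must therefore have nonnegative normal'' is too strong. At an extreme point of $\mathrm{conv}(S)$ there is an entire cone of supporting normals, and Pareto-optimality does not rule out those with a negative component. What you actually need is that \emph{some} supporting line at $p^*$ exists, so that its exposed face $F$ is at most one-dimensional, and this already follows from $p^*$ being a boundary point of $\mathrm{conv}(S)$: if $p^*$ were interior, then $p^*-(\delta,\delta)$ would lie in $\mathrm{conv}(S)\cap Q$ for small $\delta>0$ with strictly smaller coordinate sum, contradicting minimality. The nonnegativity of the normal plays no role downstream, so the Pareto-optimality detour, while harmless, is unnecessary. (If one did want a supporting line with nonnegative normal, the correct route is to separate $\mathrm{conv}(S)$ from the open quadrant $p^*-\mathbb{R}^2_{>0}$, which is disjoint from $\mathrm{conv}(S)$ by minimality of $p^*$; the open quadrant being unbounded below in both directions then forces a nonnegative separating normal.)
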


Before we prove the lemma, we make a few remarks.
At first glance, this lemma appears somewhat bizarre. Indeed, the set $K$ need only be compact (e.g., connectedness is not required) and $h$ can be an arbitrarily complicated function, as long as it satisfies \eqref{eqn:cond1}.  The strange nature of the lemma is echoed by the proof in that we merely prove the existence of the desired $x_1^*$, $ x_2^*$ and $t^*$; no further information is obtained.  Stripped to its core, the existence of the desired $x_1^*$, $ x_2^*$ and $t^*$ essentially follows from the pigeonhole principle, which manifests itself in the sequential compactness of $K$.

Despite its strange nature, Lemma \ref{lem:amplifyConvex} is crucial in establishing the converse result for the multiterminal source coding problem under logarithmic loss.  In this application, $K$ is taken to be a closed subset of a finite-dimensional probability simplex and $f_1, f_2$ are conditional entropies evaluated for probability distributions in $K$.

Finally, we remark that the Lemma \ref{lem:amplifyConvex} can be generalized to a certain extent.  For example, the function $h$ need only be defined on a dense subset of $[0,1]$ and the set $K$ can be a more general sequentially compact space.

\begin{proof}[Proof of Lemma \ref{lem:amplifyConvex}]
Since $f_1,f_2$ are continuous\footnote{Although not required for our purposes, we can assume $f_1$ and $f_2$ are defined and continuous over all of $\mathbb{R}^n$.  This is a consequence of the Tietze extension theorem.} and $K$ is compact, there exists $M<\infty$ such that $f_1$ and $f_2$ are bounded from above and below by $M$ and $-M$, respectively.
Fix $\epsilon>0$, and partition the interval $[0,1]$ as $0=t_1<t_2<\dots<t_m=1$, such that $|t_{j+1}-t_j|<\frac{\epsilon}{M}$.    For convenience define $x_{t_{j}}:=h(t_j)$ when $t_j$ is in the partition.

Now, for $i=1,2$ define piecewise-linear functions $g_1(t),g_2(t)$ on [0,1] by:
\begin{align}
g_i(t)=\left\{\begin{array}{ll}
f_i(x_{t_j}) & \mbox{if $t_j$ is in the partition} \\
\theta f_i( x_{t_j} ) + (1-\theta) f_i( x_{t_{j+1}}) & \mbox{if $t$ is in the interval $(t_j,t_{j+1})$,}
\end{array} \right.
\end{align}
where $\theta\in (0,1)$ is chosen so that $t=\theta t_j+(1-\theta)t_{j+1}$ when $t$ is in the interval $(t_j,t_{j+1})$.

With $g_1(t)$ and $g_2(t)$ defined in this manner, suppose $t=\theta t_j+(1-\theta)t_{j+1}$ for some $j$ and $\theta$.  Then some straightforward algebra yields:
\begin{align}
t g_1(t)+(1-t)g_2(t) &= (\theta t_j+(1-\theta)t_{j+1})\left( \theta f_1(x_{t_j})+ (1-\theta) f_1(x_{t_{j+1}}) \right) \notag\\
&\quad +(1-\theta t_j-(1-\theta)t_{j+1})\left( \theta f_2(x_{t_j})+ (1-\theta) f_2(x_{t_{j+1}}) \right) \notag\\
&=\theta^2\left[t_j f_1(x_{t_j})+ (1-t_j) f_2(x_{t_j})\right] \notag\\
&\quad +(1-\theta)^2\left[t_{j+1} f_1(x_{t_{j+1}})+ (1-t_{j+1}) f_2(x_{t_{j+1}})\right] \notag\\
&\quad +\theta(1-\theta)\left[(1-t_j)f_2(x_{t_{j+1}})+ (1-t_{j+1})f_2(x_{t_{j}})  \right. \notag\\
&\quad\quad\quad\quad\quad\quad \left. +t_{j+1} f_1(x_{t_{j}})+ t_{j} f_1(x_{t_{j+1}})\right] \notag\\
&\leq \theta^2\left[t_j f_1(x_{t_{j}})+ (1-t_j) f_2(x_{t_{j}})\right] \notag\\
&\quad +(1-\theta)^2\left[t_{j+1} f_1(x_{t_{j+1}})+ (1-t_{j+1}) f_2(x_{t_{j+1}})\right]\notag\\
&\quad +\theta(1-\theta)\left[(1-t_{j+1})f_2(x_{t_{j+1}})+ (1-t_{j})f_2(x_{t_{j}}) \right. \notag\\
&\quad\quad\quad\quad\quad\quad \left. + t_{j} f_1(x_{t_{j}})+ t_{j+1} f_1(x_{t_{j+1}})\right] +\epsilon \notag\\
&\leq \theta^2\left[t_j r_1+ (1-t_j) r_2\right] \notag\\
&\quad +(1-\theta)^2\left[t_{j+1} r_1+ (1-t_{j+1}) r_2\right]\notag\\
&\quad +\theta(1-\theta)\left[(1-t_{j+1}) r_2+ (1-t_{j}) r_2 \right. \notag\\
&\quad\quad\quad\quad\quad\quad \left. + t_{j} r_1+ t_{j+1} r_1\right] +\epsilon \notag\\
&=(\theta t_j+(1-\theta)t_{j+1})r_1 + (1-\theta t_j-(1-\theta)t_{j+1})r_2+\epsilon\notag \\
&= t r_1+(1-t)r_2+\epsilon, \label{eqn:ineqToContradict}
\end{align}
\begin{figure}[t!]
\begin{center}
\def\svgwidth{4.5in}
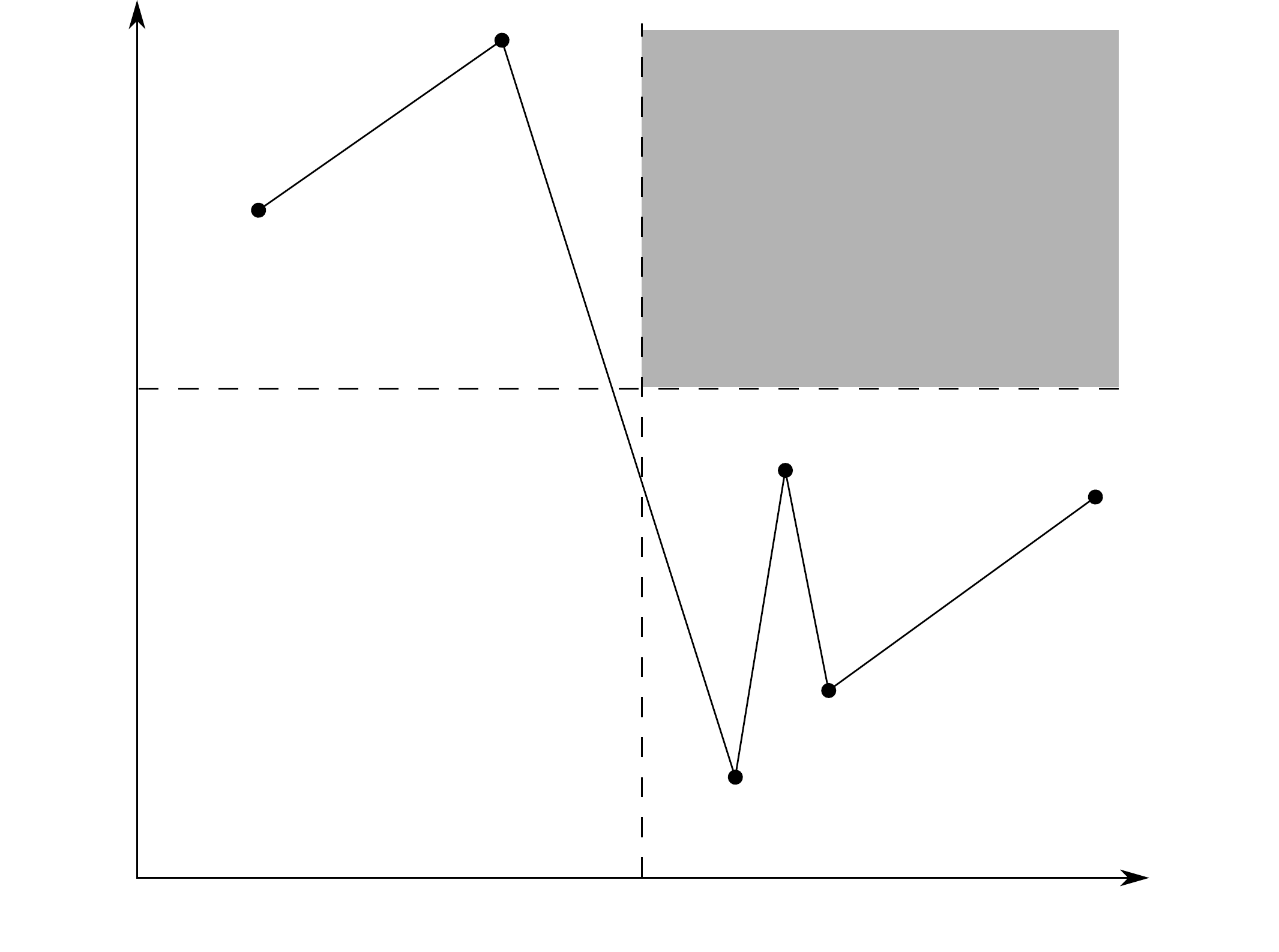
\caption[Illustration of the key step in the proof of Lemma \ref{lem:amplifyConvex}.]{A parametric plot of the function $\varphi : t \mapsto (g_1(t),g_2(t))$.  Since $\varphi(t)$ is continuous, starts with $g_2(0)\leq r_2+\epsilon$, ends with $g_1(1)\leq r_1+\epsilon$, and doesn't intersect the shaded area, $\varphi(t)$ must pass through the lower-left region.}\label{fig:g1_g2}
\end{center}
\end{figure}
where the first inequality follows since $|t_{j+1}-t_j|$ is small, and the second inequality follows from the the fact that \eqref{eqn:cond1} holds for each $t_j$ in the partition.  Notably, this implies that it is impossible to have 
\begin{align*}
g_1(t)> r_1+\epsilon \mbox{~~and~~}g_2(t)> r_2+\epsilon
\end{align*}
hold simultaneously for any $t\in[0,1]$, else we would obtain a contradiction to \eqref{eqn:ineqToContradict}.
Also, since we included the endpoints $t_1=0$ and $t_m=1$ in the partition, we have the following two inequalities:
\begin{align*}
g_1(1)\leq r_1, \mbox{~and~}
g_2(0)\leq r_2.
\end{align*}

Combining these observations with the fact that $g_1(t)$ and $g_2(t)$ are continuous, there must exist some $t^*\in[0,1]$ for which 
\begin{align*}
g_1(t^*)\leq r_1+\epsilon , \mbox{~~and~~}
g_2(t^*)\leq r_2+\epsilon
\end{align*}
simultaneously. An illustration of this is given in Figure \ref{fig:g1_g2}, which is a mere variation on the classical intermediate value theorem.

Applying this result, we can find a sequence $\{x_{1}^{(n)},x_{2}^{(n)},t^{(n)}\}_{n=1}^\infty$ satisfying
\begin{align*}
t^{(n)} f_1( x_{1}^{(n)} ) + (1-t^{(n)}) f_1( x_{2}^{(n)})&\leq r_1+\frac{1}{n} \\
t^{(n)} f_2( x_{1}^{(n)} ) + (1-t^{(n)}) f_2( x_{2}^{(n)})&\leq r_2+\frac{1}{n}
\end{align*}
for each $n\geq 1$.  Since $K \times K \times [0,1]$ is sequentially compact, there exists a convergent subsequence $\{n_j\}_{j=1}^{\infty}$ such that 
$(x_{1}^{(n_j)},x_{2}^{(n_j)},t^{(n_j)})\rightarrow (x_{1}^*,x_{2}^*,t^*) \in K\times K \times [0,1]$.  The continuity of $f_1$ and $f_2$ then apply to yield the desired result.
\end{proof}

\section{Strengthening the Converse of Theorem \ref{thm:MTSCregionDesc}}\label{app:strengthenedConverse}

In this appendix, we prove a stronger version of the converse of Theorem \ref{thm:MTSCregionDesc}.
To be precise, let $\hat{\mathcal{Y}}_1^{*n}$ and $\hat{\mathcal{Y}}_2^{*n}$ denote the set of probability measures on $\mathcal{Y}_1^n$ and $\mathcal{Y}_2^n$, respectively.  Let $d_1^*,d_2^*$ be the (extended)-log loss distortion measures defined as follows:
\begin{align*}
d_1^*(y_1^n,\hat{y}_1^n) &= \frac{1}{n}\log\left(\frac{1}{\hat{y}_1^n(y_1^n)} \right)\\
d_2^*(y_2^n,\hat{y}_2^n) &= \frac{1}{n}\log\left(\frac{1}{\hat{y}_2^n(y_2^n)} \right),
\end{align*}
where $\hat{y}_1^n(y_1^n)$ is the probability assigned to outcome $y_1^n\in \mathcal{Y}_1^n$ by the probability measure $\hat{y}_1^n\in \hat{\mathcal{Y}}_1^{*n}$.  Similarly for $\hat{y}_2^n(y_2^n)$.  Note that this extends the standard definition of logarithmic loss to sequence reproductions.

\begin{definition}
We say that a tuple $(R_1,R_2,D_1,D_2)$ is sequence-achievable if, for any $\epsilon>0$, there exist encoding functions 
\begin{align*}
f_1&:\mathcal{Y}_1^n \rightarrow \{1,\dots,2^{nR_1}\}\\
f_2 &:\mathcal{Y}_2^n \rightarrow \{1,\dots,2^{nR_2}\},
\end{align*}
and decoding functions 
\begin{align*}
\phi_1&: \{1,\dots,2^{nR_1}\}\times \{1,\dots,2^{nR_2}\} \rightarrow \hat{\mathcal{Y}}_1^{*n}\\
\phi_2 &: \{1,\dots,2^{nR_1}\}\times \{1,\dots,2^{nR_2}\} \rightarrow \hat{\mathcal{Y}}_2^{*n},
\end{align*}
which satisfy 
\begin{align*}
\mathbb{E}~d^*_1(Y_1^n,\hat{Y}_1^n) &\leq D_1+\epsilon\\
\mathbb{E}~d^*_2(Y_2^n,\hat{Y}_2^n) &\leq D_2+\epsilon,
\end{align*}
where 
\begin{align*}
\hat{Y}_1^n &=\phi_1(f_1(Y_1^n),f_2(Y_2^n))\\
\hat{Y}_2^n &=\phi_2(f_1(Y_1^n),f_2(Y_2^n)).
\end{align*}
\end{definition}



\begin{theorem}\label{thm:strongConverse}
If $(R_1,R_2,D_1,D_2)$ is sequence-achievable, then $(R_1,R_2,D_1,D_2)\in \mathcal{RD}^i=\overline{\mathcal{RD}}^{\star}$.
\end{theorem}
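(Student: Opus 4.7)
The plan is to single-letterize directly rather than through the CEO reduction, replacing the tuning lemma with the Csisz\'ar sum identity, as hinted in Section~\ref{subsec:MTSCconverseRemarks}. My first step would be to extend Lemma~\ref{lem:minDistortion} to general sequence reproductions: in the proof of that lemma, setting $s(x^n|z) := \hat x^n(x^n)$ directly (rather than as a product) still yields a valid probability measure on $\mathcal{X}^n$, so the same computation gives $n\,\mathbb{E}\,d_i^*(Y_i^n, \hat Y_i^n) \geq H(Y_i^n \mid F_1, F_2)$ for $i=1,2$, where $F_i := f_i(Y_i^n)$. Sequence-achievability therefore produces, for every $\epsilon > 0$, encoders satisfying $\log M_i^{(n)} \leq n(R_i+\epsilon)$ and $H(Y_i^n \mid F_1, F_2) \leq n(D_i+\epsilon)$; since $\mathcal{RD}^i$ is closed, it is enough to place every such tuple in $\mathcal{RD}^i$ and let $\epsilon \downarrow 0$.

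For the single-letterization, I would use the auxiliaries $U_{1,j} := F_1$, $U_{2,j} := F_2$, and $Q_j := (Y_1^{j-1}, Y_{2, j+1}^n)$, together with a time-sharing index $J$ uniform on $\{1,\ldots,n\}$ independent of everything else, and define $(U_1, U_2, Q, Y_1, Y_2) := (F_1, F_2, (J, Q_J), Y_{1,J}, Y_{2,J})$. The asymmetric past/future split in $Q_j$ is the crucial choice: because the pairs $(Y_{1,k}, Y_{2,k})$ are i.i.d., $Q_j$ is independent of $(Y_{1,j}, Y_{2,j})$, and conditional on $(Y_{1,j}, Q_j)$ the encoder $F_1$ is a function only of $Y_{1, j+1}^n$ while $F_2$ is a function only of $(Y_2^{j-1}, Y_{2,j})$. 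These two random parts live in disjoint time slots and are hence conditionally independent, so $F_1 \perp (Y_{2,j}, F_2) \mid Y_{1,j}, Q_j$; the symmetric statement also holds, so the long Markov chain $U_1 \leftrightarrow Y_1 \leftrightarrow Y_2 \leftrightarrow U_2 \mid Q$ is in force and the joint law has the required Berger-Tung factorization $p(y_1,y_2)p(q)p(u_1|y_1,q)p(u_2|y_2,q)$.

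The four inequalities of $\mathcal{RD}^i$ are then extracted by chain-rule expansion combined with Csisz\'ar's sum identity. For $R_1$, one starts with $n R_1 \geq H(F_1 \mid F_2) = I(Y_1^n; F_1 \mid F_2) = \sum_j I(Y_{1,j}; F_1 \mid F_2, Y_1^{j-1})$, and Csisz\'ar's identity (applied with the roles of $Y_1$ and $Y_2$ exchanged) inserts $Y_{2, j+1}^n$ into the conditioning, producing $\sum_j I(Y_{1,j}; F_1 \mid F_2, Q_j) = n\, I(Y_1; U_1 \mid U_2, Q)$; the $R_2$ and sum-rate constraints are handled symmetrically from $H(F_2 \mid F_1)$ and $H(F_1, F_2)$. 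For the distortion constraints, $H(Y_i^n \mid F_1, F_2) = \sum_j H(Y_{i,j} \mid Y_i^{j-1}, F_1, F_2)$ is converted by the same kind of past/future exchange into $\sum_j H(Y_{i,j} \mid U_{1,j}, U_{2,j}, Q_j) = n\, H(Y_i \mid U_1, U_2, Q)$. The cardinality bounds come from Appendix~\ref{app:cardBounds}.

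The principal technical obstacle is arranging the Csisz\'ar exchanges so that the same triple $(U_1, U_2, Q)$ serves all four inequalities at once. The tuning argument of Lemma~\ref{lem:amplifyConvex} couples the two distortions only additively through the CEO reduction, and that coupling breaks when $\hat Y_i^n$ is allowed to be an arbitrary joint distribution on $\mathcal{Y}_i^n$, because the product-style construction of $\hat X^n$ in the proof of Theorem~\ref{thm:MTSCregionDesc} no longer achieves the clean split $h_2(t) + tD_1 + (1-t)D_2$ for the induced CEO distortion. The Csisz\'ar identity, by virtue of its past/future symmetry, repairs this by exchanging conditioning terms in a way that is simultaneously compatible with both distortion sums and all three rate sums, which is precisely what delivers a single Berger-Tung distribution meeting every constraint. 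Verifying the cancellation of the telescoped terms across all four expansions is delicate but mechanical; once done, the strengthened converse follows with no need for the CEO result or the black-box tuning lemma.
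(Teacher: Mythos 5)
The high-level direction you describe (replace the CEO reduction and tuning lemma by a Csisz\'ar-sum/direct single-letterization argument) is exactly the route the paper hints at in Section~\ref{subsec:MTSCconverseRemarks} and carries out in Appendix~\ref{app:strengthenedConverse}. But your claim that the direct single-letterization lands straight on the Berger--Tung constraints is incorrect, and this is not a ``delicate but mechanical'' detail; it is precisely where the paper has to introduce an extra, non-trivial step.

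The obstruction appears in the rate bounds. Writing $n R_1 \geq I(Y_1^n;F_1|F_2) = H(Y_1^n|F_2) - H(Y_1^n|F_1,F_2)$, you can lower-bound $H(Y_1^n|F_2)$ by $\sum_j H(Y_{1,j}\,|\,U_{2,j},Q_j)$ via conditioning-reduces-entropy, but the negative term $-H(Y_1^n|F_1,F_2)$ cannot be lower-bounded by $-\sum_j H(Y_{1,j}\,|\,U_{1,j},U_{2,j},Q_j)$, because conditioning goes the wrong way when the sign is flipped. Your proposed ``Csisz\'ar exchange'' on $\sum_j I(Y_{1,j};F_1\,|\,F_2,Y_1^{j-1})$ does not produce $\sum_j I(Y_{1,j};F_1\,|\,F_2,Q_j)$ either: applying the sum identity twice shows the gap between these two sums equals $\sum_j \bigl[I(Y_{2,j};Y_1^{j-1}\,|\,F_2,Y_{2,j+1}^n) - I(Y_{2,j};Y_1^{j-1}\,|\,F_1,F_2,Y_{2,j+1}^n)\bigr]$, which has no definite sign. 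The paper sidesteps this by \emph{defining} $D_1 := \frac{1}{n} H(Y_1^n|F_1,F_2)$ (which is $\leq$ the achievable $\tilde D_1$), carrying this exact quantity through the rate derivations, and arriving not at Berger--Tung but at the asymmetric bounds of Lemma~\ref{lem:seqOB} (e.g., $R_1 \geq H(Y_1|U_2,Q) - D_1$ and $D_2 \geq D_1 + H(Y_2|U_1,U_2,Q) - H(Y_1|U_1,U_2,Q)$). These reduce to Berger--Tung only when $D_1 = H(Y_1|U_1,U_2,Q)$; in general they are weaker, and the paper's Lemma~\ref{lem:OBeqIB} is a separate case analysis (over the vertices of the rate polytope, combined with time-sharing achievability constructions) showing this asymmetric outer region is nevertheless contained in $\mathcal{RD}^i$. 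Your proposal has no analogue of that step. You also single-letterize with $U_{2,j}=F_2$, whereas the paper's asymmetric choice $U_{2,i}=(F_2,Y_2^{i-1})$, $Q_i=(Y_2^{i-1},Y_{1,i+1}^n)$ (and keeping $U_{1,i}=F_1$) is what makes the Csisz\'ar bookkeeping in Lemma~\ref{lem:seqOB} close cleanly. So while the spirit of your argument matches the paper, the core of the proof --- establishing the asymmetric outer bound and then proving it is inside the Berger--Tung region --- is missing.
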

\begin{proof}
The theorem is an immediate consequence of Theorem \ref{thm:MTSCregionDesc} and Lemmas \ref{lem:seqOB} and \ref{lem:OBeqIB}, which are given below.
\end{proof}
\begin{remark}
We refer to Theorem \ref{thm:strongConverse} as the ``strengthened converse" of Theorem \ref{thm:MTSCregionDesc}.  Indeed, it states that enlarging the set of possible reproduction sequences to include non-product distributions cannot attain better performance than when the decoder is restricted to choosing a reproduction sequence from the set of product distributions.  
\end{remark}
\begin{lemma}\label{lem:seqOB}
If $(R_1,R_2,\tilde{D}_1,D_2)$ is sequence-achievable, then there exists a joint distribution 
\begin{align*}
p(y_1,y_2,u_1,u_2,q) =p(q)p(y_1,y_2)p(u_1|y_1,q)p(u_2|y_2,q)
\end{align*}
and a $D_1\leq \tilde{D}_1$ which satisfies
\begin{align*}
D_1 &\geq H(Y_{1}|U_{1},U_{2},Q) \\
D_2 &\geq D_1 +H(Y_{2}|U_{1},U_{2},Q) - H(Y_{1}|U_{1},U_{2},Q),
\end{align*}
and
\begin{align*}
R_1&\geq  H(Y_{1}|U_{2},Q) - D_1\\
R_2&\geq I(Y_{2};U_{2}|Y_{1},Q) + H(Y_{1}|U_{1},Q)-D_1\\
R_1+R_2&\geq I(Y_{2};U_{2}|Y_{1},Q) + H(Y_{1})-D_1.
\end{align*}
\end{lemma}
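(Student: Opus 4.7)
The plan is to follow a single-letterization recipe with two distinctive ingredients: (i) the strengthened form of Lemma \ref{lem:minDistortion}, which (as remarked in Section \ref{subsec:ceoStrongerConverse}) remains valid when $\hat{X}^n$ is an arbitrary probability measure on $\mathcal{X}^n$ and hence applies to sequence-achievable codes, and (ii) the Csisz\'ar sum identity, which produces the asymmetric coupling between the two distortions. Set $F_i := g_i^{(n)}(Y_i^n)$. Applied at the sequence level, the strengthened Lemma \ref{lem:minDistortion} gives $n\tilde{D}_1 \geq H(Y_1^n \mid F_1, F_2)$ and $n D_2 \geq H(Y_2^n \mid F_1, F_2)$, so I define $D_1 := \frac{1}{n} H(Y_1^n \mid F_1, F_2) \leq \tilde{D}_1$, which is the $D_1$ asserted by the lemma.

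For the auxiliaries, introduce $J$ uniform on $\{1,\ldots,n\}$ and a timesharing variable $Q$ carrying $J$ together with enough $Y$-context to enforce the joint factorization $p(u_1,u_2\mid y_1,y_2,q)=p(u_1\mid y_1,q)p(u_2\mid y_2,q)$ while preserving $Q\perp(Y_{1,J},Y_{2,J})$ (possible at the single-letter level by i.i.d.\ of the source pairs). The remaining parts of the auxiliaries are $U_{1,j}$ built from $F_1$ and past terms of $Y_1^n$, and $U_{2,j}$ built from $F_2$ and past terms of $Y_2^n$. Under this choice the first bound is immediate from the chain rule and ``conditioning reduces entropy'':
\[
nD_1 \;=\; \sum_{j=1}^n H(Y_{1,j}\mid F_1,F_2,Y_1^{j-1}) \;\geq\; \sum_{j=1}^n H(Y_{1,j}\mid U_{1,j},U_{2,j},Q_j) \;=\; nH(Y_1\mid U_1,U_2,Q).
\]

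The coupling bound $D_2 \geq D_1 + H(Y_2\mid U_1,U_2,Q) - H(Y_1\mid U_1,U_2,Q)$ is the crux. Starting from $n(D_2-D_1) \geq H(Y_2^n\mid F_1,F_2) - H(Y_1^n\mid F_1,F_2)$ and single-letterizing each side by chain rule, the gap between this difference and the single-letter quantity $H(Y_2\mid U_1,U_2,Q)-H(Y_1\mid U_1,U_2,Q)$ reduces to a residue of the form
\[
\sum_j I(Y_{2,j};Y_1^{j-1}\mid F_1,F_2,Y_2^{j-1}) \;-\; \sum_j I(Y_{1,j};Y_2^{j-1}\mid F_1,F_2,Y_1^{j-1}),
\]
which is two sums of conditional mutual informations in mirror positions. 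The Csisz\'ar sum identity, applied with a careful choice of ``past'' and ``future'' conditioning that matches the $Y_1\leftrightarrow Y_2$ symmetry of the source, shows that these two sums coincide, yielding the desired coupling inequality.

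The three rate bounds follow by the same template. For $R_1$, start from $nR_1\geq H(F_1\mid F_2)\geq I(Y_1^n;F_1\mid F_2)=H(Y_1^n\mid F_2)-nD_1$, then use conditioning reduces entropy to obtain $H(Y_1^n\mid F_2)/n \geq H(Y_1\mid U_2,Q)$. For $R_2$ use $nR_2\geq H(F_2\mid Y_1^n)=I(F_2;Y_2^n\mid Y_1^n)$, rewritten via Csisz\'ar sum so as to produce the asymmetric term $I(Y_2;U_2\mid Y_1,Q)$ and the offset $H(Y_1\mid U_1,Q)-D_1$; the sum-rate bound is analogous, starting from $n(R_1+R_2)\geq I(Y_1^n,Y_2^n;F_1,F_2)$. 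The principal obstacle throughout is the factorization $p(u_1,u_2\mid y_1,y_2,q)=p(u_1\mid y_1,q)p(u_2\mid y_2,q)$: the natural past-augmented auxiliaries only satisfy the single-source conditions $U_i\perp Y_{-i}\mid(Y_i,Q)$, not full joint conditional independence, because of cross-source correlation through future time indices. Handling this cleanly—either by enlarging $Q$ to include enough $Y$-context to break those cross-correlations (without breaking $Q\perp(Y_{1,J},Y_{2,J})$), or by a post-hoc Markovization step absorbed into the same Csisz\'ar-sum bookkeeping—is the delicate step that distinguishes this direct proof from the earlier CEO-based argument.
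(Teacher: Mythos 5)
Your roadmap is right at the top level (reuse the strengthened Lemma~\ref{lem:minDistortion} for sequence reproductions, define $D_1:=\tfrac1n H(Y_1^n\mid F_1,F_2)$, and bring in the Csisz\'ar sum identity to couple the two distortions), but the proof does not go through as written, and you explicitly leave the central step undone. The paper's proof hinges on a very specific and deliberately asymmetric choice of auxiliaries, namely $U_{1,i}:=F_1$ (with \emph{no} past $Y_1$-context), $U_{2,i}:=(F_2,Y_2^{i-1})$, and $Q_i:=(Y_2^{i-1},Y_{1,i+1}^n)$, i.e.\ the past of one source and the future of the other. With that choice one can check directly (via i.i.d.-ness) that $Q_i\perp(Y_{1,i},Y_{2,j})$ and that the long Markov chain $U_{1,i}\leftrightarrow Y_{1,i}\leftrightarrow Y_{2,i}\leftrightarrow U_{2,i}$ holds conditionally on $Q_i$, which is exactly the factorization $p(u_1,u_2\mid y_1,y_2,q)=p(u_1\mid y_1,q)p(u_2\mid y_2,q)$ you worry about. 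Your proposed auxiliaries --- $U_{1,j}$ ``built from $F_1$ and past terms of $Y_1^n$,'' symmetrically for $U_2$, with an unspecified $Q$ --- do not satisfy that factorization, and you acknowledge as much when you describe ``handling this cleanly'' as ``the delicate step.'' Naming the obstacle is not the same as resolving it; as it stands the argument has a hole at exactly the place where the paper's proof does real work.

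Two further concrete points. First, the residue you write down for the coupling bound, involving $I(Y_{2,j};Y_1^{j-1}\mid F_1,F_2,Y_2^{j-1})$ against $I(Y_{1,j};Y_2^{j-1}\mid F_1,F_2,Y_1^{j-1})$ (past against past), is not in the form the Csisz\'ar sum identity cancels. The identity the paper invokes is
\begin{align*}
\sum_{i=1}^n I(Y_{1,i+1}^n;Y_{2,i}\mid Y_2^{i-1},F_1,F_2) = \sum_{i=1}^n I(Y_2^{i-1};Y_{1,i}\mid Y_{1,i+1}^n,F_1,F_2),
\end{align*}
which pairs the \emph{future} of $Y_1$ with the \emph{past} of $Y_2$. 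That past/future asymmetry is precisely why $Q_i$ must be $(Y_2^{i-1},Y_{1,i+1}^n)$, and it is also why the lemma's rate and distortion bounds are asymmetric in $Y_1$ versus $Y_2$. Second, the $R_2$ and sum-rate bounds do not reduce to a generic ``same template'': the paper has to pass through $I(Y_2^n;F_2\mid Y_1^n)$ and re-expand it so that the auxiliary $U_{2,i}=(F_2,Y_2^{i-1})$ and the conditioning $Y_{1,i},Q_i$ appear in the right places, which again relies on that particular choice of auxiliaries. To complete your proof you would need to replace the vague ``enough $Y$-context'' with the explicit $Q_i=(Y_2^{i-1},Y_{1,i+1}^n)$ and $U_{1,i}=F_1$ (dropping the past $Y_1$ from $U_1$), after which the factorization, the Csisz\'ar-sum cancellation, and the three rate bounds all fall into place.
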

\begin{proof}
For convenience, let $F_1=f_1(Y_1^n)$ and $F_2=f_2(Y_2^n)$, where $f_1,f_2$ are the encoding functions corresponding to a scheme which achieves $(R_1,R_2,\tilde{D}_1,D_2)$ (in the sequence-reproduction sense). Define $D_1=\frac{1}{n}H(Y_1^n|F_1,F_2)$, so that:
\begin{align}
nD_1=H(Y_1^n|F_1,F_2). \label{eqn:D1ident}
\end{align}
Since $n\tilde{D}_1\geq H(Y_1^n|F_1,F_2)$ by the strengthened version\footnote{See the comment in Section \ref{subsec:ceoStrongerConverse}.} of Lemma \ref{lem:minDistortion}, we have $D_1\leq \tilde{D}_1$ as desired.  By definition of $D_1$, we immediately obtain the following inequality:
\begin{align}
nD_1=\sum_{i=1}^n H(Y_{1,i}|F_1,F_2,Y_{1,i+1}^n) \geq \sum_{i=1}^n H(Y_{1,i}|F_1,F_2,Y_2^{i-1},Y_{1,i+1}^n). \label{eqn:D1LB}
\end{align}
%
Next, recall the Csisz\'{a}r sum identity:
\begin{align*}
\sum_{i=1}^n I(Y_{1,i+1}^n;Y_{2,i}|Y_2^{i-1},F_1,F_2) = \sum_{i=1}^n I(Y_{2}^{i-1};Y_{1,i}|Y_{1,i+1}^n,F_1,F_2).
\end{align*}
This, together with \eqref{eqn:D1ident}, implies the following inequality:
\begin{align}
nD_2 \geq nD_1 + \sum_{i=1}^n H(Y_{2,i}|F_1,F_2,Y_2^{i-1},Y_{1,i+1}^n) - H(Y_{1,i}|F_1,F_2,Y_2^{i-1},Y_{1,i+1}^n), \label{eqn:D1D2ineq}
\end{align}
which we can verifiy as follows:
\begin{align*}
nD_2 &\geq H(Y_2^n|F_1,F_2) = \sum_{i=1}^n H(Y_{2,i}|F_1,F_2,Y_2^{i-1})\\
&=\sum_{i=1}^n H(Y_{2,i}|F_1,F_2,Y_2^{i-1},Y_{1,i+1}^n)+I(Y_{1,i+1}^n;Y_{2,i}|F_1,F_2,Y_2^{i-1})\\
&=\sum_{i=1}^n H(Y_{2,i}|F_1,F_2,Y_2^{i-1},Y_{1,i+1}^n)+I(Y_{2}^{i-1};Y_{1,i}|Y_{1,i+1}^n,F_1,F_2)\\
&=H(Y_{1}^n|F_1,F_2) + \sum_{i=1}^n H(Y_{2,i}|F_1,F_2,Y_2^{i-1},Y_{1,i+1}^n) - H(Y_{1,i}|F_1,F_2,Y_2^{i-1},Y_{1,i+1}^n)\\
&=nD_1 + \sum_{i=1}^n H(Y_{2,i}|F_1,F_2,Y_2^{i-1},Y_{1,i+1}^n) - H(Y_{1,i}|F_1,F_2,Y_2^{i-1},Y_{1,i+1}^n).
\end{align*}
Next, observe that we can  lower bound $R_1$ as follows:
\begin{align}
nR_1 &\geq H(F_1) \geq I(Y_1^n;F_1|F_2) \notag \\
&=\sum_{i=1}^n H(Y_{1,i}|F_2,Y_{1}^{i-1}) - H(Y_1^n|F_1,F_2) \notag\\
&\geq \sum_{i=1}^n H(Y_{1,i}|F_2,Y_{1}^{i-1},Y_2^{i-1}) - nD_1 \label{eqn:R1ineq1}\\
&= \sum_{i=1}^n H(Y_{1,i}|F_2,Y_2^{i-1}) - nD_1 \label{eqn:R1ineq2}\\
&\geq \sum_{i=1}^n H(Y_{1,i}|F_2,Y_2^{i-1},Y_{1,i+1}^n) - nD_1. \label{eqn:R1LB}
\end{align}
In the above string of inequalities, \eqref{eqn:R1ineq1} follows from \eqref{eqn:D1ident} and the fact that conditioning reduces entropy.  Equality \eqref{eqn:R1ineq2} follows since  $Y_{1,i}\leftrightarrow F_2,Y_{2}^{i-1}\leftrightarrow Y_1^{i-1}$ form a Markov chain (in that order).

Next, we can obtain a lower bound on $R_2$:
\begin{align}
nR_2 &\geq H(F_2) \geq H(F_2|F_1) = H(F_2|F_1,Y_1^n) + I(Y_1^n;F_2|F_1)\notag\\
&\geq I(Y_2^n;F_2|F_1,Y_1^n) + I(Y_1^n;F_2|F_1)\notag\\
&= I(Y_2^n;F_2|Y_1^n) + I(Y_1^n;F_2|F_1)\label{eqn:R2ineq0}\\
&=\sum_{i=1}^nI(Y_{2,i};F_2|Y_1^n,Y_2^{i-1}) + H(Y_{1,i}|F_1,Y_{1,i+1}^n)-nD_1\label{eqn:R2ineq1}\\
&\geq\sum_{i=1}^nI(Y_{2,i};F_2|Y_1^n,Y_2^{i-1}) + H(Y_{1,i}|F_1,Y_2^{i-1},Y_{1,i+1}^n)-nD_1\notag\\
&=\sum_{i=1}^nI(Y_{2,i};F_2,Y_1^{i-1},Y_2^{i-1}|Y_{1,i},Y_2^{i-1},Y_{1,i+1}^n) + H(Y_{1,i}|F_1,Y_2^{i-1},Y_{1,i+1}^n)-nD_1\label{eqn:R2ineq2}\\
&\geq\sum_{i=1}^nI(Y_{2,i};F_2,Y_2^{i-1}|Y_{1,i},Y_2^{i-1},Y_{1,i+1}^n) + H(Y_{1,i}|F_1,Y_2^{i-1},Y_{1,i+1}^n)-nD_1. \label{eqn:R2LB}
\end{align}
In the above string of inequalities, \eqref{eqn:R2ineq1} follows from \eqref{eqn:D1ident} and the chain rule. \eqref{eqn:R2ineq2} follows from the i.i.d.~property of the sources, and \eqref{eqn:R2LB} follows by monotonicity of mutual information.

A lower bound on the sum-rate $R_1+R_2$ can be obtained as follows:
\begin{align}
n(R_1+R_2) &\geq H(F_1)+H(F_2) \geq H(F_2)+H(F_1|F_2) \notag\\
&\geq I(F_2;Y_1^n,Y_2^n) + I(F_1;Y_1^n|F_2)\notag\\
&=I(F_2;Y_1^n) + I(F_2;Y_2^n|Y_1^n) + I(F_1;Y_1^n|F_2)\notag\\
&=I(F_2;Y_2^n|Y_1^n) + I(F_1,F_2;Y_1^n)\notag\\
&\geq\sum_{i=1}^nI(Y_{2,i};F_2,Y_2^{i-1}|Y_{1,i},Y_2^{i-1},Y_{1,i+1}^n) + H(Y_{1,i})-nD_1. \label{eqn:R1R2LB}
\end{align}
Where \eqref{eqn:R1R2LB} follows in a manner similar to \eqref{eqn:R2ineq0}-\eqref{eqn:R2LB} in the lower bound on $R_2$.


Now, define $U_{1,i}\triangleq F_1$, $U_{2,i}\triangleq (F_2,Y_2^{i-1})$, and $Q_{i}\triangleq (Y_2^{i-1},Y_{1,i+1}^n)$.  Then we can summarize our results so far as follows.  Inequalities \eqref{eqn:D1LB} and \eqref{eqn:D1D2ineq} become
\begin{align*}
D_1 &\geq \frac{1}{n}\sum_{i=1}^nH(Y_{1,i}|U_{1,i},U_{2,i},Q_i)\\
D_2 &\geq D_1 +\frac{1}{n}\sum_{i=1}^n H(Y_{2,i}|U_{1,i},U_{2,i},Q_i) - H(Y_{1,i}|U_{1,i},U_{2,i},Q_i),
\end{align*}
and inequalities \eqref{eqn:R1LB}, \eqref{eqn:R2LB}, and   \eqref{eqn:R1R2LB} can be written as:
\begin{align*}
R_1&\geq \frac{1}{n}\sum_{i=1}^n H(Y_{1,i}|U_{2,i},Q_i) - D_1\\
R_2&\geq \frac{1}{n}\sum_{i=1}^nI(Y_{2,i};U_{2,i}|Y_{1,i},Q_i) + H(Y_{1,i}|U_{1,i},Q_i)-D_1\\
R_1+R_2&\geq \frac{1}{n}\sum_{i=1}^nI(Y_{2,i};U_{2,i}|Y_{1,i},Q_i) + H(Y_{1,i})-D_1.
\end{align*}

Next, we note that $U_{1,i}\leftrightarrow Y_{1,i}\leftrightarrow Y_{2,i}\leftrightarrow U_{2,i}$ form a Markov chain (in that order) conditioned on $Q_i$.  Moreover, $Q_i$ is independent of $Y_{1,i},Y_{2,i}$. Hence, a standard timesharing argument proves the lemma.
\end{proof}

\begin{lemma}\label{lem:OBeqIB}
Fix $(R_1,R_2,D_1,D_2)$. If there exists a joint distribution of the form
\begin{align*}
p(y_1,y_2,u_1,u_2,q) =p(q)p(y_1,y_2)p(u_1|y_1,q)p(u_2|y_2,q)
\end{align*}
which satisfies
\begin{align}
D_1 &\geq H(Y_{1}|U_{1},U_{2},Q) \label{eqn:D1LBhyp}\\
D_2 &\geq D_1 +H(Y_{2}|U_{1},U_{2},Q) - H(Y_{1}|U_{1},U_{2},Q),\label{eqn:D1D2hyp}
\end{align}
and
\begin{align}
R_1&\geq  H(Y_{1}|U_{2},Q) - D_1 \label{eqn:R1}\\
R_2&\geq I(Y_{2};U_{2}|Y_{1},Q) + H(Y_{1}|U_{1},Q)-D_1\label{eqn:R2}\\
R_1+R_2&\geq I(Y_{2};U_{2}|Y_{1},Q) + H(Y_{1})-D_1,\label{eqn:R1R2}
\end{align}
then $(R_1,R_2,D_1,D_2)\in\mathcal{RD}^i$. 
\end{lemma}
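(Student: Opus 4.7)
The plan is built on the observation that the four hypotheses \eqref{eqn:D1LBhyp}, \eqref{eqn:R1}, \eqref{eqn:R2}, \eqref{eqn:R1R2} are exactly the outer bound $\mathcal{RD}^o_{CEO}$ of Section~\ref{subsec:ceoProblemOuterBound}, specialized to the degenerate CEO instance in which $X=Y_1$ (the required Markov chain $Y_1\leftrightarrow Y_1\leftrightarrow Y_2$ being trivial). By Theorem~\ref{thm:CEOregion} alone, one already knows $(R_1,R_2,D_1)\in\mathcal{RD}^i_{CEO}$, so Berger--Tung-type auxiliaries exist that handle the three rate constraints and the $D_1$ distortion constraint. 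The remaining task, and the only nontrivial content of the lemma, is to show that one can do this while simultaneously respecting the coupled $D_2$ budget \eqref{eqn:D1D2hyp}. The strategy is to replay the proof of Theorem~\ref{thm:CEOregion} with the additional bookkeeping of $H(Y_2\mid\cdot)$ at each step.

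Fix the hypothesis auxiliaries $(U_1,U_2,Q)$ and consider the polyhedron in $(R_1,R_2,D_1)$-space cut out by \eqref{eqn:D1LBhyp}, \eqref{eqn:R1}, \eqref{eqn:R2}, \eqref{eqn:R1R2}. Its finite vertices are exactly the points $P_1,\ldots,P_5$ listed in the proof of Theorem~\ref{thm:CEOregion} (with $X$ replaced by $Y_1$), and each $P_j$ is dominated in $\mathcal{RD}^i_{CEO}$ by a point achieved with an explicit auxiliary triple $(\tilde U_1^{(j)},\tilde U_2^{(j)},\tilde Q^{(j)})$ built from $(U_1,U_2,Q)$: $(U_1,U_2,Q)$ itself for $P_4,P_5$; the projections $(U_1,\emptyset,Q)$ and $(\emptyset,U_2,Q)$ for $P_2$ and $P_3$; and the constant triple for $P_1$. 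The new verification I would carry out is that at each such vertex,
\begin{equation*}
H\bigl(Y_2\bigm|\tilde U_1^{(j)},\tilde U_2^{(j)},\tilde Q^{(j)}\bigr)\;\le\; D_1^{(j)}+H(Y_2\mid U_1,U_2,Q)-H(Y_1\mid U_1,U_2,Q),
\end{equation*}
where $D_1^{(j)}$ is the $D_1$-coordinate of $P_j$. After substituting the appropriate triple, each case collapses to a single information inequality derived from the Markov chain $U_1\leftrightarrow Y_1\leftrightarrow Y_2\leftrightarrow U_2$ given $Q$. Specifically, the case of $P_3$ reduces to $I(Y_1;U_1\mid U_2,Q)\ge I(Y_2;U_1\mid U_2,Q)$ (which follows from $U_1\perp Y_2\mid Y_1,U_2,Q$); the symmetric case of $P_2$ uses the identity $I(Y_2;U_2\mid U_1,Q)=I(Y_2;U_2\mid Y_1,Q)+I(Y_1;U_2\mid U_1,Q)$; $P_1$ uses both facts together with $I(Y_2;U_2\mid Q)=I(Y_1;U_2\mid Q)+I(Y_2;U_2\mid Y_1,Q)$; and $P_4,P_5$ are immediate from \eqref{eqn:D1LBhyp} plus \eqref{eqn:D1D2hyp}.

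Once the extreme-point verification is complete, the conclusion follows by timesharing. Write $(R_1,R_2,D_1)=\sum_j\lambda_j(R_1^{(j)},R_2^{(j)},D_1^{(j)})+r$ with $\lambda_j\ge 0$, $\sum_j\lambda_j=1$, and $r$ in the nonnegative recession cone of the polyhedron; timeshare the dominating triples with weights $\lambda_j$, absorbing the slack $r$ into extra rate/distortion. The resulting single distribution of the form $p(q')p(u_1'|y_1,q')p(u_2'|y_2,q')$ satisfies the three Berger--Tung rate inequalities, achieves $H(Y_1\mid U_1',U_2',Q')\le\sum_j\lambda_j D_1^{(j)}\le D_1$, and achieves $H(Y_2\mid U_1',U_2',Q')\le\sum_j\lambda_j D_1^{(j)}+H(Y_2\mid U_1,U_2,Q)-H(Y_1\mid U_1,U_2,Q)\le D_2$ by \eqref{eqn:D1D2hyp}. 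Hence $(R_1,R_2,D_1,D_2)\in\mathcal{RD}^i$. The main obstacle is the extreme-point verification in the middle paragraph: the dominating auxiliaries for $P_1,P_2,P_3$ intentionally discard one of $U_1,U_2$, and it is precisely the Markov structure of the sources through their quantizations that keeps the $H(Y_2\mid\cdot)$ term under the $D_2$ budget after this discarding.
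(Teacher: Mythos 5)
Your key observation is correct and sharpens the picture: hypotheses \eqref{eqn:D1LBhyp} and \eqref{eqn:R1}--\eqref{eqn:R1R2} are precisely the CEO outer bound $\mathcal{RD}^o_{CEO}$ of \eqref{eqn:RouterIneqs} specialized to $X=Y_1$, a connection the paper never states explicitly for this lemma. The paper's own proof is a direct computation: it fixes $D_1$, enumerates the two extreme points of the rate polytope in $(R_1,R_2)$-space (using $[\cdot]^+$), and for each runs two subcases with an explicit $\theta\in[0,1]$ timeshare between two Berger--Tung auxiliary triples, verifying by algebra that $D_1^\theta\le D_1$ and $D_2^\theta\le D_2$. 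Your proposal instead works at the five vertices of the lifted polyhedron in $(R_1,R_2,D_1)$-space inherited from the proof of Theorem~\ref{thm:CEOregion}, checking the $D_2$ bookkeeping once per vertex and then timesharing. The vertex verifications you cite are correct and are exactly the same Markov-chain inequalities that appear (spread across the $D_2^\theta$ computations) in the paper's version: $I(Y_2;U_1\mid U_2,Q)\le I(Y_1;U_1\mid U_2,Q)$ for $P_3$, the identity $I(Y_2;U_2\mid U_1,Q)=I(Y_2;U_2\mid Y_1,Q)+I(Y_1;U_2\mid U_1,Q)$ for $P_2$, and the combination with $I(Y_2;U_2\mid Q)=I(Y_1;U_2\mid Q)+I(Y_2;U_2\mid Y_1,Q)$ for $P_1$. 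So the two arguments have the same mathematical content; yours is a somewhat cleaner reorganization (vertices instead of parametrized edges), and it makes the reliance on the CEO structure transparent.

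One step needs tightening before this is airtight. You write that the leftover $r$ lies in ``the nonnegative recession cone of the polyhedron,'' but the recession cone cut out by \eqref{eqn:D1LBhyp} and \eqref{eqn:R1}--\eqref{eqn:R1R2} alone is $\{(r_1,r_2,d_1): d_1\ge 0,\ r_1+d_1\ge 0,\ r_2+d_1\ge 0,\ r_1+r_2+d_1\ge 0\}$, which is strictly larger than the nonnegative orthant (it contains, e.g., $(-1,2,2)$). If $r$ had a negative rate coordinate the ``absorb the slack'' step would fail. Moreover, with only those four facets the polyhedron cannot have five vertices, so $P_1,P_2,P_3$ are not vertices of it as stated. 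The fix is to add the implicit facets $R_1\ge 0$, $R_2\ge 0$ to the description---exactly what the paper's Appendix~\ref{app:submodular} does by working with $f^+=\max\{f,0\}$---after which the recession cone is the nonnegative orthant, the five-vertex enumeration is valid, and your decomposition and absorption step go through. This is a small but necessary repair rather than a fundamental gap.
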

\begin{proof}
Let $\mathcal{P}$ denote the polytope of rate pairs which satisfy the inequalities \eqref{eqn:R1}-\eqref{eqn:R1R2}. It suffices to show that if $(r_1,r_2)$ is a vertex of $\mathcal{P}$, then $(r_1,r_2,D_1,D_2)\in\mathcal{RD}^i$. For convenience, let $[x]^+=\max\{x,0\}$. There are only two extreme points of $\mathcal{P}$:
\begin{align*}
r_1^{(1)} &= \bigg[H(Y_{1}|U_{2},Q) - D_1\bigg]^+\\
r_2^{(1)} &= I(Y_{2};U_{2}|Y_{1},Q) + H(Y_{1})-D_1-r_1^{(1)},
\end{align*}
and 
\begin{align*}
r_1^{(2)} &=  I(Y_{2};U_{2}|Y_{1},Q) + H(Y_{1})-D_1-r_2^{(2)},\\
r_2^{(2)} &=\bigg[I(Y_{2};U_{2}|Y_{1},Q) + H(Y_{1}|U_{1},Q)-D_1\bigg]^+.
\end{align*}

We first analyze the extreme point $(r_1^{(1)},r_2^{(1)})$:
\begin{itemize}
\item  \textbf{ Case 1.1: $r_1^{(1)}=0$}.  In this case, we have $r_2^{(1)} = I(Y_{2};U_{2}|Y_{1},Q) + H(Y_{1})-D_1$.  This can be expressed as:
\begin{align*}
r_2^{(1)}=(1-\theta) I(Y_2;U_2|Q),
\end{align*}
where 
\begin{align*}
\theta &= \frac{D_1 -I(Y_{2};U_{2}|Y_{1},Q) - H(Y_{1})+I(Y_2;U_2|Q)}{I(Y_2;U_2|Q)}.
\end{align*}
Since $r_1^{(1)}=0$, we must have $D_1\geq H(Y_1|U_2,Q)$.  This implies that 
\begin{align*}
\theta \geq \frac{H(Y_1|U_2,Q) -I(Y_{2};U_{2}|Y_{1},Q) - H(Y_{1})+I(Y_2;U_2|Q)}{I(Y_2;U_2|Q)}=0.
\end{align*}
Also, we can assume without loss of generality that $D_1\leq H(Y_1)$, hence $\theta\in [0,1]$.  Applying the Berger-Tung achievability scheme, we can achieve the following distortions:
\begin{align}
D_1^{\theta} &= \theta H(Y_1) + (1-\theta)H(Y_1|U_2,Q)\notag\\
&=H(Y_1|U_2,Q) + \theta I(Y_1;U_2|Q)\notag\\
&\leq H(Y_1|U_2,Q)+ D_1 -I(Y_{2};U_{2}|Y_{1},Q) - H(Y_{1})+I(Y_2;U_2|Q)\label{eqn:DPIc11}\\
&=D_1 -I(Y_{2};U_{2}|Y_{1},Q) - I(Y_{1};U_2|Q)+I(Y_2;U_2|Q)\notag\\
&=D_1,\notag
\end{align}
where \eqref{eqn:DPIc11} follows since $I(Y_1;U_2|Q)\leq I(Y_2;U_2|Q)$ by the data processing inequality.

\begin{align}
D_2^{\theta} &= \theta H(Y_2) + (1-\theta)H(Y_2|U_2,Q)\notag\\
&=H(Y_2|U_2,Q) + \theta I(Y_2;U_2|Q)\notag\\
&= H(Y_2|U_2,Q)+ D_1 -I(Y_{2};U_{2}|Y_{1},Q) - H(Y_{1})+I(Y_2;U_2|Q)\notag\\
&= H(Y_2)+ D_1 -I(Y_{2};U_{2}|Y_{1},Q) - H(Y_{1})\notag\\
&=H(Y_2|Y_1,U_2,Q)+ D_1 -H(Y_{1}|Y_2)\notag\\
&=H(Y_2|Y_1,U_1,U_2,Q)+ D_1 -H(Y_{1}|Y_2)\label{eqn:D2markov}\\
&\leq H(Y_2|Y_1,U_1,U_2,Q)+ D_1 -H(Y_{1}|Y_2,U_1,U_2,Q)\notag\\
&= H(Y_2|U_1,U_2,Q)+ D_1 -H(Y_{1}|U_1,U_2,Q)\notag\\
&\leq D_2,\label{eqn:D2final}
\end{align}
where \eqref{eqn:D2markov} follows since $U_1 \leftrightarrow (Y_1,U_2,Q)\leftrightarrow Y_2$, and  \eqref{eqn:D2final} follows from \eqref{eqn:D1D2hyp}.

\item  \textbf{ Case 1.2: $r_1^{(1)}\geq0$}.  In this case, we have $r_2^{(1)} = I(Y_{2};U_{2}|Y_{1},Q) + I(Y_{1};U_2|Q)=I(Y_2;U_2|Q)$.  Also, we can write  $r_1^{(1)}$ as:
\begin{align*}
r_1^{(1)}=(1-\theta) I(Y_1;U_1|U_2,Q),
\end{align*}
where 
\begin{align*}
\theta &= \frac{D_1 -H(Y_{1}|U_2,Q)+I(Y_1;U_1|U_2,Q)}{I(Y_1;U_1|U_2,Q)}.
\end{align*}
Since $r_1^{(1)}\geq0$, we must have $D_1\leq H(Y_1|U_2,Q)$.  This implies that 
\begin{align*}
\theta \leq \frac{H(Y_1|U_2,Q) -H(Y_{1}|U_2,Q)+I(Y_1;U_1|U_2,Q)}{I(Y_1;U_1|U_2,Q)}=1.
\end{align*}
Also, \eqref{eqn:D1LBhyp} implies that $D_1\geq H(Y_1|U_1,U_2,Q)$, hence $\theta\in [0,1]$.  Applying the Berger-Tung achievability scheme, we can achieve the following distortions:
\begin{align}
D_1^{\theta} &= \theta H(Y_1|U_2,Q) + (1-\theta)H(Y_1|U_1,U_2,Q)\notag\\
&=H(Y_1|U_1,U_2,Q) + \theta I(Y_1;U_1|U_2,Q)\notag\\
&= H(Y_1|U_1,U_2,Q)+ D_1 -H(Y_{1}|U_2,Q)+I(Y_1;U_1|U_2,Q)\notag\\
&=D_1,\notag
\end{align}
and
\begin{align}
D_2^{\theta} &= \theta H(Y_2|U_2,Q) + (1-\theta)H(Y_2|U_1,U_2,Q)\notag\\
&=H(Y_2|U_1,U_2,Q)+\theta I(Y_2;U_1|U_2,Q)\notag\\
&\leq  H(Y_2|U_1,U_2,Q) + D_1 -H(Y_{1}|U_2,Q)+I(Y_1;U_1|U_2,Q) \label{eqn:DPIc12}\\
&=H(Y_2|U_1,U_2,Q)+ D_1 -H(Y_{1}|U_1,U_2,Q)\notag\\
&\leq D_2,\label{eqn:D2final2}
\end{align}
where \eqref{eqn:DPIc12} follows since $I(Y_2;U_1|U_2,Q)\leq I(Y_1;U_1|U_2,Q)$ by the data processing inequality, and  \eqref{eqn:D2final2} follows from \eqref{eqn:D1D2hyp}.
\end{itemize}

In a similar manner, we now analyze the second extreme point $(r_1^{(2)},r_2^{(2)})$:
\begin{itemize}
\item  \textbf{ Case 2.1: $r_2^{(2)}=0$}.  In this case, we have $r_1^{(2)} = I(Y_{2};U_{2}|Y_{1},Q) + H(Y_{1})-D_1$.  This can be expressed as:
\begin{align*}
r_1^{(2)}=(1-\theta) I(Y_1;U_1|Q),
\end{align*}
where 
\begin{align*}
\theta &= \frac{D_1 -I(Y_{2};U_{2}|Y_{1},Q) - H(Y_{1})+I(Y_1;U_1|Q)}{I(Y_1;U_1|Q)}.
\end{align*}
Since $r_2^{(2)}=0$, we must have $D_1\geq H(Y_1|U_1,Q)+I(Y_2;U_2|Y_1,Q)$.  This implies that 
\begin{align*}
\theta \geq \frac{H(Y_1|U_1,Q)+I(Y_2;U_2|Y_1,Q) -I(Y_{2};U_{2}|Y_{1},Q) - H(Y_{1})+I(Y_1;U_1|Q)}{I(Y_1;U_1|Q)}=0.
\end{align*}
Also, we can assume without loss of generality that $D_1\leq H(Y_1)$, hence 
\begin{align*}
\theta &\leq \frac{H(Y_1) -I(Y_{2};U_{2}|Y_{1},Q) - H(Y_{1})+I(Y_1;U_1|Q)}{I(Y_1;U_1|Q)}\leq 1,
\end{align*}
and therefore $\theta\in [0,1]$.  Applying the Berger-Tung achievability scheme, we can achieve the following distortions:
\begin{align}
D_1^{\theta} &= \theta H(Y_1) + (1-\theta)H(Y_1|U_1,Q)\notag\\
&=H(Y_1|U_1,Q) + \theta I(Y_1;U_1|Q)\notag\\
&= H(Y_1|U_1,Q)+ D_1 -I(Y_{2};U_{2}|Y_{1},Q) - H(Y_{1})+I(Y_1;U_1|Q) \notag\\
&=D_1 -I(Y_{2};U_{2}|Y_{1},Q)\notag\\
&\leq D_1,\notag
\end{align}
and
\begin{align}
D_2^{\theta} &= \theta H(Y_2) + (1-\theta)H(Y_2|U_1,Q)\notag\\
&=H(Y_2|U_1,Q) + \theta I(Y_2;U_1|Q)\notag\\
&\leq H(Y_2|U_1,Q)+ D_1 -I(Y_{2};U_{2}|Y_{1},Q) - H(Y_{1})+I(Y_1;U_1|Q) \label{eqn:DPIc21}\\
&=H(Y_2|Y_1,U_2,Q)+ D_1 -H(Y_{1}|Y_2,U_1,Q)\notag\\
&=H(Y_2|Y_1,U_1,U_2,Q)+ D_1 -H(Y_{1}|Y_2,U_1,U_2,Q)\label{eqn:D2markov2}\\
&= H(Y_2|U_1,U_2,Q)+ D_1 -H(Y_{1}|U_1,U_2,Q)\notag\\
&\leq D_2,\label{eqn:D2final21}
\end{align}
where \eqref{eqn:DPIc21} follows since $I(Y_2;U_1|Q)\leq I(Y_1;U_1|Q)$ by the data processing inequality,
 \eqref{eqn:D2markov2} follows since $U_1 \leftrightarrow (Y_1,U_2,Q)\leftrightarrow Y_2$ and $U_2 \leftrightarrow (Y_2,U_1,Q)\leftrightarrow Y_1$, and  \eqref{eqn:D2final21} follows from \eqref{eqn:D1D2hyp}.

\item  \textbf{ Case 2.2: $r_2^{(2)}\geq0$}.  In this case, we have $r_1^{(2)} = I(Y_1;U_1|Q)$.  Also, we can write  $r_2^{(2)}$ as:
\begin{align*}
r_2^{(2)}=(1-\theta) I(Y_2;U_2|U_1,Q),
\end{align*}
where 
\begin{align*}
\theta &= \frac{D_1 -H(Y_{1}|U_1,Q)-I(Y_2;U_2|Y_1,Q)+ I(Y_2;U_2|U_1,Q)}{I(Y_2;U_2|U_1,Q)}.
\end{align*}
Since $r_2^{(2)}\geq0$, we must have $D_1\leq H(Y_1|U_1,Q)+I(Y_2;U_2|Y_1,Q)$.  This implies that $\theta\leq 1$.  Also, \eqref{eqn:D1LBhyp} implies that $D_1\geq H(Y_1|U_1,U_2,Q)$, yielding
\begin{align*}
\theta \geq \frac{H(Y_1|U_1,U_2,Q) -H(Y_{1}|U_1,Q)-I(Y_2;U_2|Y_1,Q)+ I(Y_2;U_2|U_1,Q)}{I(Y_2;U_2|U_1,Q)}=0.
\end{align*}
Therefore, $\theta\in [0,1]$.  Applying the Berger-Tung achievability scheme, we can achieve the following distortions:
\begin{align}
D_1^{\theta} &= \theta H(Y_1|U_1,Q) + (1-\theta)H(Y_1|U_1,U_2,Q)\notag\\
&=H(Y_1|U_1,U_2,Q) + \theta I(Y_1;U_2|U_1,Q)\notag\\
&\leq H(Y_1|U_1,U_2,Q)+ D_1 -H(Y_{1}|U_1,Q)-I(Y_2;U_2|Y_1,Q)+ I(Y_2;U_2|U_1,Q) \label{eqn:DPIc22}\\
&=D_1,\notag
\end{align}
where \eqref{eqn:DPIc22} follows since $I(Y_1;U_2|U_1,Q)\leq I(Y_2;U_2|U_1,Q)$ by the data processing inequality.
\begin{align}
D_2^{\theta} &= \theta H(Y_2|U_1,Q) + (1-\theta)H(Y_2|U_1,U_2,Q)\notag\\
&=H(Y_2|U_1,U_2,Q)+\theta I(Y_2;U_2|U_1,Q)\notag\\
&=  H(Y_2|U_1,U_2,Q) + D_1 -H(Y_{1}|U_1,Q)-I(Y_2;U_2|Y_1,Q)+ I(Y_2;U_2|U_1,Q)  \notag\\
&=H(Y_2|U_1,U_2,Q)+ D_1 -H(Y_{1}|U_1,U_2,Q)\notag\\
&\leq D_2,\label{eqn:D2final22}
\end{align}
where  \eqref{eqn:D2final22} follows from \eqref{eqn:D1D2hyp}.
\end{itemize}

Thus, this proves that the Berger-Tung compression scheme can achieve any rate distortion tuple $(r_1,r_2,D_1,D_2)$ for $(r_1,r_2)\in\mathcal{P}$.  Since $\mathcal{RD}^i$ is, by definition, the set of rate distortion tuples attainable by the Berger-Tung achievability scheme, we must have that $(R_1,R_2,D_1,D_2)\in \mathcal{RD}^i$.  This proves the lemma.
\end{proof}

\section{A Lemma for the Daily Double}\label{app:dailyDoubleLemma}

For a given joint distribution $p(y_1,y_2)$ on the finite alphabet $\mathcal{Y}_1\times \mathcal{Y}_2$, let $\mathcal{P}(R_1,R_2)$ denote the set of joint pmf's of the form
\begin{align*}
p(q,y_1,y_2,u_1,u_2)=p(q)p(y_1,y_2)p(u_1|y_1,q)p(u_1|y_1,q)
\end{align*}
which satisfy
\begin{align*}
R_1 &\geq I(Y_1;U_1|U_2,Q) \\
R_2 &\geq I(Y_2;U_2|U_1,Q)\\
R_1+R_2 &\geq I(Y_1,Y_2;U_1,U_2|Q)
\end{align*}
for given finite alphabets $\mathcal{U}_1,\mathcal{U}_2,\mathcal{Q}$.

\begin{lemma} \label{lem:dailyDouble}
For $R_1,R_2$ satisfying $R_1\leq H(Y_1)$, $R_2\leq H(Y_2)$, and $R_1+R_2 \leq H(Y_1,Y_2)$, the infimum
\begin{align*}
\inf_{p\in\mathcal{P}(R_1,R_2)} \left\{ H(Y_1|U_1,U_2,Q)+H(Y_2|U_1,U_2,Q)\right\}
\end{align*}
is attained by some $p^*\in \mathcal{P}(R_1,R_2)$ which satisfies $R_1+R_2 = I(Y_1,Y_2;U_1^*,U_2^*|Q^*)$, where $U_1^*,U_2^*,Q^*$ correspond to the auxiliary random variables defined by $p^*$.
\end{lemma}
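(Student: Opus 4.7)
The plan is to combine a compactness argument with a two-stage perturbation argument. First, $\mathcal{P}(R_1, R_2)$ is a compact subset of a finite-dimensional probability simplex (all alphabets finite and all three rate constraints continuous), and the objective is continuous; hence the infimum is attained, and the set $\mathcal{M}$ of minimizers is non-empty and compact. I would then choose $p^* \in \mathcal{M}$ that additionally maximizes $I(U_1, U_2; Y_1, Y_2|Q)$ within $\mathcal{M}$, which exists by compactness and continuity.

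Suppose, toward a contradiction, that $R_1 + R_2 > I(U_1^*, U_2^*; Y_1, Y_2|Q^*)$. The identity
\[
I(U_1, U_2; Y_1, Y_2|Q) = I(U_1; Y_1|U_2, Q) + I(U_2; Y_2|U_1, Q) + I(U_1; U_2|Q),
\]
valid because the factorization of $p$ forces $U_1 \perp U_2 \mid Y_i, Q$, rules out both individual constraints being simultaneously tight; WLOG $R_1 > I(U_1^*; Y_1|U_2^*, Q^*)$. Introduce the perturbation $\tilde Q = (Q^*, B)$ with $B \sim \mathrm{Bernoulli}(\epsilon)$ independent of everything else, $\tilde U_1 = U_1^*$ on $\{B=0\}$ and $\tilde U_1 = Y_1$ on $\{B=1\}$, $\tilde U_2 = U_2^*$; this lies in $\mathcal{P}(R_1, R_2)$ for small $\epsilon > 0$ (using slackness of $R_1$ and of the sum-rate, together with the Markov inequality $I(Y_2; U_2^*|Y_1, Q^*) \leq I(Y_2; U_2^*|U_1^*, Q^*) \leq R_2$, which follows from $U_1^*$ being a function of $(Y_1, Q^*)$). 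Using the Markov chain $U_1^* \to (Y_1, Q^*) \to Y_2$, a direct calculation yields
\[
\frac{d}{d\epsilon}\bigl[H(Y_1|\tilde U_1, \tilde U_2, \tilde Q) + H(Y_2|\tilde U_1, \tilde U_2, \tilde Q)\bigr]\bigg|_{\epsilon=0^+} = -H(Y_1|U_1^*, U_2^*, Q^*) - I(Y_1; Y_2|U_1^*, U_2^*, Q^*) \leq 0.
\]
Minimality of $p^*$ forces this derivative to vanish, so $H(Y_1|U_1^*, U_2^*, Q^*) = 0$; equivalently, $Y_1$ is a function of $(U_1^*, U_2^*, Q^*)$.

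Next, replace $U_1^*$ by $Y_1$; since $Y_1$ was already determined by $(U_1^*, U_2^*, Q^*)$, both the objective and the sum-rate are preserved, so the resulting distribution $p'$ still lies in $\mathcal{M}$ with sum-rate still strictly less than $R_1 + R_2$. At $p'$ the $R_2$ constraint reads $I(U_2^*; Y_2|Y_1, Q^*) \leq R_2$. If this inequality is strict, the symmetric $U_2$-side perturbation applied to $p'$ has objective derivative $-H(Y_2|Y_1, U_2^*, Q^*)$; minimality of $p'$ forces this to vanish, so $(U_1^*, U_2^*, Q^*)$ determines both $Y_1$ and $Y_2$, giving $I(U_1^*, U_2^*; Y_1, Y_2|Q^*) = H(Y_1, Y_2)$. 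Combined with sum-rate slack and the hypothesis $R_1 + R_2 \leq H(Y_1, Y_2)$, this yields $R_1 + R_2 < H(Y_1, Y_2) \leq R_1 + R_2$, a contradiction. If instead $I(U_2^*; Y_2|Y_1, Q^*) = R_2$, then the sum-rate at $p'$ equals $H(Y_1|Q^*) + R_2 = H(Y_1) + R_2$, and sum-rate slack forces $H(Y_1) < R_1$, contradicting the hypothesis $R_1 \leq H(Y_1)$.

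The hard part is recognizing the need for the secondary selection (maximize sum-rate within $\mathcal{M}$) and orchestrating the two-stage perturbation: the first stage forces the degeneracy $H(Y_1|U_1^*, U_2^*, Q^*) = 0$, and the second (applied after the deterministic replacement $U_1^* \mapsto Y_1$) handles the resulting case split. Each of the hypotheses $R_1 \leq H(Y_1)$, $R_2 \leq H(Y_2)$ (needed in the symmetric branch), and $R_1 + R_2 \leq H(Y_1, Y_2)$ is precisely what rules out one sub-case. The alphabet enlargements from the perturbations are benign by the cardinality bounds of Appendix~\ref{app:cardBounds}.
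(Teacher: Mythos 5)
Your proof is correct, and it takes a genuinely different route from the paper's. The paper's argument starts from ``WLOG $R_1$ is \emph{tight}'' and then deforms $\tilde U_2$ continuously toward $Y_2$, using an intermediate-value type argument: as $\epsilon$ increases from $0$ to $1$, either one of the other constraints becomes tight (producing the desired sum-rate-tight minimizer) or the $R_1$ constraint develops slack, in which case all constraints become strict and a simultaneous perturbation of $(\tilde U_1,\tilde U_2)$ toward $(Y_1,Y_2)$ yields a strict decrease in the objective, contradicting optimality. You instead start from ``WLOG $R_1$ is \emph{slack},'' take a first-order (infinitesimal) perturbation of $\tilde U_1$ toward $Y_1$, and use the fact that the objective is \emph{affine} in $\epsilon$ so that minimality forces the coefficient $-H(Y_1|U_1^*,U_2^*,Q^*)-I(Y_1;Y_2|U_1^*,U_2^*,Q^*)$ to vanish; this gives the clean structural conclusion $H(Y_1|U_1^*,U_2^*,Q^*)=0$, after which you make the deterministic replacement $U_1^*\mapsto Y_1$ and run a symmetric second-stage argument on $U_2$. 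This ``force a degeneracy, then replace'' step is the genuinely new organizing idea in your proof, and it replaces the paper's case analysis involving a continuous sweep. Both approaches deliver the result with the same hypotheses, and both are essentially perturbation arguments, but yours is local (derivative) while the paper's is global (deformation). Two small remarks: your secondary selection (maximize the sum-rate within $\mathcal M$) is never actually invoked in the argument and can be dropped; and your justification that $I(Y_2;U_2^*|Y_1,Q^*)\le I(Y_2;U_2^*|U_1^*,Q^*)$ ``follows from $U_1^*$ being a function of $(Y_1,Q^*)$'' is slightly off since $U_1^*$ is a \emph{stochastic} function of $(Y_1,Q^*)$ --- the inequality is correct but follows from the data-processing inequality applied to the chain $U_1^*\leftrightarrow Y_1\leftrightarrow U_2^*$ conditioned on $Q^*$ (together with $U_2^*\perp U_1^*\mid Y_2,Q^*$), not from a deterministic-function argument.
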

\begin{proof}
First, note that the infimum is always attained since $\mathcal{P}(R_1,R_2)$ is compact and the objective function is continuous on $\mathcal{P}(R_1,R_2)$.  Therefore, let $U_1^*,U_2^*,Q^*$ correspond to the auxiliary random variables which attain the infimum.

If $H(Y_1|U_1^*,U_2^*,Q^*)+H(Y_2|U_1^*,U_2^*,Q^*)=0$, then we must have \linebreak $I(Y_1,Y_2;U_1^*,U_2^*|Q^*)=H(Y_1,Y_2)$.  Thus, $R_1+R_2 = I(Y_1,Y_2;U_1^*,U_2^*|Q^*)$.

Next, consider the case where $H(Y_1|U_1^*,U_2^*,Q^*)+H(Y_2|U_1^*,U_2^*,Q^*)>0$.  Assume for sake of contradiction that $R_1+R_2 > I(Y_1,Y_2;U_1^*,U_2^*|Q^*)$.  For any $p\in\mathcal{P}(R_1,R_2)$:
\begin{align*}
I(Y_1;U_1|U_2,Q) + I(Y_2;U_2|U_1,Q) \leq  I(Y_1,Y_2;U_1,U_2|Q).
\end{align*}
Hence, at most one of the remaining rate constraints can be satisfied with equality.  If none of the rate constraints are satisfied with equality, then define
\begin{align*}
(\tilde{U}_1,\tilde{U}_2)=\left\{ 
\begin{array}{ll}
(U_1^*,U_2^*) & \mbox{with probability $1-\epsilon$}\\
(Y_1,Y_2) & \mbox{with probability $\epsilon$.}
\end{array}
\right.
\end{align*}
For $\epsilon>0$ sufficiently small, the distribution $\tilde{p}$ corresponding to the auxiliary random variables $\tilde{U}_1,\tilde{U}_2,Q^*$ is still in $\mathcal{P}(R_1,R_2)$.  However, $\tilde{p}$ satisfies
\begin{align*}
H(Y_1|\tilde{U}_1,\tilde{U}_2,Q^*)+H(Y_2|\tilde{U}_1,\tilde{U}_2,Q^*)<H(Y_1|U_1^*,U_2^*,Q^*)+H(Y_2|U_1^*,U_2^*,Q^*),
\end{align*}
which contradicts the optimality of $p^*$.

Therefore, assume without loss of generality that 
\begin{align*}
R_1 &= I(Y_1;U_1^*|U_2^*,Q^*) \\
R_1+R_2 &> I(Y_1,Y_2;U_1^*,U_2^*|Q^*).
\end{align*}
This implies that $R_2>I(Y_2;U_2^*|Q^*)$.  Now, define 
\begin{align*}
\tilde{U}_2=\left\{ 
\begin{array}{ll}
U_2^* & \mbox{with probability $1-\epsilon$}\\
Y_2 & \mbox{with probability $\epsilon$.}
\end{array}
\right.
\end{align*}
Note that for $\epsilon>0$ sufficiently small:
\begin{align*}
I(Y_2;U_2^*|Q^*) &< I(Y_2;\tilde{U}_2|Q^*) < R_2\\
I(Y_1,Y_2;U_1^*,U_2^*|Q^*) &< I(Y_1,Y_2;U_1^*,\tilde{U}_2|Q^*)<R_1+R_2,
\end{align*}
and for any $\epsilon\in[0,1]$:
\begin{align}
R_1=I(Y_1;U_1^*|U_2^*,Q^*) &\geq I(Y_1;U_1^*|\tilde{U}_2,Q^*)\notag \\
H(Y_1|U_1^*,U_2^*,Q^*)+H(Y_2|U_1^*,U_2^*,Q^*) &\geq H(Y_1|U_1^*,\tilde{U}_2,Q^*)+H(Y_2|U_1^*,\tilde{U}_2,Q^*). \label{eqn:objNonInc}
\end{align}
Since $R_2\leq H(Y_2)$, as $\epsilon$ is increased from $0$ to $1$, at least one of the following must occur:
\begin{enumerate}
\item $I(Y_2;\tilde{U}_2|Q^*) = R_2$.
\item $I(Y_1,Y_2;U_1^*,\tilde{U}_2|Q^*)=R_1+R_2$.
\item $I(Y_1;U_1|\tilde{U}_2,Q^*) < R_1$.
\end{enumerate}
If either of events 1 or 2 occur first then the sum-rate constraint is met with equality (since they are equivalent in this case).  If event 3 occurs first, then all rate constraints are satisfied with strict inequality and we can apply the above argument to contradict optimality of $p^*$.  Since \eqref{eqn:objNonInc} shows that the objective is nonincreasing in $\epsilon$, there must exist a $\tilde{p}\in\mathcal{P}(R_1,R_2)$ which attains the infimum and satisfies the sum-rate constraint with equality.  
\end{proof}

\bibliographystyle{ieeetr}
\bibliography{mybib}

\end{document}